\newtheorem{Proposition}{Proposition}
\newtheorem{Corollary}{Corollary}
\newtheorem{Theorem}{Theorem}
\newtheorem{Lemma}{Lemma}
\newtheorem{Example}{Example}
\newtheorem{Assumption}{Assumption}
\DeclareMathOperator{\re}{\mathbb{R}}
\DeclareMathOperator{\na}{\mathbb{N}}
\newcommand{\E}{\mathbb{E}}
\def\proba{\mathbb{P}}
\newcommand{\ind}{\mathds{1}}
\newcommand{\ee}{\mathrm{e}}
\newcommand{\Prob}{\mathbb{P}}
\newcommand{\neigh}{\mathbf{N}}
\newcommand{\ulambda}{\underline{\lambda}}
\newcommand{\Xset}{\bm{\mathcal{X}}}
\newcommand{\Xalg}{\bm{\mathsf{X}}}
\def\tXset{\tilde{\Xset}}
\newcommand{\itXset}{\tXset^{\circ}}
\def\bx{\mathbf{x}}
\def\bbx{\bar{\mathbf{x}}}
\def\bX{\mathbf{X}}
\def\bY{\mathbf{Y}}
\def\tbX{\tilde{\bX}}
\def\tbY{\tilde{\bY}}
\def\tlambda{\tilde\lambda}
\def\by{\mathbf{y}}
\def\bby{\bar{\mathbf{y}}}
\newcommand{\R}{\bm{\mathcal{R}}}
\newcommand{\nset}{\mathbb{N}}
\newcommand{\rset}{\mathbb{R}}
\newcommand{\1}{\mathds{1}}
\def\bz{\mathbf{z}}
\def\tpi{\tilde{\pi}}
\def\tP{\tilde{P}}
\def\cov{\mathbb{C}\mathrm{ov}}
\def\rev{\mathrm{rev.}}
\def\MH{\mathrm{MH}}
\def\vara{\mathrm{var}}
\def\var{\mathbb{V}\mathrm{ar}}
\def\d{\mathrm{d}}
\newcommand{\pscalpi}[2]{\left\langle #1,#2\right\rangle}
\def\Ltwo{{\mathcal{L}^2}}
\def\Ltwoz{\mathcal{L}_{0,1}^2}
\def\Ltwozstar{\mathcal{L}_{0,1}^{2, *}}
\def\esp{\mathbb{E}}
\newcommand{\pscal}[2]{\langle\,#1,#2\,\rangle}
\def\bpi{\bar{\pi}}
\def\tv{\mathrm{tv}}
\def\eps{\epsilon}
\def\bdelta{\bar{\delta}}
\def\btheta{\boldsymbol\theta}
\def\tneigh{\tilde{\neigh}}
\begin{document}

\def\figureautorefname{Figure}
\def\algorithmautorefname{Algorithm}
\def\sectionautorefname{Section}
\def\subsectionautorefname{Section}
\def\subsubsectionautorefname{Section}
\def\Propositionautorefname{Proposition}
\def\Theoremautorefname{Theorem}
\def\Lemmaautorefname{Lemma}
\def\Corollaryautorefname{Corollary}
\def\Exampleautorefname{Example}
\def\Remarkautorefname{Remark}
\def\Assumptionautorefname{Assumption}

\begin{frontmatter}
\title{An asymptotic Peskun ordering and its application to lifted samplers}
\runtitle{An asymptotic Peskun ordering and its application to lifted samplers}

\begin{aug}
\author[A]{\fnms{Philippe} \snm{Gagnon}\ead[label=e1,mark]{philippe.gagnon.3@umontreal.ca}}
\author[A]{\fnms{Florian} \snm{Maire}\ead[label=e2,mark]{florian.maire@umontreal.ca}}
\address[A]{Department of Mathematics and Statistics, Universit\'{e} de Montr\'{e}al, \printead{e1,e2}}
\end{aug}

\begin{abstract}
A Peskun ordering between two samplers, implying a dominance of one over the other, is known among the Markov chain Monte Carlo community for being a remarkably strong result. It is however also known for being a result that is notably difficult to establish. Indeed, one has to prove that the probability to reach a state $\by$ from a state $\bx$, using a sampler, is greater than or equal to the probability using the other sampler, and this must hold for all pairs $(\bx, \by)$ such that $\bx \neq \by$. We provide in this paper a weaker version that does not require an inequality between the probabilities for all these states: essentially, the dominance holds asymptotically, as a varying parameter grows without bound, as long as the states for which the probabilities are greater than or equal to belong to a mass-concentrating set. The weak ordering turns out to be useful to compare \textit{lifted} samplers for \textit{partially-ordered} discrete state-spaces with their Metropolis--Hastings counterparts. An analysis in great generality yields a qualitative conclusion: they asymptotically perform better in certain situations (and we are able to identify them), but not necessarily in others (and the reasons why are made clear). A quantitative study in a specific context of graphical-model simulation is also conducted.
\end{abstract}

\begin{keyword}[class=MSC2020]
\kwd[primary ]{62-08}
\kwd[; secondary ]{62F15}
\end{keyword}

\begin{keyword}
\kwd{Bayesian statistics}
\kwd{binary random variables}
\kwd{Ising model}
\kwd{Markov chain Monte Carlo methods}
\kwd{variable selection}
\end{keyword}

\end{frontmatter}

\section{Introduction}\label{sec:intro}

\subsection{Peskun ordering: context, original version and some variants}\label{sec:Peskun_and_variants}

Let us consider the situation where one is interested in sampling from $\pi$, a probability distribution defined on a measurable space $(\Xset, \Xalg)$, with $\Xset$ finite and assumed to correspond to the support of $\pi$, and $\Xalg$ a sigma-algebra on $\Xset$. In a sampling context, $\pi$ is often referred to as the \textit{target distribution}. Let us consider that, to sample from $\pi$, one has access to two Markov chain Monte Carlo (MCMC) algorithms and wonders which one is best. Establishing a Peskun ordering \citep{peskun1973optimum} is possibly the most sought-after route when one wants to prove that a given MCMC algorithm is superior in terms of statistical efficiency to another. The statistical efficiency is measured in terms of asymptotic variances: for any Markov kernel $P$ acting on $(\Xset,\Xalg)$ and for any $f:\Xset\to\rset$, we denote by $\vara(f, P)$ the asymptotic variance in a central limit theorem for a MCMC estimator of $\pi f$, the expectation of $f(\bX)$ under $\bX\sim\pi$. In this paper, all considered Markov kernels are assumed to be irreducible and aperiodic, so that the associated samplers are valid\footnote{By valid, we mean that a law of large numbers and a central limit theorem hold for time-averages of functionals of Markov chains.}. The original ordering is presented in \autoref{thm:peskun}.

\begin{Theorem}[\citealp{peskun1973optimum}]\label{thm:peskun}
 Let $P_1$ and $P_2$ be two Markov kernels that are reversible with respect to $\pi$. If $P_1(\bx, \by) \geq P_2(\bx, \by)$ for all $(\bx, \by) \in \Xset^2$ with $\bx \neq \by$, then $\vara(f, P_1) \leq \vara(f, P_2)$ for all $f : \Xset \to \rset$.
\end{Theorem}

The strength of this result lies in its universality: the order between the asymptotic variances holds for \emph{all} functions $f$, which explains why we say that a sampler associated with $P_1$ is superior to a sampler associated with $P_2$, for the problem at hand. This ordering is however known to be rather challenging to establish. It is indeed only in specific situations that one can establish that the probability to reach $\by$ from $\bx$ with $P_1$ is greater than or equal to that with $P_2$, and this for all $(\bx, \by) \in \Xset^2$ with $\bx \neq \by$.

The result of \cite{peskun1973optimum} was generalized in several ways. First, \cite{tierney1998note} extended it to general state-spaces. \cite{andrieu2018uniform} then provided a quantitative form requiring that the order on the Markov kernels holds, but up to a multiplicative factor, that is $P_1(\bx, \by) \geq \omega \, P_2(\bx, \by)$ for all $(\bx, \by) \in \Xset^2$ with $\bx \neq \by$, for some $\omega > 0$, while yielding similar conclusions:
\[
 \vara(f, P_1) \leq \frac{\vara(f, P_2)}{\omega} + \frac{1 - \omega}{\omega} \, \var[f(\bX)].
\]
These results are valid for reversible Markov chains only. Recently, \cite{andrieu2019peskun} went beyond the reversible scenario. These authors consider a specific type of non-reversibility for which the chains can be seen as being ``almost'' reversible; they are reversible, up to an involution. This type of non-reversibility nevertheless covers a remarkably large number of known non-reversible MCMC algorithms, including \textit{lifted} algorithms \citep{horowitz1991generalized, gustafson1998guided, chen1999lifting, diaconis2000analysis}.

\subsection{Our proposal: a weaker and asymptotic version}\label{sec:intro_our_ordering}

With a result as strong as the original ordering, it is somewhat expected to be difficult to establish it. The main result of this paper is that a weaker version of this ordering can lead to similar, but weaker, conclusions. This weaker ordering\footnote{For brevity, we will use ``weaker ordering'' or ``weak ordering'' to refer to the proposed weaker version of Peskun's ordering. As will be seen, using such expressions is however an abuse of terminology because the binary relation defined by our ``weak ordering''  does not establish an order on the set of reversible Markov kernels in the mathematical sense.} is particularly well suited for situations where the two Markov chains of interest are well understood, but only on some subsets of the state-space. We believe that this weaker version will allow to compare samplers in situations in which it was not possible before. Indeed, we believe that the difficulty in establishing the original ordering comes from the verification of $P_1(\bx, \by) \geq P_2(\bx, \by)$ for \emph{all} $(\bx, \by) \in \Xset^2$ with $\bx \neq \by$.

Recent concepts such as \textit{approximate spectral gaps} introduced in \cite{atchade2021approximate} and \textit{large sets} proposed in \cite{yang2017complexity} have shown that bounds on the convergence time of Markov chains can be obtained by exploiting the particular behaviour of the process on some subset of the state-space. When the process is particularly efficient on such a subset, resulting bounds can be tighter than traditional ones that account for the whole state-space. We here leverage similar ideas.

Consider that an order on the probabilities $P_1(\bx, \by) \geq P_2(\bx, \by)$ can be established but only on a subset $\tXset^2 \subset \Xset^2$. It is natural to expect that if the mass concentrates on $\tXset$ and under some mixing guarantees (to guarantee that when the chains exit $\tXset$ they do not take too long to come back), then $\vara(f, P_1) \lessapprox \vara(f, P_2)$ for a class of functions $f$, where the approximation is a consequence of working under a limiting regime to represent a phenomenon of mass concentration. In the following, we prove a result essentially corresponding to that just described. We now provide an overview of a motivating application which is explored in the manuscript.

\subsection{Lifted samplers: a motivating application}\label{sec:intro_lifted}

\textit{Lifting} the state-space is a generic technique which yields what are referred to as \textit{lifted} samplers. The state-space is \textit{lifted} (i.e., extended) to incorporate auxiliary variables. The idea is to think of the random variables we want to sample as \textit{position} variables and to associate to them \textit{direction} variables, to \textit{guide} the Markov chains so as to avoid backtracking, a behaviour often exhibited by reversible schemes that is suspected to increase the autocorrelation of the process. Consider for instance that $\Xset = \{1, \ldots, K\}$, where $K$ is a positive integer. We associate to the variable $\bX$ a direction variable $\nu \in \{-1, +1\}$. A Markov chain is defined on the lifted state-space $\Xset \times \{-1, +1\}$. The lifted sampler proceeds as a Metropolis--Hastings (MH, \citep{metropolis1953equation, hastings1970monte}) algorithm in the sense that a proposal is accepted with a given probability, but in this case the proposal is deterministic and given by $\by = \bx + \nu$ when $(\bx, \nu)$ is the current state. The randomness thus comes from the decision to accept or reject the proposal; in the latter case, the direction is reversed. The lifting technique can be seen as a way to equip the resulting stochastic process with some memory of its past (the direction it comes from), while retaining the Markov property. It can be shown that the resulting Markov chains admit $\pi \otimes \mathcal{U}\{-1, 1\}$ as invariant distribution, where $\mathcal{U}\{-1, 1\}$ denotes the uniform distribution over the set $\{-1, 1\}$ and $\pi \otimes \mathcal{U}\{-1, 1\}$ is the product measure. The sampler is thus valid and expectations under $\pi$ can be approximated by considering functions $f : \Xset \times \{-1, +1\} \to \rset$ of solely the first argument.

Let $P_{\text{lifted}}$ be the Markov kernel associated to this algorithm, and let $P_{\text{MH}}$ be the Markov kernel associated to its non-lifted counterpart, which is a MH algorithm proposing  $\by = \bx + 1$ or $\by = \bx - 1$, each with probability $1/2$. Theorem 7 in \cite{andrieu2019peskun} allows to establish that $\vara(f, P_{\text{lifted}}) \leq \vara(f, P_{\text{MH}})$,  for \emph{any} $f$ of solely the first argument and \emph{any} distribution $\pi$. As Peskun's, this result is universal. It is however remarkable that it holds, not only for any $f$, but also for any $\pi$. It is also remarkable to obtain such a result given that the lifted sampler is implemented at no additional computational cost over its non-lifted counterpart, and also with no additional implementation difficulty (lifted samplers often possess these qualities). The result on the order between the asymptotic variances is essentially a consequence of having the same acceptance probabilities in both algorithms. There is thus no lost in terms of acceptance probabilities by using the lifting technique, while there is a potential gain in terms of persistent movement.

The superiority of $P_{\text{lifted}}$ over $P_{\text{MH}}$ for any $\pi$ at no additional computational cost motivates an investigation of lifted samplers for other types of discrete state-spaces, especially given the limited number (or rather the absence) of real-world models where the state-space is of the form $\Xset = \{1, \ldots, K\}$. This latter set is totally ordered; a natural first step in the investigation is thus to consider \textit{partially-ordered} discrete state-spaces. A definition of partially-ordered sets as well as a generic lifted algorithm to sample from distributions defined on such a set are presented in \autoref{sec:lifted_sampler}. Important applications of such an algorithm include simulation of systems formed from binary variables, such as those simulated using the Ising model, and Bayesian variable selection when the posterior model probabilities can be evaluated, up to a normalizing constant.

In the case of partially-ordered discrete state-spaces, Theorem 7 of \cite{andrieu2019peskun} still allows to prove the superiority of the lifted algorithm over its non-lifted counterpart, which is a reversible sampler; however in this case, the non-lifted counterpart does not correspond to the MH algorithm over which we wish to establish a superiority. This is essentially because, contrarily to the totally-ordered case, the acceptance probabilities in the MH and lifted algorithms are in general different. In certain situations, they can be quite unbalanced in some area of the state-space in the lifted algorithm, while they are not in the MH sampler. In contrast, in some other area of the state-space, the acceptance probabilities are similar. When the mass concentrates on the latter area, we explore the possibility of applying our weak ordering to compare the non-lifted counterpart and the MH algorithm to prove that the lifted sampler is superior to the MH algorithm.

\subsection{Organization of the paper}\label{sec:organization}

We now describe how the rest of the paper is organized. We introduce our asymptotic Peskun ordering in \autoref{sec:weak_peskun}. We next use this result to identify situations in which the lifted samplers for partially-ordered discrete state-spaces are expected to outperform (or not) their MH counterparts. Regarding the organization of this part, we first present the lifted samplers in \autoref{sec:lifted_sampler} and then carry out in \autoref{sec:specific_samplers} an analysis in great generality. Given that the analysis is carried out in great generality, we are not in a position to verify the assumptions under which our asymptotic ordering holds. Rather, the analysis aims to establish the results that can be obtained whenever the assumptions are verified. We next conduct in \autoref{sec:simple_Ising} a thorough study in a context of a simulation of a simple graphical model. With this simple model, we are in a good position to verify the assumptions of our theoretical result; \autoref{sec:simple_Ising} serves as a user guide for applying our asymptotic Peskun ordering. The model corresponds to a Ising model with an external field, but without spatial correlation. The target distribution thus factorizes and the components of $\bx$ are independent; the external field defines the marginal distributions. The model can be seen as an approximation to that with weak spatial correlation, referred to in the literature as a \textit{high temperature} model. We will refer to the model studied in \autoref{sec:simple_Ising} as \textit{the simple Ising model}. The main part of the manuscript finishes in \autoref{sec:discussion} with retrospective comments and possible directions for future research. In \autoref{sec:num_experiments}, we study more complex problems for which a verification of the assumptions is not possible. The first problem is about the simulation of a Ising model which is more complex than that considered in \autoref{sec:simple_Ising} (with spatial correlation). Numerical results are provided and they are consistent with the theoretical ones presented in \autoref{sec:simple_Ising}. The second problem is that of variable selection in a real-life situation. All proofs of theoretical results are deferred to \autoref{sec:proofs}. While the paper is concerned with efficient sampling of distributions defined on discrete state-spaces, we stress that numerous results and elements of our analyses translate immediately to general state-space contexts.

\section{A weaker and asymptotic version of Peskun's ordering}\label{sec:weak_peskun}

Before presenting the theoretical result, we provide the intuition behind it (while being more precise than in \autoref{sec:intro_our_ordering}). This will help justify the assumptions, allow to highlight its relevance, and in fact allow to present a sketch of the proof. Beforehand, we introduce required notation.

In all this section, we consider that the distribution of interest $\pi$ is parameterized by some $n \in \nset$, that is $\pi \equiv \pi_n$. The state-space may also be parameterized by $n$ and is thus denoted by $\Xset_n$; we assume that, for each $n \in \nset$, $\Xset_n$ is finite. We define two collections of Markov kernels, $\{P_{1, n}\}$ and $\{P_{2, n}\}$, for which $P_{1, n}$ and $P_{2, n}$ are $\pi_n$-reversible for all $n$. We define a collection of subsets $\{\tilde{\Xset}_n \subset \Xset_n\}$ which we refer to as \textit{control subsets}. We introduce two collections of restricted kernels $\{\tilde{P}_{1, n}\}$ and $\{\tilde{P}_{2, n}\}$ which, for all $n$, are defined for any $(\bx, \by) \in \tilde{\Xset}_n^2$ by
\[
\tP_{i, n}(\bx,\by) := P_{i, n}(\bx,\by) + P_{i, n}(\bx,\Xset_n \backslash \tilde{\Xset}_n) \1_{\by=\bx}, \qquad i \in \{1, 2\}.
\]
The form of states like $\bx$ and $\by$ may depend on $n$, but we make this dependence implicit to simplify. We let $\tpi_n$ be the probability measure defined as $\tpi_n:=\pi_n(\,\cdot\,\cap \tXset_n)/\pi_n(\tXset_n)$. It can be readily checked that  $\tP_{1, n}$ and $\tP_{2, n}$ are both $\tpi_n$-reversible, for all $n$. We define what we call (with some abuse of terminology) the \emph{interior} and the \emph{boundary} of $\tXset_n$ as $\itXset_n := \{\bx \in \tXset_n: P_{i,n}(\bx, \tXset_n^\mathsf{c}) = 0\}$ and $\partial \tXset_n := \tXset_n \setminus \itXset_n$, respectively, where we assume that the definition of $\itXset_n$ is the same for $i = 1, 2$. The functions for which we want to approximate the expectations may also depend on $n$ and are thus denoted by $f_n$. The $\pi_n$-weighted scalar product and $p$-norm are defined as $\pscal{f_n}{g_n}_{\pi_n} := \sum_{\bx\in\Xset}f_n(\bx) \, g_n(\bx) \, \pi_n(\bx)$ and $\|f_n\|_{\pi_n, p} := [\sum_{\bx\in\Xset} |f_n(\bx)|^p \pi_n(\bx)]^{1/p}$, respectively, with $\|f_n\|_{\pi_n}$ for the $2$-norm. In this section, we consider that the functions are standardized, meaning that $f_n \in \Ltwoz(\pi_n)$, where $\Ltwoz(\pi_n) := \{f_n: \pi_n f_n = 0 \text{ and } \|f_n\|_{\pi_n} = 1\}$. This should not be seen as a restriction given that the magnitude of asymptotic variances, which is proportional to $\|f_n\|_{\pi_n}^2$, is irrelevant when it is of interest to establish an order among them. We note that since for each $n$, $\Xset_n$ is finite, $P_{1, n}$ and $P_{2, n}$ admit a non-trivial \emph{right spectral gap} in $\Ltwoz(\pi_n)$, whose variational expression is given by
\begin{equation}\label{spec_gap}
\lambda_i(n)
:= \inf_{f_n\in\Ltwoz(\pi_n):\|f_n\|_{\pi_n} > 0}\frac{\pscal{f_n}{(I_n - P_{i, n}) f_n}_{\pi_n}}{\|f\|_{\pi_n}^2}\,,\quad i \in \{1, 2\},
\end{equation}
where $I_n$ is the identity on $\Ltwoz(\pi_n)$. In particular, it can be proved that $\lambda_i(n)\in(0,2)$. We analogously define the right spectral gaps of $\tP_{i, n}$ and denote them by $\tlambda_i(n)$, $i = 1, 2$, and we define $\ulambda(n) := \min\{\lambda_1(n), \lambda_2(n), \tlambda_1(n), \tlambda_2(n)\}$. In the following, we refer to the \emph{right spectral gap} of a kernel simply as the \emph{spectral gap} to simplify. Finally, we will use $o$ for the little-o notation.

Consider that one wants to establish a Peskun-type ordering between two kernels, but one is only able to establish a (suitable) order on the kernels on a subset of the state-space in the following sense: $P_{1, n}(\mathbf{x}, \mathbf{y}) \geq \omega(n) P_{2, n}(\mathbf{x}, \mathbf{y})$ for all $(\bx,\by) \in \tXset_n^2$ with $\mathbf{x} \neq \mathbf{y}$ where $\omega(n)$ is a (suitable) positive constant which may depend on $n$. This ordering implies that $\tP_{1, n}(\mathbf{x}, \mathbf{y}) \geq \omega(n) \tP_{2, n}(\mathbf{x}, \mathbf{y})$ for all $(\bx,\by) \in \tXset_n^2$ with $\mathbf{x} \neq \mathbf{y}$, which in turn implies that
\begin{align}\label{eqn:Peskun_quan}
\vara(f_n, \tP_{1, n}) \leq \frac{1}{\omega(n)} \vara(f_n, \tP_{2, n}) + \frac{1}{\omega(n)} - 1,
\end{align}
by, as mentioned in \autoref{sec:Peskun_and_variants}, \cite{andrieu2018uniform} (Lemma 33).

Let us consider that $\pi_n$ concentrates on $\itXset_n$. The notion of concentration of $\pi_n$ naturally implies that we are interested by a certain asymptotic regime, which justifies that we consider a limit $n \rightarrow \infty$. Under this regime, $\pi_n(\itXset_n)\rightarrow 1$, implying that $\pi_n(\tXset_n) \rightarrow 1$. One can imagine that, if the Markov chains associated with $P_{1, n}$ and $P_{2, n}$ do not behave ``too badly'' outside of $\tXset_n$, meaning that when they reach the complement $\tXset_n^\mathsf{c}$ they do not stay there for ``too long'', then $\vara(f_n, \tP_{1, n})$ and $\vara(f_n, \tP_{2, n})$ should be similar to $\vara(f_n, P_{1, n})$ and $\vara(f_n, P_{2, n})$. This is what we show in order to prove our theoretical result. In fact, if we think of $P_{1, n}, P_{2, n}, \tP_{1, n}$ and $\tP_{2, n}$ as samplers, it is seen in the proof that in order to establish a connection between the asymptotic variances, it simplifies to assume that the performance of the worst of these samplers, measured through $\ulambda(n)$, is not ``too poor'',  which is a stronger assumption than a performance assumption on $P_{1, n}$ and $P_{2, n}$ only. Under these assumptions, we are able to establish that $\vara(f_n, P_{i, n})$ is equal to $\vara(f_n, \tP_{i, n})$, up to an error term that depends on $n$ and that vanishes in the large $n$ regime, $i \in \{1, 2\}$, which essentially yields our result. The concentration assumption is reasonable given that in practice the mass often concentrates on a subset of the state-space. This is especially true in high dimensions or when the sample size is large in Bayesian statistics contexts (see, e.g., \cite{VanderVaart2000} and \cite{kleijn2012}).

In light of the above, it is understood that three assumptions are required: the order on the kernels on the control subset, the concentration of $\pi_n$ and a performance guarantee on the samplers. We now state formally the first two assumptions and then present a simplified version of the theoretical result with a strong performance guarantee. We next present a more general version. To simplify the results, yet keeping the focus on most important cases, we consider in the following that $\omega(n) \leq 1$, meaning that we exclude cases where $P_{1, n}$ is overly dominant on $\tXset_n$.


\begin{Assumption}[Kernel ordering]\label{ass:1}
    For each $n$, $P_{1, n}(\mathbf{x}, \mathbf{y}) \geq \omega(n) P_{2, n}(\mathbf{x}, \mathbf{y})$ for all $(\bx,\by) \in \tXset_n^2$ with $\mathbf{x} \neq \mathbf{y}$, where $\omega(n)$ admits a limit, that is $\lim_{n \rightarrow \infty} \omega(n)  =: \overline{\omega} > 0$.
\end{Assumption}

\begin{Assumption}[Mass concentration]\label{ass:2}
    The mass concentrates on $\itXset_n$: $\lim_{n \rightarrow \infty} \pi_n(\itXset_n)  = 1$.
\end{Assumption}

Given that together Assumptions \ref{ass:1} and \ref{ass:2} correspond to the assumptions of a classic Peskun ordering as in \cite{andrieu2018uniform} in the limit, one can only hope to establish, under Assumptions \ref{ass:1} and \ref{ass:2}, a version of this ordering that holds in some limiting sense.

\begin{Theorem}[A simple asymptotic Peskun ordering]\label{thm:1} Suppose that Assumptions \ref{ass:1} and \ref{ass:2} hold. Assume that the spectral gaps of $P_{i, n}$ and $\tP_{i, n}$ are bounded away from zero for all $n$, $i =1, 2$.
Assume also that the sequence $\{f_n\}$ is such that $f_n \in \Ltwoz(\pi_n)$ for all $n$ and such that there exist $\delta > 0$ and $\gamma \in (0, \delta / (2 + \delta))$ with
\begin{equation}\label{eq:ass_thm1}
 \|f_n\|_{\pi_n,2+\delta}=o\left(\frac{1}{\left(1-\pi_n(\itXset_n)\right)^{\gamma}}\right).
\end{equation}
Then, for any $\eps \in(0, \overline{\omega})$, there exists $n^\ast \in \nset$, such that for any $n > n^\ast$
\begin{equation*}
 \vara(f_n, P_{1,n}) \leq \frac{1}{\overline{\omega}-\eps}\vara(f_n, P_{2,n}) +\frac{1}{2}\left(\frac{1}{\overline{\omega} - \eps} + \frac{1}{\overline{\omega}}\right) - 1  + \frac{\eps}{2}.
\end{equation*}
\end{Theorem}

We now make a few remarks about \autoref{thm:1}. It allows to retrieve \eqref{eqn:Peskun_quan} in the limit with $\eps \rightarrow 0$ and $\omega(n) \rightarrow \overline{\omega}$. \autoref{thm:1} will be seen to be a special case of the next one in which the spectral gaps are allowed to decrease with $n$, which is usually the case when $n$ is the dimension of the state-space. As mentioned, considering that the spectral gaps are bounded away from zero simplifies the assumptions, at the price of requiring a strong performance guarantee.

In addition to the three assumptions mentioned earlier, another one is made in \eqref{eq:ass_thm1}. This assumption  essentially states that the class of functions that satisfies \eqref{eq:ass_thm1} have a $(2 + \delta)$-norm that is allowed to grow with $n$, but not faster (in fact slightly slower) than $1 / (1 - \pi_n(\itXset_n))$. It is thus not all sequences $\{f_n\}$ that are admissible. It could be tempting to consider a collection of large subsets $\tXset_n$ to encourage a fast concentration of $\pi_n$ on these sets, thus allowing for a large class of admissible sequences of functions in \autoref{thm:1}; however, the larger are the subsets, the more difficult it becomes to obtain a suitable order on the kernels (\autoref{ass:1}).

Different values of the limit of $\omega(n)$, that is $\overline{\omega}$, yield different interpretations of the result. The most important case is when $\overline{\omega} = 1$ for which we can state that the sampler associated with $P_{1, n}$ \textit{asymptotically dominates} that associated with $P_{2, n}$ (for the functions that are admissible). When $\overline{\omega} < 1$, \autoref{thm:1} allows to state that $P_{1, n}$ is \textit{asymptotically comparable} to $P_{2, n}$, in the sense that we have a guarantee that the sampler associated with $P_{1, n}$ will asymptotically produce estimators with variances that are at worst roughly $1 / \overline{\omega}$ larger than the sampler associated with $P_{2, n}$ (again for the functions that are admissible).

We now present the general asymptotic Peskun ordering.

\begin{Theorem}[A general asymptotic Peskun ordering]\label{thm:2} Suppose that \autoref{ass:1} holds. Consider a sequence $\{f_n\}$ such that $f_n \in \Ltwoz(\pi_n)$ for all $n$. Assume that there exist $\delta > 0$ and $\gamma \in (0, \delta / (2 + \delta))$ that satisfy
 \begin{equation}\label{eq:ass1_thm3}
 \|f_n\|_{\pi_n,2+\delta}=o\left(\frac{1}{\left(1-\pi_n(\itXset_n)\right)^{\gamma}}\right),
\end{equation}
and
\begin{equation}\label{eq:ass2_thm3}
 1 - \pi_n(\itXset_n) = o\left(\ulambda(n)^{3 / (\bdelta - \gamma)}\right),
\end{equation}
where $\bdelta := \delta / (2 + \delta)$. Then, for any $\eps \in (0, \overline{\omega})$, there exists $n^\ast \in \nset$, such that for any $n > n^\ast$
\begin{equation*}
 \vara(f_n, P_{1,n}) \leq \frac{1}{\overline{\omega}-\eps}\vara(f_n, P_{2,n}) +\frac{1}{2}\left(\frac{1}{\overline{\omega} - \eps} + \frac{1}{\overline{\omega}}\right) - 1  + \frac{\eps}{2}.
\end{equation*}
\end{Theorem}

We see that the difference between \autoref{thm:2} and \autoref{thm:1} is that \autoref{ass:2} is replaced by \eqref{eq:ass2_thm3}, an assumption connecting $\pi_n(\itXset_n)$ to $\ulambda(n)$, where the latter is now allowed to decrease. After having selected a sequence $\{f_n\}$ and then $\delta$ and $\gamma$ that satisfy \eqref{eq:ass1_thm3} (which is equivalent to \eqref{eq:ass_thm1} in \autoref{thm:1}), one has to verify that the choice of $\delta$ and $\gamma$ also allows to verify \eqref{eq:ass2_thm3}. This equation states that the concentration of $\pi_n$ on $\itXset_n$ has to be faster than $\ulambda(n)^{3 / (\bdelta - \gamma)}$. Note that when the spectral gaps are bounded away from zero, \eqref{eq:ass2_thm3} is equivalent to \autoref{ass:2}, showing that \autoref{thm:1} is indeed a special case of \autoref{thm:2}.

We acknowledge the fact that estimating certain rates appearing in the conditions of Theorems \ref{thm:1} and \ref{thm:2}, especially the rates of spectral quantities, may constitute a problem in itself. We also acknowledge that our sets of assumptions are probably not optimal, but rather a consequence of our proof technique, and may possibly be improved. However, as mentioned, it is understood that the important aspects (the order on the kernels on the control subset, the mass concentration and performance guarantees) together represent necessary conditions. Given the importance of Peskun-type orderings, we believe it is scientifically interesting to understand under which conditions we can establish a result on the asymptotic variances when an order between $P_{1, n}$ and $P_{2, n}$ holds only on a subset of the state-space.

One may be tempted to assume a (non-trivial) relationship between $\tlambda_i(n)$ and $\lambda_i(n)$ given that $\tP_{i, n}$ is a restriction of $P_{i, n}$ on a subset of the state-space $\Xset_n$. It turns out that counterexamples show that it is not possible to obtain an interesting result in the general case. In regular sampling contexts, we expect the rates at which $\tlambda_i(n)$ and $\lambda_i(n)$ decrease to be in the same regime (i.e., both exponential, both polynomial, etc.). For instance, our analysis in a specific context of graphical-model simulation in \autoref{sec:simple_Ising} shows that the decay is polynomial for $\tlambda_i(n)$ and $\lambda_i(n)$, $i = 1, 2$. The analysis also shows that we can select $\{\tXset_n\}$ such that the mass concentrates on $\itXset_n$ exponentially quickly, implying that \autoref{thm:2} applies, provided $\|f_n\|_{\pi_n,2+\delta}$ does not grow too rapidly.

\section{Lifted samplers for partially-ordered discrete state-spaces}\label{sec:lifted_sampler}

In this section, we start by providing a definition of partially-ordered state-spaces in \autoref{sec:posets}. We next present in \autoref{sec:algo} a generic lifted MCMC algorithm for sampling from distributions on partially-ordered discrete sets. In that section, we make another contribution: we make clear that the implementation of lifted samplers for discrete state-spaces is straightforward, as long as a partial order can be established. We put in contrast this contribution with some of other authors by reviewing the literature about sampling on discrete state-spaces in \autoref{sec:literature}. Note that, in order to match the classical MCMC framework, we consider in this section the target distribution, state-space, and so on, to be fixed, and will thus denote them without a subscript to simplify.

\subsection{Partially-ordered state-spaces}\label{sec:posets}

In set theory, a partial order on a set $\Xset$ is a binary relation defined through a set $\R\subset\Xset^2$  which is reflexive, anti-symmetric, and transitive. A set $\Xset$ on which a partial order can be defined, is called \textit{partially ordered}. For such a set, pairs $(\bx,\by)\in\Xset^2$ with $\bx\neq \by$ are \textit{comparable} when either $(\bx,\by)\in\R$ or $(\by,\bx)\in\R$ and are said \textit{incomparable} otherwise. This represents the difference with a totally-ordered set such as $\na$ or $\re$  in which every pair of different elements is comparable. We denote $\mathbf{x} \prec \mathbf{y}$ whenever $(\bx,\by)\in\R$ and $\bx\neq \by$, implying that $\bx$ and $\by$ are comparable. Of course, this is not the only way to have comparable $\bx$ and $\by$ as we can instead have $\mathbf{y} \prec \mathbf{x}$, that is  $(\by,\bx)\in\R$ and $\bx\neq \by$.

An important example of such sets is when any $\mathbf{x} \in \Xset$ can be written as a vector $\mathbf{x} = (x_1, \ldots, x_n)$ for which each component $x_i$ can be of two types, say Type A or Type B, denoted by $x_i \in \{\text{A}, \text{B}\}$. In this case, an inclusion-based partial order on $\Xset$ can be defined through
\begin{equation}
\label{eq:partial_order}
\R = \left\{(\mathbf{x}, \mathbf{y}) \in \Xset \times \Xset : \{i\,:\,x_i=\text{A}\} \subset
\{i\,:\,y_i=\text{A}\}\right\}.
\end{equation}
It can be readily checked that $\R $ is reflexive, anti-symmetric and transitive.  Moreover, defining  $n_{\text{A}}(\mathbf{x})$ to be the number of Type A components in $\mathbf{x}$, i.e.\ $n_{\text{A}}(\mathbf{x}) = \sum_{i = 1}^n \ind_{x_i = \text{A}}$, we have that a pair $(\bx,\by)\in\Xset^2$ such that $\bx\neq \by$ and $n_A(\bx)=n_A(\by)$ is incomparable.

Partially-ordered sets are encountered in many important areas of statistics including the modelling of binary data using networks or graphs and in variable selection. Indeed, for the former, $\Xset$ can be parameterized such that $\Xset=\{-1,+1\}^n$, where for example for an Ising model, $x_i \in \{-1, +1\}$ represents the state of a spin. For variable selection, $\Xset=\{0,1\}^n$ and $x_i\in\{0,1\}$ indicates whether or not the $i$-th covariate is included in the model employed.

\subsection{Generic algorithm}\label{sec:algo}

Let us assume that a neighbourhood structure $\{\neigh(\mathbf{x})\,:\,\bx\in\Xset\}$ and a partial order $\R$ have been specified on $\Xset$. The sampler that we present is a MCMC algorithm that relies on the lifting technique. The state-space is thus extended: we add a direction variable $\nu \in \{-1, +1\}$ to which we assign a uniform distribution $\mathcal{U}\{-1, +1\}$. The target distribution becomes $\pi \otimes \mathcal{U}\{-1, +1\}$. The idea is to generate proposals belonging to a specific subset of the neighbourhood $\neigh(\mathbf{x})$, where the subset is defined through $\R$ and chosen according to the direction $\nu$, when the current state of the chain is $(\mathbf{x}, \nu)$. In particular, the proposal belongs to $\neigh_{+1}(\mathbf{x}) := \{\mathbf{y} \in \neigh(\mathbf{x}) : \mathbf{x} \prec \mathbf{y}\} \subset \neigh(\mathbf{x})$ when the current state of the direction variable is $\nu =+1$ and to $\neigh_{-1}(\mathbf{x}) := \{\mathbf{y} \in \neigh(\mathbf{x}) : \mathbf{y} \prec \mathbf{x}\} \subset \neigh(\mathbf{x})$ when $\nu = -1$.  The partial order is thus used to induce directions to follow in the state-space. We assume that $\neigh(\bx)$ is formed only of states that are comparable to $\bx$ so that $\neigh_{-1}(\mathbf{x})\cup\neigh_{+1}(\mathbf{x})=\neigh(\mathbf{x})$. Note that $\neigh_{-1}(\mathbf{x})\cap\neigh_{+1}(\mathbf{x})=\varnothing$. The underlying assumption $\bx\not\in \neigh(\bx)$ implies that, strictly speaking, $\neigh$ is not a neighbourhood in a topological sense. We nevertheless carry on with this abuse of terminology.

Recently, successful applications of the lifting technique have been carried out in contexts where the state-space exhibits a one-dimensional discrete parameter which plays a central role in the sampling scheme: the temperature variable in simulated tempering \citep{sakai2016irreversible} and in parallel tempering \citep{syed2019non}, and the model indicator in selection of nested models \citep{gagnon2019NRJ}. When such a one-dimensional feature does not exist, there is no straightforward way of lifting the state-space and inducing directions without facing issues of reducibility or the risk of obtaining inefficient samplers. Leveraging what can be regarded as a directional neighbourhood structure induced by the partial order on $\Xset$ allows to break free from the requirement of resorting to an existing one-dimensional parameter to guide the chain.

In what follows, for each $(\mathbf{x},\nu)\in\Xset\times\{-1, +1\}$, $\neigh_{\nu}(\mathbf{x}) $ shall be referred to as the \textit{$\nu$-directional neighbourhood} of state $\mathbf{x}$. The proposal distribution, denoted by $q_{\mathbf{x}, \nu}$, where $(\mathbf{x}, \nu)$ represents the current state of the Markov chain, is assumed to have its support restricted to $\neigh_\nu(\mathbf{x})$. It will be noticed that the implementation of the generic algorithm is straightforward provided that a partial ordering has been established. Indeed, the required inputs are:
\begin{itemize}
 \item[(i)] a neighbourhood structure $\{\neigh(\mathbf{x}): \mathbf{x}\in\Xset\}$,
 \item[(ii)] a partial ordering $\R$ on $\Xset$,
 \item[(iii)] proposal distributions $q_{\mathbf{x}, \nu}$,
\end{itemize}
and there exist natural candidates for the proposal distributions, as will be explained in \autoref{sec:literature} and, in most cases, for the neighbourhood structure as well.

The MCMC algorithm, which bares a strong resemblance with the guided walk \citep{gustafson1998guided}, is presented in \autoref{algo1}. We use $x \wedge y$ to denote $\min\{x, y\}$. In \autoref{sec:trans-dimensional}, we consider that $\Xset$ is a model space and propose a trans-dimensional version of \autoref{algo1} that can be used for, among others, variable selection when it is not possible to integrate out the model parameters.

\begin{algorithm}[ht]
\caption{A lifted sampler for partially-ordered discrete state-spaces} \label{algo1}

 \begin{itemize}

  \item[1.] Generate $\mathbf{y} \sim q_{\mathbf{x}, \nu}$ and $u \sim \mathcal{U}\mathcal[0, 1]$.

  \item[2.] If
  \begin{align}\label{eq_acc_prob}
   u \leq \alpha_\nu(\mathbf{x}, \mathbf{y}) := 1 \wedge \frac{\pi(\mathbf{y}) \, q_{\mathbf{y}, -\nu}(\mathbf{x})}{\pi(\mathbf{x}) \, q_{\mathbf{x}, \nu}(\mathbf{y})},
  \end{align}
  set the next state of the chain to $(\mathbf{y}, \nu)$. Otherwise, set it to $(\mathbf{x}, -\nu)$.

  \item[3.] Go to Step 1.

 \end{itemize}

\end{algorithm}

Given that $\Xset$ is finite, there exists a boundary, in the sense that, for some $(\bx, \nu)$, $\neigh_{\nu}(\bx)$ is the empty set and there is thus no mass beyond state $\mathbf{x}$ when the direction followed is $\nu$. This is for instance the case in the context of variable selection when $\mathbf{x} = (1, \ldots, 1)$, meaning that the current model is the full model, and the direction is $\nu = +1$. \autoref{algo1} may thus seem incomplete: it does not explicitly specify how the algorithm behaves on the boundary. We can consider that for any $\mathbf{x}\in\Xset$ on the boundary, the support of $q_{\mathbf{x}, \nu}$ is not $\neigh_{\nu}(\bx)$ (because this is the empty set), but instead given by a fictive state outside $\Xset$. Given that the support of $\pi$ is $\Xset$, then any state outside $\Xset$ has zero mass under $\pi$ and such a fictive state is automatically rejected at Step 2. As a consequence, when such a state is proposed, the chain remains at $\mathbf{x}$ and the direction is reversed.  Note that this is a technical requirement. In practice, one can simply skip Step 1 when $\mathbf{x}$ is on the boundary and directly set the next state to $(\mathbf{x}, -\nu)$.


 It is possible to establish that the Markov chain defined by  \autoref{algo1} is $\pi \otimes \mathcal{U}\{-1, +1\}$-invariant by casting it into a more general algorithm framework presented in \cite{andrieu2019peskun}. We present below the associated generalization of \autoref{algo1} which has interesting theoretical features. Beforehand, we introduce necessary notation. Let $\rho_\nu: \Xset \to [0,1]$, for $\nu \in \{-1, +1\}$, be a user-defined function for which we require that  for all $(\mathbf{x}, \nu) \in \Xset \times \{-1, +1\}$:
 \begin{eqnarray}
\label{eq:cdt1_rho}
   && 0\leq \rho_\nu(\mathbf{x})\leq 1 - T_\nu(\mathbf{x}, \Xset), \\
\label{eq:cdt2_rho}  && \rho_{\nu}(\mathbf{x}) - \rho_{-\nu}(\mathbf{x}) = T_{-\nu}(\mathbf{x}, \Xset) - T_{\nu}(\mathbf{x}, \Xset),
\end{eqnarray}
where, for all $(\mathbf{x}, \nu) \in \Xset \times \{-1, +1\}$,
\begin{align*}
 T_\nu(\mathbf{x}, \Xset) := \sum_{\mathbf{x}' \in \Xset} q_{\mathbf{x},\nu}(\mathbf{x}') \, \alpha_\nu(\mathbf{x},\mathbf{x}') = \sum_{\mathbf{x}' \in \neigh_{\nu}(\mathbf{x})} q_{\mathbf{x},\nu}(\mathbf{x}') \, \alpha_\nu(\mathbf{x},\mathbf{x}').
\end{align*}
These conditions  are considered satisfied in the sequel as they guarantee, as established in \autoref{prop_invariance} below, that the Markov chain $\{(\bX,\nu)_k\}$ is $\pi \otimes \mathcal{U}\{-1, +1\}$-invariant and thus that the marginal process $\{\bX_k\}$ is $\pi$-invariant. Let $Q_{\mathbf{x}, \nu}$ be the probability mass function (PMF) defined through $Q_{\mathbf{x}, \nu}(\mathbf{x}') \propto q_{\mathbf{x}, \nu}(\mathbf{x}') \, \alpha_\nu(\mathbf{x}, \mathbf{x}')$. The generalization of \autoref{algo1} is presented in \autoref{algo2}.

\begin{algorithm}[ht]
\caption{A generalization of \autoref{algo1}} \label{algo2}
 \begin{itemize}

  \item[1.] Generate $u \sim \mathcal{U}[0, 1]$.

 \begin{itemize}

    \item[(i)] If $u \leq T_\nu(\mathbf{x}, \Xset)$, generate $\mathbf{y} \sim Q_{\mathbf{x}, \nu}$ and set the next state of the chain to $(\mathbf{y}, \nu)$;

    \item[(ii)] if $T_\nu(\mathbf{x}, \Xset) < u \leq T_\nu(\mathbf{x}, \Xset) + \rho_\nu(\mathbf{x})$, set the next state of the chain to $(\mathbf{x}, -\nu)$;

    \item[(iii)] if $u > T_\nu(\mathbf{x}, \Xset) + \rho_\nu(\mathbf{x})$, set the next state of the chain to $(\mathbf{x}, \nu)$.

 \end{itemize}

  \item[2.] Go to Step 1.
 \end{itemize}
\end{algorithm}

\begin{Proposition}\label{prop_invariance}
 The transition kernel of the Markov chain $\{(\mathbf{X}, \nu)_k\}$ simulated by \autoref{algo2} admits $\pi \otimes \mathcal{U}\{-1, 1\}$ as invariant distribution.
\end{Proposition}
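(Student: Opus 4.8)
The plan is to verify the global balance equation $\mu P = \mu$ pointwise on $\Xset \times \{-1,+1\}$, where $\mu := \pi \otimes \mathcal{U}\{-1,+1\}$ so that $\mu(\mathbf{x}, \nu) = \pi(\mathbf{x})/2$, and $P$ is the transition kernel read off from \autoref{algo2}. First I would write $P$ explicitly: from $(\mathbf{x}, \nu)$ the chain (i) jumps to $(\mathbf{y}, \nu)$ with probability $q_{\mathbf{x}, \nu}(\mathbf{y}) \, \alpha_\nu(\mathbf{x}, \mathbf{y})$ for $\mathbf{y} \in \mathcal{N}_\nu(\mathbf{x})$, since $T_\nu(\mathbf{x}, \Xset) \, Q_{\mathbf{x}, \nu}(\mathbf{y}) = q_{\mathbf{x}, \nu}(\mathbf{y}) \, \alpha_\nu(\mathbf{x}, \mathbf{y})$; (ii) flips to $(\mathbf{x}, -\nu)$ with probability $\rho_\nu(\mathbf{x})$; and (iii) stays at $(\mathbf{x}, \nu)$ with probability $1 - T_\nu(\mathbf{x}, \Xset) - \rho_\nu(\mathbf{x})$, which is nonnegative precisely because of \eqref{eq:cdt_rho1}, so that $P$ is a bona fide stochastic kernel.

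The crux of the argument is a skew (modified) detailed balance identity for the accepted jumps: for every pair with $\mathbf{y} \in \mathcal{N}_\nu(\mathbf{x})$,
\[
 \mu(\mathbf{x}, \nu) \, q_{\mathbf{x}, \nu}(\mathbf{y}) \, \alpha_\nu(\mathbf{x}, \mathbf{y}) = \mu(\mathbf{y}, -\nu) \, q_{\mathbf{y}, -\nu}(\mathbf{x}) \, \alpha_{-\nu}(\mathbf{y}, \mathbf{x}).
\]
Inserting the definition \eqref{eq_acc_prob} of $\alpha$ and pushing the outer factor inside the minimum, both sides reduce to $\tfrac{1}{2}\bigl[\pi(\mathbf{x}) q_{\mathbf{x}, \nu}(\mathbf{y}) \wedge \pi(\mathbf{y}) q_{\mathbf{y}, -\nu}(\mathbf{x})\bigr]$, an expression manifestly invariant under the exchange $(\mathbf{x}, \nu) \leftrightarrow (\mathbf{y}, -\nu)$, which is exactly what the identity asserts. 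This step silently uses the symmetry of the neighbourhood structure induced by the partial order, namely $\mathbf{y} \in \mathcal{N}_\nu(\mathbf{x}) \Longleftrightarrow \mathbf{x} \in \mathcal{N}_{-\nu}(\mathbf{y})$ (if $\mathbf{y} > \mathbf{x}$ then $\mathbf{x} < \mathbf{y}$), so that $q_{\mathbf{y}, -\nu}(\mathbf{x})$ is genuinely supported on the correct set.

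To assemble the balance at a fixed target $(\mathbf{z}, \eta)$, I would collect the three ways of entering it: a direction-preserving jump from some $(\mathbf{x}, \eta)$ with $\mathbf{z} \in \mathcal{N}_\eta(\mathbf{x})$, a flip from $(\mathbf{z}, -\eta)$, and staying from $(\mathbf{z}, \eta)$. Applying the skew identity term by term rewrites each jump contribution $\mu(\mathbf{x}, \eta) q_{\mathbf{x}, \eta}(\mathbf{z}) \alpha_\eta(\mathbf{x}, \mathbf{z})$ as $\mu(\mathbf{z}, -\eta) q_{\mathbf{z}, -\eta}(\mathbf{x}) \alpha_{-\eta}(\mathbf{z}, \mathbf{x})$; summing over $\{\mathbf{x} : \mathbf{z} \in \mathcal{N}_\eta(\mathbf{x})\} = \mathcal{N}_{-\eta}(\mathbf{z})$ and recognising the definition \eqref{eqn_T_algo2} collapses the whole jump contribution to $\mu(\mathbf{z}, -\eta) \, T_{-\eta}(\mathbf{z}, \Xset)$. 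Hence
\[
 (\mu P)(\mathbf{z}, \eta) = \mu(\mathbf{z}, -\eta)\bigl[T_{-\eta}(\mathbf{z}, \Xset) + \rho_{-\eta}(\mathbf{z})\bigr] + \mu(\mathbf{z}, \eta)\bigl[1 - T_\eta(\mathbf{z}, \Xset) - \rho_\eta(\mathbf{z})\bigr].
\]
Since the direction marginal of $\mu$ is uniform we have $\mu(\mathbf{z}, -\eta) = \mu(\mathbf{z}, \eta)$, and the bracketed factor becomes $1 + \bigl[T_{-\eta}(\mathbf{z}, \Xset) - T_\eta(\mathbf{z}, \Xset) + \rho_{-\eta}(\mathbf{z}) - \rho_\eta(\mathbf{z})\bigr]$, whose bracket vanishes by condition \eqref{eq:cdt_rho2}; thus $(\mu P)(\mathbf{z}, \eta) = \mu(\mathbf{z}, \eta)$, which is the claim.

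I expect the only genuinely delicate point to be the neighbourhood symmetry used both to justify the skew identity and to turn the set $\{\mathbf{x} : \mathbf{z} \in \mathcal{N}_\eta(\mathbf{x})\}$ into $\mathcal{N}_{-\eta}(\mathbf{z})$: one must make explicit that reversing the direction variable mirrors the up/down splitting of $\mathcal{N}(\mathbf{x})$ dictated by the partial order, and handle the boundary convention (proposals falling outside $\Xset$ carry zero mass, hence contribute nothing to the sums and merely feed the flip branch through $\rho$). Everything else is the routine bookkeeping of substituting \eqref{eq:cdt_rho2}, which is immediate once the skew balance is in hand.
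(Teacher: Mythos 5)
Your proposal is correct and follows essentially the same route as the paper's proof: both rest on the skew detailed balance relation $\bar{\pi}(\mathbf{x},\nu)\,P((\mathbf{x},\nu),(\mathbf{y},\nu')) = \bar{\pi}(\mathbf{y},-\nu')\,P((\mathbf{y},-\nu'),(\mathbf{x},-\nu))$, obtained for the jump part by symmetrising the acceptance ratio into $\tfrac{1}{2}\bigl[\pi(\mathbf{x}) q_{\mathbf{x},\nu}(\mathbf{y}) \wedge \pi(\mathbf{y}) q_{\mathbf{y},-\nu}(\mathbf{x})\bigr]$ and for the flip/stay parts via condition \eqref{eq:cdt_rho2}, before summing over the source states. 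The only cosmetic difference is that you finish by cancelling the bracket $T_{-\eta}-T_{\eta}+\rho_{-\eta}-\rho_{\eta}$ directly, whereas the paper sums the skew-balanced kernel to recognise the total exit probability from $(\mathbf{y},-\nu')$; the content is identical.
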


One may notice that $T_\nu(\mathbf{x}, \Xset)$ represents the probability to leave the current state $(\mathbf{x}, \nu)$. In \autoref{algo2}, we thus first decide if we move on from $\bx$, in which case, in Step 1.(i), we randomly select the value of $\by$, the state to move to (using the conditional distribution). It can be readily checked that valid choices for $\rho_\nu$ include  $\rho_\nu(\mathbf{x}) = 1 - T_\nu(\mathbf{x}, \Xset)$  and $\rho_\nu(\mathbf{x}) = \max\{0, T_{-\nu}(\mathbf{x}, \Xset) - T_{\nu}(\mathbf{x}, \Xset)\}$. If $\rho_\nu(\mathbf{x}) = 1 - T_\nu(\mathbf{x}, \Xset)$, the condition for Case (iii) of Step 1 is never satisfied, and the algorithm either accepts the proposal and keeps the same direction, or the proposal is rejected and the direction is reversed. In this case, one can show that \autoref{algo2} corresponds to \autoref{algo1}, which is why \autoref{prop_invariance} allows ensuring the correctness of \autoref{algo1} as well. Setting $\rho_\nu(\mathbf{x})$ otherwise than $\rho_\nu(\mathbf{x}) = 1 - T_\nu(\mathbf{x}, \Xset)$ allows in Case (iii) of Step 1 to keep following the same direction, even when the proposal is rejected. Intuitively, this is desirable when the rejection is due to ``bad luck'', and not because there is low mass in the direction followed. The function $\rho_\nu(\mathbf{x})$ aims to incorporate this possibility in the sampler.

In a typical MCMC framework with continuous state-spaces, the function $\bx\mapsto T_\nu(\mathbf{x}, \Xset)$ is intractable. In such a case, it is therefore usually not possible to set $\rho_\nu(\mathbf{x})$ otherwise than $1 - T_\nu(\mathbf{x}, \Xset)$. This contrasts with our discrete state-space framework in which it is often possible to directly compute $T_\nu(\mathbf{x}, \Xset)$. Theorem~6 in \cite{andrieu2019peskun} states that the best choice of function  $\rho_\nu$ in terms of a mathematical object related to the asymptotic variance is
\begin{align}\label{eqn_rho_star}
 \rho_\nu^*(\mathbf{x}) := \max\{0, T_{-\nu}(\mathbf{x}, \Xset) - T_{\nu}(\mathbf{x}, \Xset)\},
\end{align}
and that the worst choice is $\rho_\nu^{\text{w}}(\mathbf{x}) := 1 - T_\nu(\mathbf{x}, \Xset)$. \autoref{cor_bestrho} below establishes an order on the asymptotic variances in the context of finite state-spaces of this paper. Denote by $P_\rho$ the transition kernel corresponding to \autoref{algo2} for a given function $\rho_\nu : \Xset\to[0,1]$.

\begin{Corollary}\label{cor_bestrho}
 If $\Xset$ is finite, then for any function $\rho_\nu$ satisfying \eqref{eq:cdt1_rho}-\eqref{eq:cdt2_rho} and for any function $f: \Xset \times \{-1, + 1\} \to \rset$ such that $f(\bx, -1) = f(\bx, +1)$, we have $ \vara(f, P_{\rho^*}) \leq \vara(f, P_{\rho}) \leq \vara(f, P_{\rho^{\text{w}}})$.
\end{Corollary}

The price to pay for using $\rho_\nu^*$ instead of $\rho_\nu^{\text{w}}$ is that the algorithm is more complicated to implement because it is required to systematically compute $T_{\nu}(\mathbf{x}, \Xset)$ at each iteration (it is also sometimes required to compute $T_{-\nu}(\mathbf{x}, \Xset)$). Using $\rho_\nu^*$ thus also comes with an additional computational cost. We observed in some numerical experiments that, if we account for this increased computational cost, there is no gain in efficiency of using \autoref{algo2} with $\rho_\nu^*$ over \autoref{algo2} with $\rho_\nu^{\text{w}}$ (corresponding to \autoref{algo1}). One may thus opt for simplicity and implement \autoref{algo1}. Note that the latter and its MH counterpart have essentially the same computational cost.

\subsection{Related work about sampling on discrete state-spaces}\label{sec:literature}

Sampling on discrete state-spaces is typically performed using uniform proposal distributions in reversible samplers. If we consider for instance that $\mathbf{x} = (x_1, \ldots, x_n)$ with $x_1, \ldots, x_n \in \{\text{A}, \text{B}\}$, Glauber dynamics for graphical models or the tie-no-tie sampler for network models selects uniformly at random one of the coordinate, say $x_i$, and proposes to change its value from A to B (B to A) when $x_i = \text{A}$ ($x_i = \text{B}$). Such moves are often rejected when the mass concentrates on a subset of the state-space. To address this issue, \cite{zanella2019informed} recently proposed a \textit{locally-balanced} generic approach for which the probability to select the $i$-th coordinate depends on the relative mass of the resulting proposal, that is $\pi(\mathbf{y}) / \pi(\mathbf{x})$, aiming to propose less ``naive'' moves. \cite{zanella2019informed} proves that the acceptance probabilities converge to 1 in a high-dimensional regime. This property suggests that locally-balanced samplers are efficient, at least in high dimensions. Indeed, samplers for discrete state-spaces typically use the same neighbourhood structure $\{\neigh(\mathbf{x})\,:\,\bx\in\Xset\}$, implying that the range of the proposal distributions is the same and that higher acceptance probabilities often translate into better mixing properties. \cite{zanella2019informed} in fact empirically shows that locally-balanced samplers perform better than alternative solutions to sample from PMFs, and that the difference is highly marked in the high-dimensional regime. Yet, the samplers are reversible, implying that the chains may often go back to recently visited states, or in other words, that the chains exhibit a random-walk behaviour.

In the presented generic algorithms in \autoref{sec:algo}, there is no restriction on the proposal distributions $q_{\mathbf{x}, \nu}$. In \autoref{sec:zanella}, we set them to locally-balanced proposal distributions, thus combining the strengths of the lifting and locally-balanced approaches. An illustration showing the benefit of this combination is provided in \autoref{fig_traces} in which we measure the performance using the effective sample size (ESS) of a statistic, reported per iteration. ESS per iteration is defined as the inverse of the integrated autocorrelation time. When the chains start in stationarity, integrated autocorrelation time corresponds to the asymptotic variance of a standardized version of the statistic. A small asymptotic variance thus corresponds to a high ESS (and vice versa).

 \begin{figure}[ht]
  \centering
  $\begin{array}{cc}
  \textbf{Random-walk behaviour} & \textbf{Persistent movement} \cr
   \vspace{-0mm}\textbf{ESS = 0.12 per it.} & \textbf{ESS = 0.33 per it.} \cr
 \includegraphics[width=0.48\textwidth]{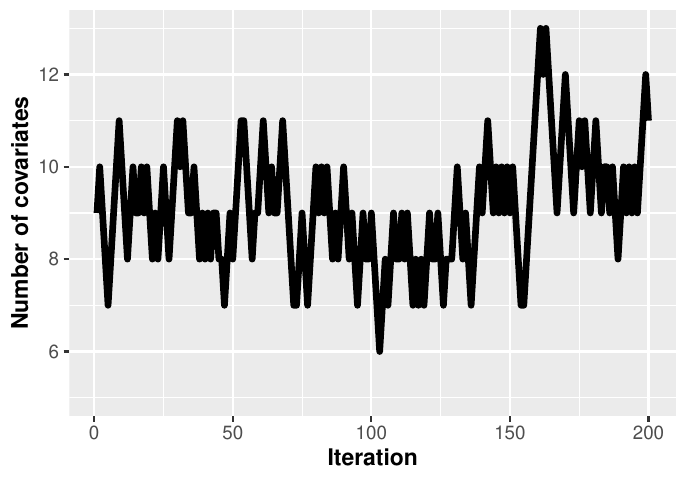} &  \includegraphics[width=0.48\textwidth]{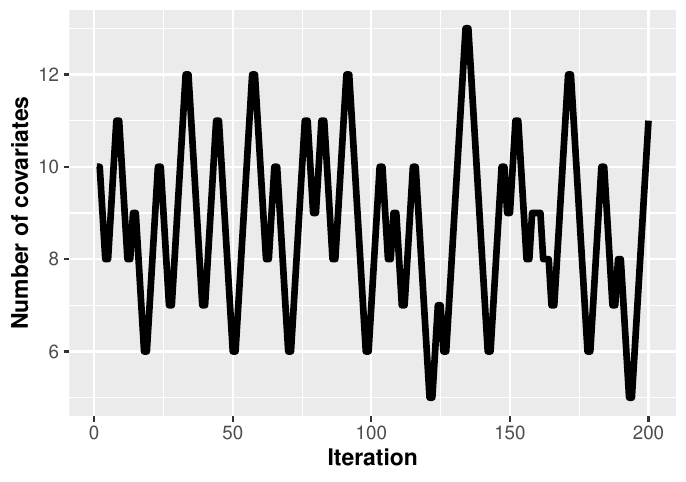}
  \end{array}$\vspace{-5mm}
  \caption{\small Trace plots for the statistic of number of covariates included in a model for a MH sampler with a locally-balanced proposal distribution on the left panel and its lifted counterpart on the right panel, when applied to solve a real variable-selection problem presented in \autoref{sec:variable_selection}.}\label{fig_traces}
 \end{figure}
\normalsize

Other (somewhat) generic approaches to non-reversible sampling on discrete state-spaces are (to our knowledge) all contemporary to ours: \cite{bierkens2016non}, \cite{sakai2016eigenvalue}, \cite{power2019accelerated}, \cite{faizi2020efficient} and \cite{herschlag2020non}. They rely on the lifting technique as well, except \cite{bierkens2016non}. Our work is most closely related to \cite{power2019accelerated} in which the approach of \cite{zanella2019informed} is also exploited. In fact, when $\mathbf{x} = (x_1, \ldots, x_n)$ with $x_1, \ldots, x_n \in \{\text{A}, \text{B}\}$, \autoref{algo1} corresponds to the discrete-time version of a specific sampler independently developed in \cite{power2019accelerated}. \autoref{algo1} can also be seen to be a special case of a sampler presented in \cite{sakai2016eigenvalue} in which a general extended transition matrix is defined from lifting the MH one. A similar approach, described in \cite{faizi2020efficient}, explicitly incorporates the changes in the function $f$ by moving from a state to another in the transition matrix; this latter approach is closely related to ours when $f(\bx)$ decreases or increases every time we change $\bx$ for $\by$ with $\bx \prec \by$. We consequently do not claim originality for the samplers presented here. In those papers, however, the notion of partial ordering is not identified nor exploited; the focus is rather on improving state-space exploration through the exploitation of \emph{any} symmetric or algebraic structure of $\Xset$ identified by users. The focus is the same in \cite{bierkens2016non}, but the non-reversibility is achieved by directly modifying the acceptance probability in MH, using the notion of vorticity matrix; this approach is valid in general state-space contexts. In \cite{herschlag2020non}, the authors generalize non-reversible lifted kernels to \textit{mixed skewed} kernels by means of a series of involutions in a context of undirected graph sampling. In their work, the main application is sampling of districting maps to evaluate the degree of partisan districting. The involutions are created by a series of user-specified vortices that generate non-reversible flows on the state-space. Interestingly, this scheme can be seen as creating directional neighbourhoods.

\section{Two specific lifted samplers and their analysis}\label{sec:specific_samplers}

In this section, we specify two lifted samplers through two different choices of proposal distributions $q_{\mathbf{x}, \nu}$ and provide a theoretical analysis using the asymptotic Peskun ordering. We first present and analyse in \autoref{sec:uniform} a lifted sampler using uniform proposal distributions. As explained in \autoref{sec:literature}, this sampler is often inefficient, especially in high dimensions, but it is simple enough to allow an easy understanding of the reasons why lifted samplers are \emph{not} expected to always dominate their MH counterparts within our framework. We next turn in \autoref{sec:zanella} to a more promising choice of proposal distributions, namely the locally-balanced proposal distributions, and study the resulting lifted sampler.

As mentioned, the study here will be conducted in great generality. More precisely, the target distribution will not be specified; we will thus not be in a position to explicitly estimate the rates appearing in the conditions of our theoretical results presented in \autoref{sec:weak_peskun}. We will make assumptions regarding these rates, but this will not prevent us from defining the control subsets. Making assumptions regarding the rates appearing in the conditions of our theoretical results and defining judiciously the control subsets will allow to gain general insights into the situations in which the lifted samplers are expected to outperform their MH counterparts, and also into those in which there is no guarantee. In \autoref{sec:simple_Ising}, we conduct a thorough study in a specific context of graphical-model simulation. This will allow to have a concrete example of how the assumptions of our theoretical results can be verified in practice. That study will also allow to improve the understanding of the behaviour of lifted samplers through practical results.

For ease of presentation, we consider in this section the setup where $\bx = (x_1, \ldots, x_n)$ and $x_1, \ldots, x_n \in \{\text{A}, \text{B}\}$ with the partial order on $\Xset$ defined in \eqref{eq:partial_order}. We consider, additionally, but without loss of generality, a Ising model context where $\text{A} = -1$ and $\text{B} = +1$. Finally, we consider that the neighbourhood structure, used by all samplers, is the typical one, meaning that the neighbourhoods are set to $\neigh(\mathbf{x}) = \{ \mathbf{y} \in \Xset_n: \sum_i |x_i - y_i| = 2\}$, so that the algorithms propose to flip a single bit at each iteration. Because of the nature of our analysis, we, as in \autoref{sec:weak_peskun}, highlight a dependency on $n$ of the target distribution, the state-space, and so on, by denoting them by $\pi_n$, $\Xset_n$, etc.

\subsection{Uniform proposal distributions}\label{sec:uniform}

In the reversible MH sampler, it is common, as mentioned in \autoref{sec:literature}, to set the proposal distribution, denoted by $q_\mathbf{x}$ for this algorithm, to the uniform distribution over the neighbourhood of the current state $\bx\in\Xset_n$, that is $q_\mathbf{x} = \mathcal{U}\{\neigh(\mathbf{x})\}$. In the framework of Algorithms \ref{algo1} and \ref{algo2}, the analogous proposal distribution is naturally defined as $q_{\mathbf{x}, \nu} = \mathcal{U}\{\neigh_\nu(\mathbf{x})\}$. In this case, the acceptance probability \eqref{eq_acc_prob} of a proposed move becomes
\[
 \alpha_\nu(\mathbf{x}, \mathbf{y}) = 1 \wedge a_\nu(\bx,\by)\,,\qquad a_\nu(\bx,\by):=\frac{\pi_n(\mathbf{y}) \,  |\neigh_{\nu}(\mathbf{x})|}{\pi_n(\mathbf{x}) \, |\neigh_{-\nu}(\mathbf{y})|}\,,
\]
where we refer to $a_\nu$ as the \textit{acceptance ratio}. The function $|\, \cdot \,|$ when applied to a set is the cardinality.

In the MH sampler, given that the neighbourhoods are set to $\neigh(\mathbf{x}) = \{ \mathbf{y} \in \Xset_n: \sum_i |x_i - y_i| = 2\}$, the uniform distribution chooses which bit to flip uniformly at random. Therefore, the size of the neighbourhoods in this sampler is constant for any $\mathbf{x}$ and is given by $n$. This implies that the acceptance probability in this sampler, denoted by $\alpha(\mathbf{x}, \mathbf{y})$, reduces to
\begin{align*}
\alpha(\mathbf{x}, \mathbf{y}) = 1 \wedge a(\mathbf{x}, \mathbf{y})\,,\qquad  a(\mathbf{x}, \mathbf{y}) := \frac{\pi_n(\mathbf{y}) \, q_{\by}(\bx)}{\pi_n(\mathbf{x}) \, q_{\bx}(\by)} = \frac{\pi_n(\mathbf{y})}{\pi_n(\mathbf{x})}.
\end{align*}

In the lifted case, we have that for any $\nu\in\{-1,+1\}$, $n_\nu(\bx)=\sum_{i=1}^n\1_{x_i=\nu}$ and the acceptance probability can thus be rewritten as:
\begin{align}\label{eqn_acc_prob_unif_lifted}
 \alpha_\nu(\mathbf{x}, \mathbf{y}) = 1 \wedge a_\nu(\bx,\by)\,,\qquad a_\nu(\bx,\by)=a(\bx,\by) \, \frac{ n_{-\nu}(\mathbf{x})}{ n_{\nu}(\mathbf{y})}.
\end{align}
Indeed, $\neigh_\nu(\mathbf{x}) = \{ \mathbf{y} \in \Xset_n: \text{there exists one } j \text{ such that } y_j = -x_j = \nu\}$ implies that $|\neigh_\nu(\mathbf{x})| = n_{-\nu}(\mathbf{x})$. The acceptance probability $\alpha_\nu$ thus depends on an additional factor $n_{-\nu}(\mathbf{x}) / n_{\nu}(\mathbf{y})$ compared to $\alpha$ in the MH sampler. While the reversible sampler is allowed to backtrack, which makes the size of the neighbourhoods constant, the size of the neighbourhoods diminishes in the lifted sampler as the chain moves further in a given direction (making the neighbourhoods in the reverse direction bigger and bigger). As a consequence, the longer the acceptance streak, the smaller  $n_{-\nu}(\mathbf{x}) / n_{\nu}(\mathbf{y})$. On an acceptance streak, this factor eventually becomes less than one and thus shrinks $\alpha_\nu$, relatively to the MH acceptance ratio, until the lifted chain switches its direction. To summarize, the price to pay when considering a Markov chain with persistent dynamic is a shrinking factor $n_{-\nu}(\mathbf{x}) / n_{\nu}(\mathbf{y})$ in the acceptance ratio.

An \textit{ideal} situation, which is incompatible with most statistical models, is one where
\begin{equation}\label{eq:ideal_unif}
  |\neigh_{-1}(\mathbf{x})| = |\neigh_{+1}(\mathbf{x})| = |\neigh(\mathbf{x})| / 2 = n / 2, \quad\text{for } \pi_n\text{-almost all } \mathbf{x} \in \Xset_n.
\end{equation}
This implies that if the chain is at state $(\bx, \nu)$, $a(\bx,\by)=a_\nu(\bx,\by)$ for all $\by\in\neigh_{\nu}(\bx)$. Qualitatively, the persistent dynamic of the lifted chain is no longer counter-balanced by the shrinking factor and is thus expected to be more efficient than MH. This fact is made rigourous in \autoref{cor_domination_unif} below, which follows from Theorem 7 of \cite{andrieu2019peskun}. In the rest of this subsection, the transition kernel corresponding to \autoref{algo2} with $q_{\mathbf{x}, \nu} = \mathcal{U}\{\mathcal{N}_\nu(\mathbf{x})\}$ is denoted by $P_{\rho, n}$ and that of the MH sampler with $q_\mathbf{x} = \mathcal{U}\{\mathcal{N}(\mathbf{x})\}$ by $P_{\text{MH}, n}$. Recall that \autoref{algo2} with $\rho_\nu^{\text{w}}$ corresponds to \autoref{algo1}.

\begin{Corollary}\label{cor_domination_unif} Let $n\in\nset$. If $\Xset_n$ is finite and \eqref{eq:ideal_unif} holds, then for any function $f_n: \Xset \times \{-1, + 1\} \to \rset$ such that $f_n(\bx, -1) = f_n(\bx, +1)$, we have $\vara(f_n, P_{\rho, n}) \leq \vara(f_n, P_{\MH, n})$.
\end{Corollary}

The proof of \autoref{cor_domination_unif} is postponed to \autoref{sec:proofs} but its main steps are now presented as they highlight what is important to obtain such an ordering. Central to the proof of \autoref{cor_domination_unif} is the idea that once a lifted sampler is defined, it is possible to identify a non-lifted counterpart which differs from \autoref{algo2} in that the direction is resampled $\nu \sim \mathcal{U}\{-1, +1\}$ at the beginning of each iteration. At each iteration, a choice between $q_{\bx,-1}$ and $q_{\bx,+1}$ is thus first made uniformly at random, and the proposal is next sampled. Non-lifted refers to the fact that, while operating on the extended state-space, the systematic resampling of $\nu$ makes the marginal dynamic $\{\bX_k\}$ Markov again, and reversible. This scheme, when looking at functions $f_n$ with $f_n(\bx, -1) = f_n(\bx, +1)$, makes the extension of state-space to include the direction variable superfluous, explaining how a comparison between $\vara(f_n, P_{\rho, n})$ and $\vara(f_n, P_{\text{MH}, n})$ is possible. Let $P_{\rev, n}$ be the transition kernel of this non-lifted reversible Markov chain. As noted in \cite{andrieu2019peskun}, $P_{\rev,n}$ can indeed be seen as an intermediate kernel through which comparison of the asymptotic variance of $P_{\rho,n}$ and $P_{\MH,n}$ is possible if one can establish, perhaps independently, that $\vara(f,P_{\rho,n})\leq \vara(f,P_{\rev,n})$ and $\vara(f,P_{\rev,n})\leq \vara(f,P_{\MH,n})$. While establishing the former essentially follows from Theorem~7 of \cite{andrieu2019peskun}, the latter may prove more difficult. However, under \eqref{eq:ideal_unif}, it turns out that $P_{\rev,n}=P_{\MH,n}$, trivially establishing that $\vara(f,P_{\rev,n})=\vara(f,P_{\MH,n})$.  Indeed, the sub-stochastic part of $P_{\rev,n}$ associated with accepted proposals is
\begin{equation}\label{eq:prev}
 (1/2)q_{\bx,+1}(\by)\alpha_{+1}(\bx,\by)+(1/2)q_{\bx,-1}(\by)\alpha_{-1}(\bx,\by),
\end{equation}
and it can be readily checked that under \eqref{eq:ideal_unif}, \eqref{eq:prev} indeed coincides with the sub-stochastic part of $P_{\MH,n}$. These are the same mathematical arguments that allow to prove the dominance mentioned in \autoref{sec:intro_lifted} of lifted samplers over their MH counterparts when the state-space is totally ordered.

The incompatibility of the condition \eqref{eq:ideal_unif} with most statistical models motivates us to take our analysis one step further, and this is where the asymptotic Peskun ordering presented in \autoref{sec:weak_peskun} proves useful. Note that in order to find a model such that \eqref{eq:ideal_unif} is satisfied, one has to be quite creative; an example is provided in the supplementary material (\autoref{sec:supp_mat}). The next step in our analysis is to establish if the order on the asymptotic variances 
still holds when \eqref{eq:ideal_unif} is relaxed, and if not, we want to find conditions under which $\vara(f_n, P_{\rho, n})$ and $\vara(f_n, P_{\text{MH}, n})$ can be compared. A modification of our example presented in the supplementary material shows that the order on the asymptotic variances 
\emph{does not} necessarily hold when \eqref{eq:ideal_unif} is relaxed. This should not come as a surprise in the light of the aforementioned observations about the potentially shrinking factor in $\alpha_\nu$. Comparing the efficiency of $P_{\MH, n}$ and $P_{\rho, n}$ beyond the context of \autoref{cor_domination_unif} is not an easy task for several reasons:
\begin{itemize}
\itemsep 0mm
\item $P_{\rho,n}$ is not reversible and most techniques to establish domination results between Markov kernels hold for reversible kernels, \citet{andrieu2019peskun} being a noteworthy exception;
\item the two kernels $P_{\MH,n}$ and $P_{\rho,n}$ are not defined on the same state-space.
\end{itemize}
For these reasons, finding reasonable conditions under which $\vara(f_n, P_{\rev,n})$ and $\vara(f_n, P_{\MH,n})$ can be compared appears to be a suitable route to establish a comparison between $P_{\rho,n}$ et $P_{\MH,n}$ (given that we already know that $\vara(f_n, P_{\rho,n}) \leq \vara(f_n, P_{\rev,n})$ using similar arguments to those used to prove \autoref{cor_domination_unif}). We thus employ Theorems \ref{thm:1} and \ref{thm:2}.

Note that if one manages to design the distributions $q_{\bx, \nu}$ such that $q_{\bx}(\by) = (1/2) q_{\bx, -1}(\by) + (1/2) q_{\bx, +1}(\by)$ for all $\bx, \by$, then one directly has $P_{\rev,n}=P_{\MH,n}$ and thus a comparison between $P_{\rho,n}$ and $P_{\MH,n}$; this is the approach proposed in \cite{kamatani2020non} for general state-spaces, but it is one that cannot in general be applied in the case of discrete state-spaces. Note also that if one is interested in comparing a lifted sampler using proposal distributions $q_{\bx, \nu}$ with a MH sampler using proposal distributions defined as $q_{\bx}(\by) = (1/2) q_{\bx, -1}(\by) + (1/2) q_{\bx, +1}(\by)$, then again $P_{\rev,n}=P_{\MH,n}$ and a comparison between $P_{\rho,n}$ and $P_{\MH,n}$ is direct. In the context of variable selection, the latter MH sampler corresponds to one where it is first chosen to add a covariate or remove one already in the model, and next which covariate to add or delete. In our paper, we focus on the common situation where, in the MH sampler, a proposal is made uniformly at random from $\neigh(\mathbf{x})$ (or using a locally-balanced weight function as described in \autoref{sec:zanella}), and we want to compare a lifted sampler with the MH one.

The idea that we now explore is to consider situations where the mass concentrates on an area where we have a control over the  factor $n_{-\nu}(\mathbf{x}) / n_{\nu}(\mathbf{y})$ in $\alpha_\nu$ \eqref{eqn_acc_prob_unif_lifted}, which translates into the existence of a (non-trivial) relationship between the sub-stochastic part of $P_{\rev,n}$ and that of $P_{\MH,n}$ on this area. To simplify, we consider situations where the mass concentrates on the centre of the domain, i.e., on states where $n_{-1}(\bx)$ and $n_{+1}(\bx)$ are not too far from $n/2$, and set
\begin{align}\label{eq_tX_unif}
\tXset_n := \{\mathbf{x} \in \Xset_n: n / 2 - \beta(n) \leq n_{-1}(\mathbf{x}), n_{+1}(\mathbf{x}) \leq n / 2 + \beta(n)\},
\end{align}
by choosing a specific function $\beta:\nset\to(0,\infty)$. With this definition of $\tXset_n$ and that of the neighbourhood structure (mentioned at the beginning of \autoref{sec:specific_samplers}), we are able to state that the interior of $\tXset_n$ is as follows: $\itXset_n = \{\mathbf{x} \in \Xset_n: n / 2 - \beta(n) + 1 \leq n_{-1}(\mathbf{x}), n_{+1}(\mathbf{x}) \leq n / 2 + \beta(n) - 1\}$. Note that the analysis can be done by considering instead that the mass concentrates on states where the minimum between $n_{-1}(\bx)$ and $n_{+1}(\bx)$ is not too far from $n/\kappa$ with $\kappa \geq 2$. The difference is that, with control subsets defined as in \eqref{eq_tX_unif}, $\omega(n)$ will be seen to converge to $\overline{\omega} = 1$, whereas in the general framework, $\overline{\omega} \leq 1$ and a function of $\kappa$, and the results are more complicated to present. Constructing the control subsets $\{\tXset_n\}$ as in \eqref{eq_tX_unif} implies that, remarkably, the analysis is parameterized by the sole function $\beta$.

\begin{Lemma}
\label{lemma:omega_unif}
Consider the definition of $\tXset_{n}$ in \eqref{eq_tX_unif}. Assume that $\beta$ is such that $\beta(n)=o(n)$. Then for a large enough $n$, it holds that $\tP_{\rev, n}(\bx, \by)\geq \omega(n) \tP_{\MH, n}(\bx,\by)$, for all $(\bx,\by)\in\tXset_{n}^2$ with $\bx\neq \by$, where
\begin{align}\label{eqn_omega1}
 \omega(n) = \left(1 - \frac{\beta(n)}{n / 2}\right) \left(1 + \frac{\beta(n)}{n / 2}\right)^{-2} \rightarrow \overline{\omega} = 1.
\end{align}
\end{Lemma}

Intuitively, if $\beta(n)$ grows like $n$ or faster, then for a large enough $n$ we have $\Xset_n=\tXset_n$ which boils down to the initial Peskun's problem so the assumption $\beta(n)=o(n)$ is sensible. If $\beta(n)$ grows \textit{too} slowly then the control subsets $\tXset_n$ may eventually fail to track the bulk of $\Xset_n$, resulting in that the mass of $\pi_n$ will not concentrate on $\itXset_n$ and that the restricted kernels will be too different from the original ones to allow the machinery of \autoref{sec:weak_peskun} to work. The condition $\beta(n) = o(n)$, together with \eqref{eqn_omega1}, means that \autoref{ass:1} holds with $\overline{\omega} = 1$. If we assume that the spectral gaps are bounded away from zero, which is realistic, for example, when $\beta(n)$ is constant, and that $\pi_n$ concentrates on $\itXset_n$ defined above (implying that \autoref{ass:2} holds), then \autoref{thm:1} can be applied and
\[
 \vara(f_n, P_{\rho,n}) \leq \vara(f_n, P_{\rev,n}) \leq \frac{1}{1 - \epsilon} \vara(f_n, P_{\MH,n}) + \frac{\eps}{2(1-\eps)} + \frac{\epsilon}{2},
\]
for any $\eps > 0$, provided that $n$ is large enough and that we consider functions $f_n \in \Ltwoz(\pi_n)$ satisfying \eqref{eq:ass_thm1} and such that $f_n(\bx, -1) = f_n(\bx, +1)$. The assumption on the spectral gaps can be relaxed and \autoref{thm:2} can be instead applied if we are able to establish a connection between the rate at which $\pi_n$ concentrates on $\itXset_n$ and that at which $\ulambda(n)$ decreases, i.e., if \eqref{eq:ass2_thm3} can be verified.

 To summarize, our analysis suggests that the lifted sampler with uniform proposal distributions dominates its MH counterpart (at least for $n$ large enough and a specific class of functions) when $\pi_n$ concentrates on states in the centre of the domain. If it concentrates elsewhere, then the lifted sampler is expected to be comparable to its MH counterpart as long as $\pi_n$ does not concentrate on areas where the neighbourhoods, and thus the additional factors $n_{-\nu}(\mathbf{x}) / n_{\nu}(\mathbf{y})$ in \eqref{eqn_acc_prob_unif_lifted}, are very unbalanced.

 When $n$ is large, uniform proposal distributions, whether they are used in a lifted or MH sampler, are likely to represent a poor strategy. We will thus not focus on samplers with uniform proposal distributions in our study in a context of graphical-model simulation in \autoref{sec:simple_Ising}. We will rather focus on studying locally-balanced samplers presented in the next subsection which represent efficient alternatives.

\subsection{Locally-balanced proposal distributions}\label{sec:zanella}

In this section, we discuss and analyse samplers using locally-balanced proposal distributions. For simplicity, we will use the same notation as in \autoref{sec:uniform}: $q_{\bx}$ and $q_{\bx, \nu}$ are the proposal distributions in the MH and lifted samplers, respectively, but in this section they are locally-balanced (a definition follows), and $P_{\rho, n}$, $P_{\rev, n}$ and $P_{\MH, n}$ are the Markov kernels associated with \autoref{algo2}, and its non-lifted and MH counterparts, respectively, which are all using locally-balanced proposal distributions. Recall that \autoref{algo1} is a special case of  \autoref{algo2} with $\rho_\nu(\mathbf{x}) = 1 - T_\nu(\mathbf{x}, \Xset)$.

 As defined in \cite{zanella2019informed} in the MH framework, a proposal distribution is locally-balanced if
\begin{align*} 
  q_\mathbf{x}(\mathbf{y}) = g\left(\frac{\pi_n(\mathbf{y})}{\pi_n(\mathbf{x})}\right) \bigg/ c_n(\mathbf{x}), \quad \mathbf{y} \in \neigh(\mathbf{x}),
\end{align*}
where $c_n(\mathbf{x})$ is the normalizing constant, that is $c_n(\mathbf{x}) = \sum_{\mathbf{x}' \in \neigh(\mathbf{x})} g(\pi_n(\mathbf{x}') / \pi_n(\mathbf{x}))$, and $g$ is a positive continuous function such that $g(x) / g(1 / x) = x$ for $x > 0$. Such a function $g$ implies that the acceptance probability in the MH algorithm is given by
\begin{equation}
\label{eqn_acc_prob_inf_MH}
 \alpha(\mathbf{x}, \mathbf{y}) = 1 \wedge \frac{\pi_n(\mathbf{y}) \, q_{\mathbf{y}}(\mathbf{x})}{\pi_n(\mathbf{x}) \, q_{\mathbf{x}}(\mathbf{y})} = 1 \wedge \frac{c_n(\mathbf{x})}{c_n(\mathbf{y})}.
\end{equation}
The name \textit{locally-balanced} comes from the fact that, in the limit, when the state-space becomes larger and larger (but the neighbourhoods have a fixed size and proposed moves are thus local), there is no need for an accept-reject step anymore; the proposal distributions leave the distribution $\pi_n$ invariant. Indeed, as shown in
\cite{zanella2019informed}, $\sup_{(\bx,\by)\in\Xset_n\,:\,\by\in\neigh(\bx)}c_n(\mathbf{x}) / c_n(\mathbf{y}) \rightarrow 1$ as $n \rightarrow \infty$ under some assumptions. The author more precisely considers that $\mathbf{x} = (x_1, \ldots, x_n)$ and that at any given iteration, only a small fraction of the $n$ components is proposed to change values. The result holds when there exists a uniform bound which does not depend on $n$ on $\pi_n(\by)/\pi_n(\bx)$ for all pairs of neighbouring states $(\bx,\by)$ and the random variables $X_1, \ldots, X_n$ exhibit a structure of conditional independence, the latter implying that the normalizing constants $c_n(\mathbf{x})$ and $c_n(\mathbf{y})$ share a lot of terms. Note that $c_n(\mathbf{x})$ and $c_n(\mathbf{y})$ are both sums over the same number of terms, which is crucial in showing that $\sup_{(\bx,\by)\in\Xset_n\,:\,\by\in\neigh(\bx)}c_n(\mathbf{x}) / c_n(\mathbf{y}) \rightarrow 1$.

Two valid choices for $g$ are $g(x) = \sqrt{x}$ and $g(x) = x / (1 + x)$, the latter yielding what is referred to as the \textit{Barker proposal distribution} in reference to \cite{barker1965monte}'s acceptance probability choice. The advantage of the latter choice is that it is a bounded function of $x$, which stabilizes the normalizing constants and thus the acceptance probability, see \cite{zanella2019informed}, and  \cite{livingstone2019robustness} for the continuous-random-variable case.

A locally-balanced proposal distribution in the lifted-sampler framework is naturally defined as
\[
 q_\mathbf{x, \nu}(\mathbf{y}) = g\left(\frac{\pi_n(\mathbf{y})}{\pi_n(\mathbf{x})}\right) \bigg/ c_{n, \nu}(\mathbf{x}), \quad \mathbf{y} \in \neigh_\nu(\mathbf{x}),
\]
where $c_{n, \nu}(\mathbf{x})$ is the normalizing constant and $g$ is as above. In this case,
\begin{align}\label{eqn_acc_prob_inf_lifted}
 \alpha_\nu(\mathbf{x}, \mathbf{y}) = 1 \wedge \frac{\pi_n(\mathbf{y}) \, q_{\mathbf{y}, -\nu}(\mathbf{x})}{\pi_n(\mathbf{x}) \, q_{\mathbf{x}, \nu}(\mathbf{y})} = 1 \wedge \frac{c_{n, \nu}(\mathbf{x})}{c_{n, -\nu}(\mathbf{y})} = 1 \wedge \frac{c_{n}(\mathbf{x})}{c_{n}(\mathbf{y})}\frac{c_{n, \nu}(\mathbf{x})/ c_{n}(\mathbf{x})}{c_{n, -\nu}(\mathbf{y}) / c_{n}(\mathbf{y})}.
\end{align}
As with the uniform proposal distributions in \autoref{sec:uniform}, we see that the acceptance probability in the lifted sampler \eqref{eqn_acc_prob_inf_lifted} differs from that in MH \eqref{eqn_acc_prob_inf_MH}. There is thus again a price to pay to use a lifted sampler: there is no guarantee that $c_{n, \nu}(\mathbf{x}) / c_{n, -\nu}(\mathbf{y}) \rightarrow 1$ for $\by\in\neigh(\bx)$, even when $c_{n}(\mathbf{x}) / c_{n}(\mathbf{y}) \rightarrow 1$. A reason is because the sums $c_{n, \nu}(\mathbf{x})$ and $c_{n, -\nu}(\mathbf{y})$ are in this case not over the same number of terms, a consequence of the nature of the lifted sampler.

As previously, the reversible counterpart to the lifted algorithm chooses at each iteration uniformly at random a proposal distribution between $q_{\mathbf{x}, -1}$ and $q_{\mathbf{x}, +1}$ from which a proposal is sampled. Imagine that $c_n(\mathbf{x}) / c_n(\mathbf{y}) = 1$ for all $\bx, \by$, then one can notice from \eqref{eqn_acc_prob_inf_lifted} that the stability of ratios $c_{n, \nu}(\mathbf{x}) / c_{n, -\nu}(\mathbf{y})$ is crucial to establish a connection between the sub-stochastic parts of $P_{\rev,n}$ and $P_{\MH,n}$ (recall \eqref{eq:prev}). In fact, in an ideal situation, which is again incompatible with most statistical models, one can establish that $P_{\rev,n} = P_{\MH,n}$, guaranteeing a dominance of the lifted sampler.

\begin{Corollary}\label{cor_domination_lifted} Let $n\in\nset$. If $\Xset_n$ is finite and $c_{n, -1}(\mathbf{x}) = c_{n, +1}(\mathbf{x})$, for all $\mathbf{x} \in \Xset_n$, then for any function $f_n: \Xset \times \{-1, + 1\} \to \rset$ such that $f_n(\bx, -1) = f_n(\bx, +1)$, we have $\vara(f_n, P_{\rho, n}) \leq \vara(f_n, P_{\text{MH}, n})$.
\end{Corollary}

Locally-balanced proposal distributions allow to explore the state-space by often proposing points that belong to the subset on which the mass concentrates. \autoref{cor_domination_lifted} tells us that, in order  to compare $P_{\rev,n}$ to $P_{\MH,n}$  (and thus $P_{\rho, n}$ to $P_{\MH,n}$),  the directional neighbourhoods to which these points belong must have similar mass, implying similar normalizing constants $c_{n, -1}(\bx)$ and $c_{n, +1}(\bx)$ over the subset. The analysis can be pushed beyond \autoref{cor_domination_lifted}  by making use of our asymptotic framework. To simplify, we consider, as in \cite{zanella2019informed},  the situation where $\sup c_{n}(\mathbf{x}) / c_{n}(\mathbf{y}) \rightarrow 1$ where the supremum is over all neighbouring states $\bx, \by$, and $\overline{\omega} = 1$.

We now turn to the definition of the control subset:
\begin{align}\label{eq:subset_loc}
\tXset_n &= \{\bx\in\Xset_n: 1-\beta(n)/(c_n(\bx)/2) \leq c_{n,\nu}(\bx)/(c_n(\bx)/2)  \leq 1+\beta(n)/(c_n(\bx)/2)\} \\
&= \{\bx\in\Xset_n:  |c_{n,-1}(\bx) - c_{n,+1}(\bx)|  \leq 2\beta(n)\}, \nonumber
\end{align}
which again is defined through a function $\beta:\nset\to(0,\infty)$. The equivalence between the sets follows from the fact that $c_n(\mathbf{x}) = c_{n, -1}(\mathbf{x}) + c_{n, +1}(\mathbf{x})$. Under assumptions on the target such as those in \cite{zanella2019informed}, the normalizing constants $c_n(\bx)$ scale linearly with $n$ and below we show that lifted and MH samplers can be compared in terms of asymptotic variances when $\beta(n)=o(n)$, because in this case for states in $\tXset_n$, $\beta(n)/(c_n(\bx)/2)$ vanishes and the acceptance probabilities in the lifted sampler are close to 1, as those in MH. Notice that in the case of locally-balanced samplers, we cannot state explicitly what the interior of $\tXset_n$ is without specifying $\pi_n$. With the current level of generality, we cannot go beyond the definition presented in \autoref{sec:weak_peskun}, which in the framework of this section is $\itXset_n := \{\bx \in \tXset_n: q_{\bx}(\tXset_n^\mathsf{c}) = 0\}$.

As in the previous section, the analysis can be done by considering instead that the mass concentrates on states where the minimum between $c_{n,-1}(\bx)$ and $c_{n,+1}(\bx)$ is not too far from $c_n(\bx)/\kappa$ with $\kappa \geq 2$. In this case, $\overline{\omega} \leq 1$ and a function of $\kappa$, and the definition of the control subset and results are more complex. From the definition of $\tXset_n$ in \eqref{eq:subset_loc}, we are able to establish a result analogous to \autoref{lemma:omega_unif}.
\begin{Lemma}
\label{lemma:omega_loc}
Consider the definition of $\tXset_n$ in \eqref{eq:subset_loc} and let $R_n:=\{(\bx,\by)\in\tXset_n^2\,:\,\by\in\neigh(\bx)\}$. Assume that
$$
\inf_{(\bx,\by)\in R_n}g\left(\frac{\pi_n(\by)}{\pi_n(\bx)}\right)\geq m\,,\quad \tau_n:=\sup_{(\bx,\by)\in R_n}\frac{c_n(\bx)}{c_n(\by)}\to 1\,,\quad \beta(n)=o(n)\,,
$$
with $m$ independent of $n$. Then, for a large enough $n$, it holds that $\tP_{\rev, n}(\bx, \by)\geq \omega(n) \tP_{\MH, n}(\bx,\by)$, for all $(\bx,\by)\in\tXset_{n}^2$ with $\bx\neq \by$, where
\begin{align*}
\omega(n)=\left(1+\frac{\beta(n)}{nm/2}\right)^{-1}\left(\frac{1-2\beta(n)/nm}{1+2\tau_n\beta(n)/n m}\right) \rightarrow \overline{\omega} = 1\,.
\end{align*}
\end{Lemma}

Clearly, under the assumptions of \autoref{lemma:omega_loc} and that $\pi_n$ concentrates on $\itXset_n$, Assumptions \ref{ass:1} and \ref{ass:2} are satisfied and we can apply \autoref{thm:1} or \autoref{thm:2}  with $\overline{\omega}=1$, depending on whether the spectral gaps are bounded away from 0 or not. This gives an asymptotic ordering between $P_{\MH,n}$ and $P_{\rev,n}$, and thus between $P_{\MH,n}$ and $P_{\rho, n}$.

It is expected that lifted samplers only have an advantage when there is room for persistent movement, meaning that they can explore the state-space by using paths of considerable lengths.
The analysis conducted in the current section shows that lifted samplers using locally-balanced proposal distributions are expected to have an advantage when, additionally, the mass does not vary much from a directional neighbourhood to another on the subset on which $\pi_n$ concentrates. These samplers are expected to be comparable to their MH counterparts when, on the subset, the normalizing constants
$c_{n,-1}(\bx)$ and $c_{n,+1}(\bx)$ are bounded by $c_n(\bx)/\kappa\pm \beta(n)$ with $\kappa>2$.

\section{Simulation of a simple Ising model: A case study}\label{sec:simple_Ising}

The sampling method developed in \autoref{sec:lifted_sampler} and results presented in \autoref{sec:specific_samplers} are illustrated through several examples. In this section, we proceed by studying a simple Ising model that allows for an explicit definition of $\tXset_n$ and $\itXset_n$ when using locally-balanced samplers, and a verification of the assumptions of \autoref{thm:2}. As mentioned in \autoref{sec:organization}, we study in \autoref{sec:num_experiments} more complex problems (including the simulation of a Ising model which is more complex) for which an explicit definition of $\tXset_n$ and $\itXset_n$ and a veriﬁcation of the assumptions is beyond the scope of the manuscript.

The model that we study here is the following:
\begin{align}\label{eq:simple_Ising}
 \pi_n(\bx) = \frac{1}{Z_n} \exp\left(\sum_{i = 1}^n \alpha_i x_i\right), \quad \bx = (x_1, \ldots, x_n) \in \{-1, +1\}^n,
\end{align}
where $Z_n$ is the normalizing constant and $\alpha_i \in \re$. This model can be thought of as an Ising model with a single parameter $\boldsymbol\alpha_n := (\alpha_1, \ldots, \alpha_n)$ which is often referred to as the \textit{external field}. This parameter essentially tends to polarize each spin. The difference with classical Ising models like that in \autoref{sec:Ising} is that the model defined in \eqref{eq:simple_Ising} does not possess a spatial-correlation parameter. We can think of this model as being defined on a square lattice (with $x_1, \ldots, x_\eta$ being the values of the components on the first line, $x_{\eta + 1}, \ldots, x_{2\eta}$ being the values of the components on the second line, and so on), but by omitting the spatial correlation, the form on which the model is defined is actually not important. As mentioned in \autoref{sec:organization}, this simplified model can be seen as an approximation to the high temperature model. A common problem in statistical physics is to estimate the average \textit{magnetisation} of an Ising model, the magnetisation being defined as the mapping $\bx \mapsto \sum_{i=1}^n x_i$.

For the study conducted here, we consider the following simplified situation: $n$ is even, $\alpha_i = \pm c$ with $c$ a positive constant, and $|\{i: \alpha_i = -c\}| = |\{i: \alpha_i = c\}| = n / 2$, implying that the number of elements in the external field with the value $-c$ is the same as the number of elements with the value $c$. In our study, we focus on locally-balanced samplers and consider to simplify that $g$ is a monotonically increasing function, which is the case for the two functions mentioned in \autoref{sec:zanella}, namely $g(x) = \sqrt{x}$ and $g(x) = x / (1 + x)$.

In the simplified situation described above, we have that
\begin{align}\label{eqn:target_simple_Ising}
 \pi_n(\bx) \propto \exp\left(\sum_{\{i: \alpha_i x_i = +c\}} c + \sum_{\{i: \alpha_i x_i = -c\}} -c\right) &= \exp\left(c(|\{i: \alpha_i x_i = +c\}| - |\{i: \alpha_i x_i = -c\}|)\right) \nonumber \\
 &= \exp\left(c(n - 2|\{i: \alpha_i x_i = -c\}|)\right) \nonumber \\
 &\propto \exp\left(- 2c|\{i: \alpha_i x_i = -c\}|\right).
\end{align}
From the expression in \eqref{eqn:target_simple_Ising}, we easily deduce that the mode, denoted by $\bx^*$, is such that $|\{i: \alpha_i x_i^* = -c\}| = 0$, and that all the other values of $\pi_n$ are characterized by $|\{i: \alpha_i x_i = -c\}|$. Let us define $d(\bx) := |\{i: \alpha_i x_i = -c\}|\in \{0, \ldots, n\}$, which can be seen as a distance from the mode. We make the dependence on $n$ implicit to simplify. With the expression in \eqref{eqn:target_simple_Ising}, we have a better understanding of the model and how to compute probabilities of different events.

To motivate the use of our weak Peskun ordering for a comparison between the lifted and MH samplers, we provide a result about an inequality on the transition probabilities when considering the whole state-space.

\begin{Proposition}\label{prop:simple_ising_1}
  Within the framework described in this section, we have the following lower bound:
  \[
   P_{\rev, n}(\bx, \by)\geq (1 / 2) P_{\MH, n}(\bx,\by),
  \]
  for all $(\bx, \by) \in \Xset_n^2$ with $\bx \neq \by$, and for all $n$. Also, we have the following upper bound:
  \[
   P_{\rev, n}(\bx, \by)\leq (1 / 2) P_{\MH, n}(\bx,\by) \left(1 - \frac{g(\exp(2c)) - g(\exp(-2c))}{n g(\exp(-2c))}\right)^{-1},
  \]
  for certain $(\bx, \by) \in \Xset_n^2$ with $\bx \neq \by$, when $n > \exp(2c) - 1$. It is thus essentially not possible to obtain a better lower bound than that above.
 \end{Proposition}
 \autoref{prop:simple_ising_1} implies that the ordering based on Lemma 33 of \cite{andrieu2018uniform} is the following:
\begin{align}\label{eqn:asym_var_whole}
 \vara(f_n, P_{\rho,n}) \leq \vara(f_n, P_{\rev, n}) \leq 2 \, \vara(f_n, P_{\MH, n}) + 1,
\end{align}
for any $f_n \in \Ltwoz(\pi_n)$.

 We now turn to an analysis with an objective of applying our weak Peskun ordering. Our analysis allows to show that we can obtain tighter bounds on asymptotic variances when focusing on a subset of the state-space. The first step of such an analysis is to define $\tXset_n$ and understand which states belong to $\itXset_n$. We thus start with a result which will motivate a simple and explicit definition of $\tXset_n$ that we will connect to that in \eqref{eq:subset_loc}, and from which an explicit characterization of $\itXset_n$ will be easily deduced.

 \begin{Proposition}\label{prop:simple_ising_2}
  Within the framework described in this section, we have that for any $\bx$ and $n$,
  \[
  1-\frac{\left(g(\exp(2c)) + g(\exp(-2c))\right) d(\bx)/2}{c_n(\bx)/2} \leq \frac{c_{n,\nu}(\bx)}{c_n(\bx)/2}  \leq 1+\frac{\left(g(\exp(2c)) + g(\exp(-2c))\right) d(\bx)/2}{c_n(\bx)/2}.
  \]
 \end{Proposition}

 \autoref{prop:simple_ising_2} indicates that setting $\tXset_n := \{\bx: d(\bx) \leq \lfloor \phi(n) \rfloor\}$ with $\phi$ a monotonically increasing function allows to have a control on the ratio of normalizing constants of $q_{\bx}$ and $q_{\bx, \nu}$, and thus on the difference between $P_{\rev, n}$ and $P_{\MH, n}$. In particular, it allows to verify the inequality in \eqref{eq:subset_loc} with $\beta(n) = \left(g(\exp(2c)) + g(\exp(-2c))\right) \lfloor \phi(n) \rfloor / 2$, even though $\tXset_n$ is not defined as in \eqref{eq:subset_loc}. This is because
  \begin{align*}
  \tXset_n \subset \{\bx\in\Xset_n: 1-\beta(n)/(c_n(\bx)/2) \leq c_{n,\nu}(\bx)/(c_n(\bx)/2)  \leq 1+\beta(n)/(c_n(\bx)/2)\}.
 \end{align*}
 From our definition of $\tXset_n$, we can deduce that $\itXset_n = \{\bx: d(\bx) \leq \lfloor \phi(n) \rfloor - 1\}$. With those characterizations of $\tXset_n$ and $\itXset_n$, we easily understand which states belong to those subsets (comparatively to the definition of $\tXset_n$ in \eqref{eq:subset_loc} and that of $\itXset_n$ that follows from it), and thus how to compute probabilities like $1 - \pi_n(\itXset_n)$.

 Now that we have define the subsets $\{\tXset_n\}$, from which $\{\itXset_n\}$ are deduced, the next step is to verify whether the mass concentrates on $\itXset_n$ (\autoref{ass:2}). In our framework, $\itXset_n$ depends on the definition of $\phi(n)$. We present a result which indicates how to set $\phi(n)$ to obtain a mass concentration on $\itXset_n$ .

  \begin{Proposition}\label{prop:simple_ising_3}
  Within the framework described in this section, if $\lfloor\phi(n)\rfloor \leq n \frac{\exp(-2c)}{1 + \exp(-2c)}$, then $1 - \pi_n(\itXset_n)$ does not converge to 0. If $n > \lfloor\phi(n)\rfloor > n\frac{\exp(-2c)}{1 + \exp(-2c)}$ with $\lfloor\phi(n)\rfloor / n - \frac{\exp(-2c)}{1 + \exp(-2c)}$ converging towards a positive constant, then $1 - \pi_n(\itXset_n)$ converges to 0 at an exponential rate.
\end{Proposition}

\autoref{prop:simple_ising_3} indicates that setting $\phi(n) = o(n)$ does not allow for a mass concentration on $\itXset_n$. The result is thus somewhat negative as it prevents us to apply the results of \autoref{sec:specific_samplers}, in particular \autoref{lemma:omega_loc}, and forces us to exploit the structure of the current problem to establish a refined order between $P_{\rev, n}$ and $P_{\MH, n}$ (\autoref{ass:1}). \autoref{prop:simple_ising_3} indicates that, to obtain a mass concentration (at an exponential rate), we have to enlarge $\tXset_n$ and include states that are further away from the mode. We now establish a refined order between $P_{\rev, n}$ and $P_{\MH, n}$ on $\tXset_n$, when setting $\phi(n) = n\frac{\exp(-2c)}{1 + \exp(-2c)}(1 + \varepsilon) + 1$ with $\varepsilon > 0$ arbitrarily small (which is essentially the best choice of $\phi(n)$ that ensures that $\lfloor\phi(n)\rfloor > n\frac{\exp(-2c)}{1 + \exp(-2c)}$ with $\lfloor\phi(n)\rfloor / n - \frac{\exp(-2c)}{1 + \exp(-2c)}$ converging towards a positive constant).

\begin{Proposition}\label{prop:simple_ising_4}
  Within the framework described in this section and with $\phi(n) = n\frac{\exp(-2c)}{1 + \exp(-2c)}(1 + \varepsilon) + 1$, we have that
  \[
   P_{\rev, n}(\bx, \by)\geq \omega(n) P_{\MH, n}(\bx,\by),
  \]
  with
  \[
   \omega(n) \rightarrow \frac{1}{2} + \frac{1}{2} \frac{1 - 2  \frac{\exp(-2c)(1 + \varepsilon)}{1 + \exp(-2c)}}{1 + 2 \frac{1 + \varepsilon}{1 + \exp(-2c)}} = \overline{\omega},
  \]
  for all $(\bx, \by) \in \tXset_n^2$ with $\bx \neq \by$.
 \end{Proposition}

 \autoref{prop:simple_ising_4} highlights a dependence of $\overline{\omega}$ on the value of $c$: a smaller value of $c$ yields a larger $\phi(n)$ which results in a larger subset $\tXset_n$ and a possibility of more unbalanced ratios of normalizing constants of $q_{\bx}$ and $q_{\bx, \nu}$, and vice versa. We present in \autoref{fig:omega_bar} $\overline{\omega}$ as a function of $c$, without the factors $1+\varepsilon$ that can be made arbitrarily close to 1.

  \begin{figure}[ht]
  \centering
  \includegraphics[width=0.45\textwidth]{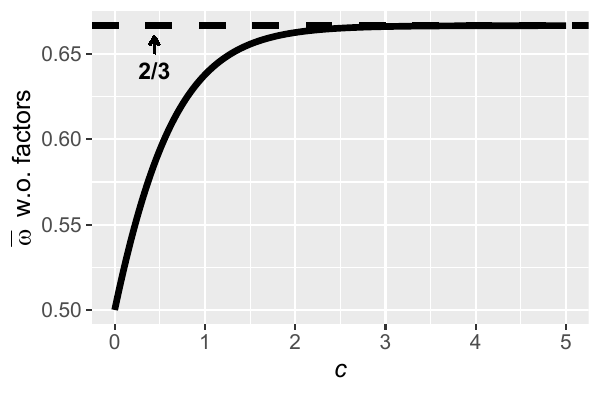}
  \caption{\small $\overline{\omega}$ as a function of $c$, without the factors $1+\varepsilon$.}\label{fig:omega_bar}
 \end{figure}
\normalsize

Provided that the spectral gaps of $P_{\rev, n}$, $\tP_{\rev, n}$, $P_{\MH, n}$ and $\tP_{\MH, n}$ do not decrease too quickly as $n$ increases (a result about that follows), \autoref{prop:simple_ising_3} together with \autoref{prop:simple_ising_4} ensure that \autoref{thm:2} can be applied for a class of functions, yielding
 \[
 \vara(f_n, P_{\rho,n}) \leq \vara(f_n, P_{\rev,n}) \leq \frac{1}{\overline{\omega} - \epsilon} \vara(f_n, P_{\MH,n}) + \frac{1}{2}\left(\frac{1}{\overline{\omega} - \eps} + \frac{1}{\overline{\omega}}\right) - 1  + \frac{\eps}{2},
\]
for any $\eps \in (0, \overline{\omega})$, provided that $n$ is large enough. When $c$ is large enough, we essentially have an upper bound of $ (3/2) \vara(f_n, P_{\MH,n}) + 1/2$, comparatively to what is obtained in \eqref{eqn:asym_var_whole}.

The advantage of this example is that it is standard, easy to understand, and simple enough to prove mass-concentration results, precise orderings between $P_{\rev, n}$ and $P_{\MH, n}$, and spectral-gap bounds. This simplicity follows from an independence between the components of $\bx$ and the steady decrease in mass by a factor of $\exp(-2c)$ as getting away from the mode, regardless of which components of $\bx$ are flipped and become misaligned with the external field. This steady, but relatively slow, decrease in mass forces us to set $\phi$ to be (essentially) proportional to $n$. This in turn leads to large subsets $\tXset_n$ and thus an improvement in terms of orderings between $P_{\rev, n}$ and $P_{\MH, n}$ which is not optimal, i.e., with $\overline{\omega} < 1$. In \autoref{sec:supp_mat} (\autoref{ex:path}), we \emph{construct} an example (thus an example that is less standard and simple) in which we are able to achieve $\overline{\omega} = 1$ by applying the results of \autoref{sec:specific_samplers}, in particular \autoref{lemma:omega_loc}.

We now present the last piece of evidence that \autoref{thm:2} can be applied. More specifically, we present a result about lower bounds on the spectral gaps of $P_{\rev, n}$ and $P_{\MH, n}$.

 \begin{Proposition}\label{prop:simple_ising_5}
  Within the framework described in this section, $P_{\rev, n}$ and $P_{\MH, n}$ have spectral gaps with lower bounds that decrease to 0 as $n$ increases at a rate of $n \log n$.
 \end{Proposition}

 While we do not prove a result about the spectral gaps of $\tP_{\rev, n}$ and $\tP_{\MH, n}$, there is no reason to believe that these decrease in another regime (for instance, with an exponential rate) given the definition of the sequence of subsets $\{\tXset_n\}$. For small values of $n$, we computed the spectral gaps through a spectral decomposition of $\tP_{\rev, n}$ and $\tP_{\MH, n}$ and the observed rate was polynomial.

  To summarize, the analysis in this section shows that \autoref{thm:2} can be applied for any $\delta > 0$ and $\gamma \in (0, \delta / (2 + \delta))$, when considering the class of functions $f_n$ with a $(2+\delta)$-norm that grows polynomially with $n$ or slower. An example of functions which satisfies this condition is the standardized version of the magnetisation $\bx \mapsto \sum_{i=1}^n x_i$, as indicated by \autoref{prop:simple_ising_6} below.

  \begin{Proposition}\label{prop:simple_ising_6}
  Let $f_n$ be the standardized version of the mapping $\bx \mapsto \sum_{i=1}^n x_i$. Within the framework described in this section, $\|f_n\|_{\pi_n, 4} \rightarrow 3^{1/4}$ as $n \rightarrow \infty$.
 \end{Proposition}

\section{Discussion}\label{sec:discussion}

In this paper, we have introduced a weaker version of the celebrated Peskun ordering \citep{peskun1973optimum} and have used it to analyse a class of lifted samplers designed to sample from distributions whose supports are partially-ordered discrete state-spaces. The weaker ordering does not require to establish a relationship between the Markov kernels on the whole state-space; it is only required to establish a relationship on a subset of the state-space, but the order between the asymptotic variances holds asymptotically, as a varying parameter grows without bound, as long as the mass concentrates on the subset (and provided that performance guarantees hold). This weaker requirement turned out to be useful to analyse some aspects of the lifted samplers and in particular how they compare to their MH counterparts. We have also shown that these lifted samplers can be straightforwardly implemented, at no additional computational cost and complexity, whenever a partial ordering on $\Xset_n$ can be established.

The main contribution of our analysis of the lifted samplers in \autoref{sec:specific_samplers} is to provide insights into the situations in which they are expected to outperform their MH counterparts, and also into those in which there is no guarantee. The analysis conducted shows that lifted samplers are expected to have an advantage when the mass does not vary much from a directional neighbourhood to another on the subset on which $\pi_n$ concentrates and when that subset allows the samplers to experience constant-momentum excursions. It is when they experience constant-momentum excursions of considerable lengths that the lifted samplers shine. While this point was reasonably well understood by the MCMC community, the merit of that part of our research presented in \autoref{sec:specific_samplers} has been to provide a rigorous analysis framework, which, de facto, can be used to study similar problems, perhaps some for which one does not have a clear intuition. Our analysis was conducted under a general framework, without focusing on specific statistical models or systems, explaining why we were not in a position to explicitly verify the assumptions of Theorems \ref{thm:1} and \ref{thm:2}. We dug deeper and provided a thorough analysis in a context of simulation of a simple Ising model in \autoref{sec:simple_Ising}, where the normalizing constants $c_{n, \nu}(\bx)$ and $c_n(\bx)$ have simple expressions, to take the study of lifted samplers one step further and to provide a concrete example of verification of the assumptions of \autoref{thm:2}.

One of the shortcomings of the application of our theoretical results to lifted samplers is that it does not give any quantitative measurement of the improvement offered by a lifted sampler over its MH counterpart when estimating $\pi_n f_n$, meaning that they are not such that $\vara(f_n,P_{\rho, n})\leq \omega_n \vara(f_n,P_{\text{MH},n})+ \text{error}$ for some $\omega_n > 1$. Indeed, our analysis only allows to establish an inequality, but in the case where (essentially) $\omega_n \leq 1$. This a consequence of the route we followed to compare the asymptotic variances of the lifted and MH samplers:
$$
\vara(f_n, P_{\rho, n})\leq \vara(f_n,P_{\text{rev}.,n})\leq \frac{1}{\omega_n} \, \vara(f_n,P_{\text{MH},n})+ \frac{1}{\omega _{n}} - 1 + \text{error}.
$$
In particular, no quantitative reduction factor is provided in the first inequality, which is expected given that this inequality holds in great generality (for any $f_n$ and any $\pi_n$). Given that $P_{\text{rev}.,n}$ and $P_{\text{MH},n}$ are, at best, similar and in fact, as mentioned in \autoref{sec:weak_peskun}, $\omega_n$ is usually larger than one, a way to have a quantitative variance improvement factor is to obtain a different inequality between $\vara(f_n, P_{\rho, n})$ and $\vara(f_n,P_{\text{rev}.,n})$ by leveraging an advantageous structure of the target distribution when it exists. We believe that this is possible, yet difficult, as the analysis needs to take into account the time duration of constant-momentum excursions conducted by the lifted sampler. This typically involves an analysis of $k$-step transition kernels with $k >1$ because it is only after $k$ transitions starting from a state $\bx$ that we start to see a significant difference between lifted samplers and their non-lifted and MH counterparts.

Our work can also be extended in another direction: the theoretical result can be generalized to general state-spaces and the lifted samplers can be applied in cases where there exist partial orders on these general state-spaces. However, our proofs implicitly assume that the Markov kernels are uniformly ergodic and it would be interesting to see how this assumption can be relaxed.

A methodological question which has been unaddressed in the paper is that of the choice of the partial order. If a specific state-space admits a partial order, it needs not be unique and its choice may significantly impact the sampler. Indeed, some choices may guarantee more than others those aforementioned constant-momentum excursions. If specifically interested in the estimation of $\pi_n f_n$ for a particular $f_n$, one could also design the partial order based on $f_n$, in the spirit of \citet{faizi2020efficient}.

Finally, in terms of applications of the theoretical work on the weak Peskun ordering, it would be interesting to consider the particular case of Bayesian models where a Bernstein von-Mises theorem holds. Comparing two MCMC methods sampling from the corresponding posterior distribution, our result suggests that one only needs to compare those samplers locally around a realization of a consistent parameter estimator. A question that naturally arises in this context is: is it possible to have a precise estimate of the sample size beyond which the approximate asymptotic-variance ordering holds? From a methodological standpoint this would motivate the design of samplers that are particularly efficient near the parameter estimate, perhaps at the expense of their behaviour in the tails of the distribution.

\section*{Acknowledgements}

The authors thank two anonymous referees for constructive comments that led to an improved manuscript.

\section*{Funding}

Philippe Gagnon acknowledges support from NSERC (Natural Sciences and Engineering Research Council of Canada) and FRQNT (Fonds de recherche du Québec -- Nature et technologies). Florian Maire acknowledges support from NSERC.

\bibliographystyle{imsart-nameyear}
\bibliography{reference}

\appendix

\section{Numerical experiments}\label{sec:num_experiments}

In this section, we conduct numerical experiments that allow to corroborate the theoretical findings presented in \autoref{sec:specific_samplers} about the lifted samplers. We focus on findings about locally-balanced samplers. We first consider in \autoref{sec:Ising} the simulation of an Ising model and use this as a toy example for which we can control the dimension and the roughness of the target. We show that specific combinations of these parameters are favourable for lifted samplers, in the sense that the mass in directional neighbourhoods varies smoothly over a subset of considerable size, suggesting the existence of subsets $\tXset_n$ defined as in \eqref{eq:subset_loc} and interiors on which the mass concentrates. For these combinations of parameters, lifted samplers outperform MH ones. Other combinations are unfavourable, and the opposite happens. The findings are consistent with those of Sections \ref{sec:specific_samplers} and \ref{sec:simple_Ising}. In \autoref{sec:variable_selection}, a real variable-selection problem yields a target which is favourable for lifted samplers (in the same sense as above), and again lifted samplers outperform MH ones.

\subsection{Ising model}\label{sec:Ising}

Let us consider the two-dimensional Ising model. For this model, the state-space $(V_\eta, E_\eta)$ is a $\eta \times \eta$ square lattice regarded here as a square matrix in which each element takes either the value $-1$ or $+1$. We write each state as a vector as before: $\mathbf{x} = (x_1, \ldots, x_n)$, where $n = \eta^2$. The states can be encoded as follows: the values of the components on the first line are $x_1,  \ldots, x_\eta$, those on the second line $x_{\eta + 1}, \ldots, x_{2\eta}$, and so on. The PMF is given by
\[
 \pi(\mathbf{x}) = \frac{1}{Z} \exp\left(\sum_{i} \alpha_i x_i + \lambda \sum_{\langle i j \rangle} x_i x_j\right),
\]
where $\alpha_1, \ldots, \alpha_n \in \re$ and $\lambda > 0$ are fixed parameters, $Z$ is the normalizing constant and the notation $\langle i j \rangle$ indicates that sites i and j are nearest neighbours. The notion of neighbourhood on $(V_\eta, E_\eta)$ should not be confused with that on $\Xset_n$ on which the samplers rely. The neighbourhood of a site $i\in V_\eta$ comprises, when they exist, its North-South-East-West neighbours on the lattice. Note that we make the dependence of the target on the parameters and $n$ implicit to simplify.

The role of the parameters in this Ising model are worth being explained. The parameter $\lambda$ is a spatial correlation parameter: the larger it gets, the larger are the chances that two neighbouring nodes share the same spin state. Realizations from such models when $\lambda$ is large  are thus likely to be lattices featuring large patches of identical spin states. The parameter $\boldsymbol\alpha := (\alpha_1, \ldots, \alpha_n)$, often referred to as the external field, essentially tends to polarize each spin, regardless its neighbours. In particular, when $\alpha_i$ decreases, $x_i$ has an increasing tendency to align with a negative spin, that is $x_i=-1$. If $|\alpha_i| \gg \lambda$ for all $i$, the dependency structure in the lattice is negligible and thus spins tend to align with the external field. Conversely, if $\lambda \gg |\alpha_i|$ for all $i$, spins in a vicinity tend to align with one another.

We first consider a base target distribution for which $n = 50^2$, the spatial correlation is moderate and more precisely $ \lambda = 0.5$, and which has the external field presented in \autoref{fig_base_target_Ising}.
\begin{figure}[ht]
\begin{center}
    \includegraphics[width=0.44\textwidth]{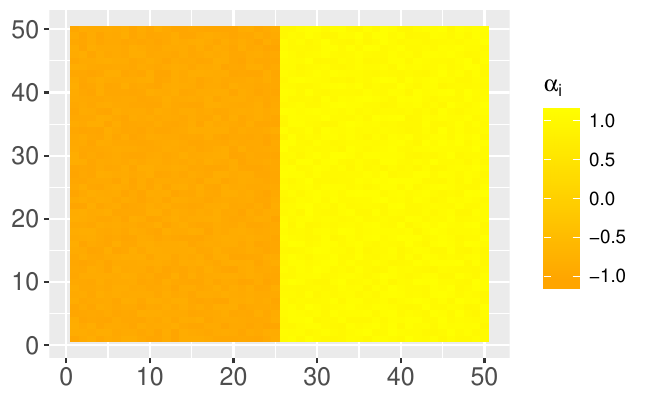}
\end{center}
  \vspace{-2mm}
\caption{\small External field of the base target.}\label{fig_base_target_Ising}
\end{figure}
\normalsize
We generated the $\alpha_i$ independently as follows:  $\alpha_i = -\mu + \epsilon_i$ if the column index is smaller than or equal to $\ell := \lfloor \eta / 2 \rfloor$ and $\alpha_i = \mu + \epsilon_i$ otherwise, where $\mu = 1$, the $\epsilon_i$ are independent uniform random variables on the interval $(-0.1, +0.1)$ and $\lfloor\, \cdot \,\rfloor$ is the floor function. In this setup, while the mild external field tends to push spins on the left-hand side (LHS) of the lattice to $-1$ and those on the right-hand side (RHS) to $+1$, the moderate spatial correlation tends to make likely lattices with $-1$ on the RHS near the centre and $+1$ on the LHS near the centre. This makes the target moderately rough, in the sense that it concentrates on a subset of the state-space with directional neighbourhoods on the subset that have a smoothly varying mass. This subset can be thought of as the subset $\tXset_n$ which is the central ingredient of Theorems \ref{thm:1} and \ref{thm:2}. The characteristic of the subset suggests that $\tXset_n$ satisfies the definition in \eqref{eq:subset_loc}, implying that such a base target represents a favourable scenario for lifted samplers with locally-balanced proposal distributions as described in \autoref{sec:zanella}. We will notice that it is indeed a favourable scenario and observe what happens when modifying target-parameter values.

We now describe the simulation study.
\begin{itemize}
\itemsep 0mm
\item While keeping the other parameters fixed, we first gradually increase $\eta$ from 50 to 500 to observe the impact of dealing with larger systems, for targets that are moderately rough. This will thus lead to longer paths along which the state-space can be explored, which is again favourable for lifted samplers. The numerical experiment will allow to measure an increasing difference in performance between lifted samplers and MH ones, which is not possible with results such as Theorems \ref{thm:1} and \ref{thm:2}.
\item Next, we gradually increase the value of $\mu$ from 1 to 3, while keeping the other parameters fixed (with $\eta = 50$). This increases the contrast in \autoref{fig_base_target_Ising}. When $\mu$ increases, there is less and less chance to observe negative (positive) spins on the RHS (LHS), even near the centre, thus making the target rougher and concentrated on fewer configurations.  In the limit, the set of possible lattices shrinks to the one lattice dictated by the external field with $-1$'s on the LHS and $+1$'s on the RHS. This suggests that in extreme cases, it becomes difficult to define a subset $\tXset_n$ as in \eqref{eq:subset_loc}, while keeping the concentration level reasonable, with an interior on which the mass concentrates because such a $\tXset_n$ is too small implying that its interior is too small as well (or even non-existent), in turn suggesting that the assumptions of \autoref{thm:1} or \autoref{thm:2} do not hold. In the experiment, when the value of $\mu$ is beyond a threshold, MH samplers become more efficient than lifted ones.
\end{itemize}
One could vary $\lambda$ and $\ell$ as well. Varying $\lambda$ also makes the target rougher and concentrated on fewer configurations. We thus do not do it to avoid redundancy. Varying $\ell$ is expected to have a more important impact on the uniform lifted sampler than the other samplers because it modifies the location of the area where the mass concentrates. We do not present the associated results because the graph is uninteresting: the performance is essentially constant for the locally-balanced samplers and that of the uniform ones is so low that we do not see the ESS vary.

We present the simulation results in \autoref{fig_results_Ising} for \autoref{algo1} with uniform and locally-balanced proposal distributions, and their MH counterparts. Locally-balanced samplers use the Barker proposal distribution with $g(x) = x / (1 + x)$. For a simulation study such as that conducted here, it would be simply too long to obtain the results for \autoref{algo2} with $\rho_{\nu}^*$ \eqref{eqn_rho_star}. The results are based on 1,000 independent runs of 100,000 iterations for each algorithm and each value of $\mu$ and $\eta$, with burn-ins of 10,000. For each run, an ESS per iteration is computed for $f(\mathbf{x}, -1) = f(\mathbf{x}, +1) = \sum_i x_i$ and then the results are averaged out. This function is proportional to what is called \textit{magnetisation} in a Ising-model framework. Monitoring such a statistic is relevant as a quicker variation of its value (leading to a higher ESS) indicates that the whole state-space is explored quicker.

For the base target (represented by the starting points on the left of the lines in \autoref{fig_results_Ising}), the mass is, as mentioned, concentrated on a subset of many configurations with, on the subset, a mass that does not vary too much from a directional neighbourhood to another. The locally-balanced lifted sampler takes advantage of this and induces persistent movement on the subset: it is approximately 7 times more efficient than its MH counterpart. The gap widens as $\eta$ increases (\autoref{fig_results_Ising} (a)), a consequence of longer paths that the locally-balanced lifted sampler efficiently follows; it is approximately 20 and 70 times more efficient when $\eta$ is 3.2 and 10 times larger (i.e.\ when $n$ is 10 and 100 times larger), respectively. We evaluated that the ratio of ESSs increases linearly with $\eta$, indicating that the locally-balanced lifted sampler scales better than its MH counterpart. The samplers with uniform proposal distributions perform poorly (the lines are on top of each other).

As $\mu$ increases (\autoref{fig_results_Ising} (b)), the target becomes rougher and concentrated on fewer configurations. When the roughness and concentration level are too severe the performance of the locally-balanced lifted sampler stagnates, whereas that of its MH counterpart continues to improve. When the roughness and concentration level are too severe and the samplers are at the mode, the MH sampler has an advantage. When the chain leaves the mode, it always has the possibility to return to it the following iteration. The chain simulated by the lifted sampler cannot because it is forced to try continuing in the same direction. Also, when the mass is concentrated on few configurations, it leaves not much room for persistent movement for the lifted sampler, and it thus loses its advantage.

\begin{figure}[ht]
\centering
$\begin{array}{cc}
    \hspace{-0mm} \includegraphics[width=0.50\textwidth]{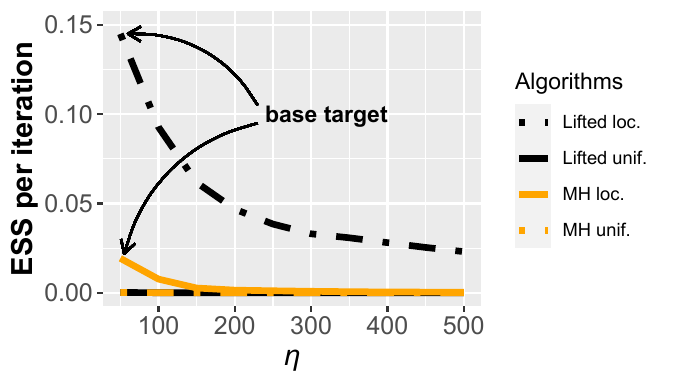} & \hspace{-2mm} \includegraphics[width=0.50\textwidth]{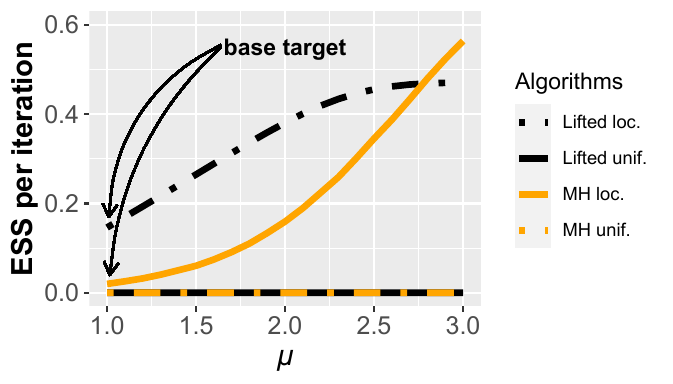} \cr
    \hspace{-11mm} \textbf{(a)} & \hspace{-12mm} \textbf{(b)}
\end{array}$
  \vspace{-2mm}
\caption{\small ESS per iteration of $f(\mathbf{x}, -1) = f(\mathbf{x}, +1) = \sum_i x_i$ for \autoref{algo1} with uniform and locally-balanced proposal distributions and their MH counterparts when: (a) $\eta$ increases from 50 to 500 and the other parameters are kept fixed ($\mu = 1$ and $\lambda = 0.5$); (b) $\mu$ increases from 1 to 3 and the other parameters are kept fixed ($\eta = 50$ and $\lambda = 0.5$).}\label{fig_results_Ising}
\end{figure}
\normalsize

\subsection{Variable selection: US crime data}\label{sec:variable_selection}

In this section, we contrast the performance of the lifted samplers with that of their MH counterparts when applied to solve a real Bayesian variable-selection problem. The data are for a study of crime rate in the United States in 1960. They were aggregated by state and were from 47 states. They were first presented in \cite{erhlich1973participation} and then expanded and corrected in \cite{vandaele1978participation}. These authors were in particular interested in studying the connection between crime rate and 15 covariates (some were added by \cite{vandaele1978participation}) such as percentage of males of age between 14 and 23 and mean years of schooling in a given state. They were analysed in several statistics papers, for instance in \cite{raftery1997bayesian} in a context of model averaging, and are available in the R package \texttt{MASS}.

The data are modelled using a linear regression with normal errors. Here we set the prior distribution of the regression coefficients and scaling of the errors to be, conditionally on a model, the non-informative Jeffreys prior. It can be shown (analogously to in \cite{gagnon2017PCR} in a context of principal component regression) that a simple modification to the uniform prior on the model indicator, represented here by $\mathbf{X}$, yields a consistent model selection procedure, thus effectively preventing the Jeffreys--Lindley \citep{lindley1957paradox, jeffreys1967prob} paradox from arising. The likelihood function and prior density on the parameters allows for the latter to be integrated out. It is thus possible to evaluate the exact marginal posterior probability of any of the $2^{15} = $ 32,768 models, up to a normalizing constant. We are consequently able to implement the MH sampler with the Barker locally-balanced proposal distribution of \cite{zanella2019informed}  and its lifted counterparts, namely \autoref{algo1} and \autoref{algo2} with $\rho_{\nu}^*$ \eqref{eqn_rho_star}, to sample from $\pi$, which is, in this context, a posterior model distribution. In the previous statistical studies (such as in \cite{raftery1997bayesian}), it was noticed that for many models, the mass varies smoothly; the mass in fact concentrates on the resulting subset of the state-space and does not vary too much from a directional neighbourhood to another on the subset. As with the Ising-model example, this suggests the existence of a subset $\tXset_n$ defined as in \eqref{eq:subset_loc} with a significant size and an order on the asymptotic variances of some functions between lifted samplers and MH ones. Lifted samplers indeed outperform MH ones in this example. In particular, the locally-balanced lifted chains exhibit persistent movement, as seen in \autoref{fig_traces}. We do not show the performance of the uniform samplers because, as in the previous section, it is very poor.

The performances of the algorithms are summarized in \autoref{fig_results_var_sel}. The results are based on 1,000 independent runs of 10,000 iterations for each algorithm, with burn-ins of 1,000. Each run is started from a distribution which approximates the target. On average, \autoref{algo1} and \autoref{algo2} with $\rho_{\nu}^*$ are $2.7$ and $3.3$ times more efficient than their MH counterpart, respectively. The benefits of persistent movement thus compensate for a decrease in acceptance rates; the rate indeed decreases from 0.92 for the MH sampler to 0.71 for \autoref{algo1} and \autoref{algo2} with $\rho_{\nu}^*$ \eqref{eqn_rho_star}. This highlights again the difference in stability of neighbourhood mass versus \emph{directional} neighbourhood mass (recall the difference in the acceptance ratios, \eqref{eqn_acc_prob_inf_MH} and \eqref{eqn_acc_prob_inf_lifted}).

\begin{figure}[ht]
\begin{center}
    \includegraphics[width=0.41\textwidth]{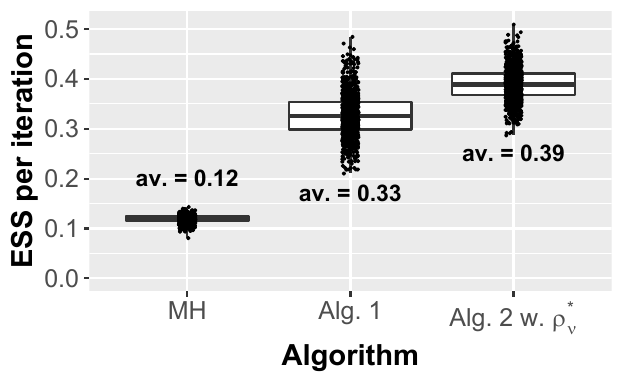}
\end{center}
  \vspace{-2mm}
\caption{\small ESS per iteration for $f(\mathbf{x}, -1) = f(\mathbf{x}, +1) = \sum_i x_i$ of 1,000 independent runs for the MH sampler with the Barker locally-balanced proposal distribution and its lifted counterparts (\autoref{algo1} and \autoref{algo2} with $\rho_{\nu}^*$).}\label{fig_results_var_sel}
\end{figure}
\normalsize

\section{Lifted trans-dimensional sampler}\label{sec:trans-dimensional}

In this section, we introduce a generic sampler that can be used for model selection/averaging in situations where it is not possible to integrate out the parameters, contrarily to the linear regression with normal errors and suitable priors (like in \autoref{sec:variable_selection}). Examples of such situations include analyses based on linear regression with super heavy-tailed errors ensuring \textit{whole robustness} \citep{gagnon2018regression, gagnon2017PCR} and generalized linear models and generalized linear mixed models \citep{forster2012reversible}.

More precisely, in this section, we introduce a trans-dimensional version of \autoref{algo1} which thus represents a non-reversible counterpart to the popular reversible jump (RJ) algorithm introduced by \cite{green1995reversible}. In the same way that \autoref{algo1} can be seen as a modification of a MH algorithm, the non-reversible jump (NRJ) algorithm is constructed from the RJ algorithm. To present our lifted trans-dimensional sampler, it is thus convenient to first provide an overview of the RJ one. A lifted trans-dimensional sampler has been recently introduced in \cite{gagnon2019NRJ}, but it can only be applied when the models can be rearranged in a sequence of nested models, meaning that model 1 is nested in model 2 which is nested in model 3, and so on; in other words, when a total order exists. Only a partial order is sufficient to apply the NRJ proposed here.

In a trans-dimensional framework, we consider that $\Xset$ is a model space and $\mathbf{X}$ a model indicator. The latter indicates, for instance, through a vector of 0's and 1's which covariates are included in the model employed in variable-selection contexts as in \autoref{sec:variable_selection}. In the following, we consider that a neighbourhood structure $\{\neigh(\mathbf{x})\,:\,\bx\in\Xset\}$ is given.  The parameters of a given model $\mathbf{x}$ are denoted by $\boldsymbol\theta_{\mathbf{x}} \in \boldsymbol\Theta_{\mathbf{x}}$. Trans-dimensional algorithms sample from a target distribution $\pi$ defined on a union of sets $\cup_{\mathbf{x} \in \Xset} \{\mathbf{x}\} \times \boldsymbol\Theta_{\mathbf{x}}$, which corresponds in Bayesian statistics to the joint posterior distribution of the model indicator $\mathbf{X}$ and the parameters of model $\mathbf{X}$, that is $\boldsymbol\theta_{\mathbf{X}}$. Such a posterior distribution allows to jointly infer about $(\mathbf{X}, \boldsymbol\theta_{\mathbf{X}})$, or in other words, simultaneously achieve model selection/averaging and parameter estimation. In this section, we assume for simplicity that the parameters of all models are continuous random variables.

We now outline an iteration of a RJ algorithm. Consider that the current state of the Markov chain is given by $(\bx, \btheta_{\bx})$.
 \begin{itemize}

  \item[1.] Sample $u_c \sim \mathcal{U}[0, 1]$.

  \item[2.(a)\hspace{-4.5mm}] \hspace{4mm} If $u_c \leq \tau$, where $0 \leq \tau \leq 1$, attempt a \textit{parameter update}, meaning an update of the parameters of the current model, using a MCMC kernel of invariant distribution $\pi(\, \cdot \mid \mathbf{x})$ while keeping the current value of the model indicator $\mathbf{x}$ fixed.

  \item[2.(b)\hspace{-4.5mm}] \hspace{4.75mm} If $u_c > \tau$, attempt a \textit{model switch}. Sample $\mathbf{y} \sim q_{\mathbf{x}}$ and $u_a \sim \mathcal{U}[0, 1]$, where $q_{\bx}$ is a PMF with support $\neigh(\bx)$. Next, sample $\mathbf{u}_{\bx \mapsto \by} \sim q_{\bx \mapsto \by}$ and compute $\mathcal{D}_{\bx\mapsto \by}(\boldsymbol\theta_{\mathbf{x}}, \mathbf{u}_{\bx \mapsto \by}) =: (\boldsymbol\theta_{\mathbf{y}}', \mathbf{u}_{\by \mapsto \bx})$, where $q_{\bx \mapsto \by}$ is used to denote both the distribution and the probability density function, $\mathcal{D}_{\bx\mapsto \by}$ is a diffeomorphism and $\boldsymbol\theta_{\mathbf{y}}'$ is the proposal for the parameter values of model $\mathbf{y}$. Set the next state of the chain to $(\mathbf{y}, \boldsymbol\theta'_{\mathbf{y}})$ if
\begin{align*}
  u_a \leq \alpha_{\text{RJ}}((\bx, \btheta_{\bx}), (\by, \btheta_{\by}')) := 1 \wedge \frac{q_{\mathbf{y}}(\mathbf{x})}{q_{\mathbf{x}}(\mathbf{y})} \, r((\mathbf{x}, \boldsymbol\theta_{\mathbf{x}}), (\mathbf{y}, \boldsymbol\theta'_{\mathbf{y}})),
\end{align*}
where
\[
 r((\mathbf{x}, \boldsymbol\theta_{\mathbf{x}}), (\mathbf{y}, \boldsymbol\theta'_{\mathbf{y}})) := \frac{\pi(\by, \boldsymbol\theta'_{\mathbf{y}}) \, q_{\by \mapsto \bx}(\mathbf{u}_{\by \mapsto \bx})}{\pi(\bx, \btheta_{\bx}) \, q_{\bx \mapsto \by}(\mathbf{u}_{\bx \mapsto \by}) \, |J_{\mathcal{D}_{\bx \mapsto \by}}(\btheta_{\bx}, \mathbf{u}_{\bx \mapsto \by})|^{-1}},
\]
  and $|J_{\mathcal{D}_{\bx \mapsto \by}}(\btheta_{\bx}, \mathbf{u}_{\bx \mapsto \by})|$ is the absolute value of the determinant of the Jacobian matrix of the function $\mathcal{D}_{\bx \mapsto \by}$; the dependence of the functions $\alpha_{\text{RJ}}$ and $r$ on $\mathbf{u}_{\bx \mapsto \by}$ and $\mathbf{u}_{\by \mapsto \bx}$ is made implicit to simplify. If $u_a > \alpha_{\text{RJ}}((\bx, \btheta_{\bx}), (\by, \btheta_{\by}'))$, set the next state of the chain to $ (\mathbf{x}, \boldsymbol\theta_{\mathbf{x}})$.

  \item[3.] Go to Step 1.
 \end{itemize}

The notation $\bx \mapsto \by$ in subscript is used to highlight a dependence on the model transition that is proposed, which is from model $\bx$ to model $\by$. Recall that a diffeomorphism is a differentiable map having a differentiable inverse. A simple example of a mapping $\mathcal{D}_{\bx \mapsto \by}$ is one where the current parameter value $\boldsymbol\theta_{\mathbf{x}}$ is not involved in the parameter-proposal scheme: $\boldsymbol\theta_{\mathbf{y}}' = \mathbf{u}_{\bx \mapsto \by}$ and $\mathbf{u}_{\by \mapsto \bx} = \boldsymbol\theta_{\mathbf{x}}$, implying that $|J_{\mathcal{D}_{\bx \mapsto \by}}(\btheta_{\bx}, \mathbf{u}_{\bx \mapsto \by})| = 1$.

In the trans-dimensional framework presented above, $\bx \notin \neigh(\bx)$, as before, and $q_{\bx}$ is used conditionally on the fact that a model switch is proposed. The probability of proposing a model switch is $1 - \tau$, $\tau$ representing the probability of proposing a parameter update. In trans-dimensional samplers, the probability of proposing a parameter update is typically allowed to depend on the current state and is incorporated in $q_{\bx}$. By contrast, it is considered constant and not incorporated in $q_{\bx}$ in this framework so as to guarantee the correctness of the non-reversible counterpart of the RJ sampler.

We now consider that a partial order $\R$ has been specified on $\Xset$. In the lifted framework, the state-space is extended to include a direction variable $\nu \in \{-1, +1\}$ to guide the model indicator $\bX$. The state-space and target become $\cup_{\mathbf{x} \in \Xset} \{\mathbf{x}\} \times \boldsymbol\Theta_{\mathbf{x}} \times \{-1, +1\}$ and $\pi \otimes \mathcal{U}\{-1, +1\}$, respectively. Apart from the inclusion of $\nu$ in the algorithm process, there is only one major change made to RJ to yield NRJ: given a current state of $(\bx, \boldsymbol\theta_{\mathbf{x}}, \nu)$ and that a model switch has been proposed, a model $\by$ is proposed using a PMF $q_{\bx, \nu}$ with support $\neigh_\nu(\bx)$, instead of $q_{\bx}$ with support $\neigh(\bx)$. The directional neighbourhoods are defined as before:  $\neigh_{+1}(\mathbf{x}) := \{\mathbf{y} \in \neigh(\mathbf{x}) : \mathbf{x} \prec \mathbf{y}\} \subset \neigh(\mathbf{x})$ and $\neigh_{-1}(\mathbf{x}) := \{\mathbf{y} \in \neigh(\mathbf{x}) : \mathbf{y} \prec \mathbf{x}\} \subset \neigh(\mathbf{x})$. The rest of NRJ is essentially the same as RJ. Given that $q_{\bx, \nu}$ is often defined analogously to $q_{\bx}$, the implementation is thus straightforward for a RJ user that already specified the functions $q_{\bx}$, $\mathcal{D}_{\bx\mapsto \by}$ and $q_{\bx \mapsto \by}$, provided that a partial order can be established on $\Xset$. For a specification of these functions, we refer users to \cite{gagnon2019RJ}, in which a generic procedure yielding fully informed and efficient RJ is presented.

The NRJ algorithm is now presented in \autoref{algo_NRJ} and \autoref{prop_invariance_NRJ} below establishes its correctness. The proof of \autoref{prop_invariance_NRJ} establishes that any valid scheme used for parameter proposals during model switches in the RJ framework, such as those of \cite{karagiannis2013annealed} and \cite{andrieu2018utility}, are also valid in the non-reversible framework.

\begin{algorithm}[ht]
\caption{A lifted trans-dimensional sampler for partially-ordered model spaces} \label{algo_NRJ}
 \begin{itemize}

  \item[1.] Sample $u_c \sim \mathcal{U}[0, 1]$.

  \item[2.(a)\hspace{-4.5mm}] \hspace{4mm} If $u_c \leq \tau$, attempt a parameter update using a MCMC kernel of invariant distribution $\pi(\, \cdot \mid \mathbf{x})$ while keeping the current value of the model indicator $\mathbf{x}$ and direction $\nu$ fixed.

  \item[2.(b)\hspace{-4.5mm}] \hspace{4.75mm} If $u_c > \tau$, attempt a model switch. Sample $\mathbf{y} \sim q_{\mathbf{x},\nu}$, $\mathbf{u}_{\bx \mapsto \by} \sim q_{\bx \mapsto \by}$ and $u_a \sim \mathcal{U}[0, 1]$. Next, compute $\mathcal{D}_{\bx\mapsto \by}(\boldsymbol\theta_{\mathbf{x}}, \mathbf{u}_{\bx \mapsto \by}) = (\boldsymbol\theta_{\mathbf{y}}', \mathbf{u}_{\by \mapsto \bx})$. If
  \begin{align*}\label{eqn_acc_prob_NRJ}
  	\hspace{-3mm} u_a \leq \alpha_{\text{NRJ}}((\mathbf{x}, \boldsymbol\theta_{\mathbf{x}}), (\mathbf{y}, \boldsymbol\theta'_{\mathbf{y}})) := 1 \wedge \frac{q_{\mathbf{y}, -\nu}(\mathbf{x})}{q_{\mathbf{x}, \nu}(\mathbf{y})} \, r((\mathbf{x}, \boldsymbol\theta_{\mathbf{x}}), (\mathbf{y}, \boldsymbol\theta'_{\mathbf{y}})),
  \end{align*}
  set the next state of the chain to $(\mathbf{y}, \boldsymbol\theta'_{\mathbf{y}}, \nu)$. Otherwise, set it to $ (\mathbf{x}, \boldsymbol\theta_{\mathbf{x}}, -\nu)$.

  \item[3.] Go to Step 1.
 \end{itemize}
\end{algorithm}

\begin{Proposition}\label{prop_invariance_NRJ}
 The transition kernel of the Markov chain $\{(\mathbf{X}, \boldsymbol\theta_{\mathbf{X}}, \nu)_k\}$ simulated by \autoref{algo_NRJ} admits $\pi\otimes \mathcal{U}\{-1, 1\}$ as invariant distribution.
\end{Proposition}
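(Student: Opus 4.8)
The plan is to establish invariance by verifying a skewed detailed balance condition for the lifted chain, mirroring the structure of the proof of \autoref{prop_invariance} but extended to the joint model-and-parameter space. I would begin by writing the full transition kernel $P$ of \autoref{algo_NRJ} on the extended space $\left(\cup_{\mathbf{x}} \{\mathbf{x}\} \times \boldsymbol\Theta_{\mathbf{x}}\right) \times \{-1, +1\}$, decomposing it according to the two mutually exclusive cases in Step 2: the parameter-update case ($\mathbf{y} = \mathbf{x}$) and the model-switch case ($\mathbf{y} \neq \mathbf{x}$). The target on this space is $\bar{\pi} := \pi \otimes \mathcal{U}\{-1, +1\}$, so it suffices to show that $\bar{\pi} P = \bar{\pi}$, and I would do this by treating the two cases separately and showing that each contributes consistently to invariance.

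For the parameter-update case, the argument is essentially immediate: by hypothesis the parameter update uses a kernel that leaves $\pi(\,\cdot \mid \mathbf{x})$ invariant while holding both $\mathbf{x}$ and $\nu$ fixed, and since $q_{\mathbf{x}, \nu}(\mathbf{x})$ does not depend on $\nu$ (the stated assumption $q_{\mathbf{x}, -1}(\mathbf{x}) = q_{\mathbf{x}, +1}(\mathbf{x})$), the probability of selecting a parameter update is symmetric in $\nu$. Combined with invariance of the conditional kernel and the fact that the direction is untouched, this part preserves $\bar{\pi}$ on its own. The substantive work is therefore the model-switch case, where I would verify the skewed detailed balance relation
\[
 \bar{\pi}(\mathbf{x}, \boldsymbol\theta_{\mathbf{x}}, \nu) \, P_{\text{switch}}\big((\mathbf{x}, \boldsymbol\theta_{\mathbf{x}}, \nu), (\mathbf{y}, \boldsymbol\theta'_{\mathbf{y}}, \nu)\big) = \bar{\pi}(\mathbf{y}, \boldsymbol\theta'_{\mathbf{y}}, -\nu) \, P_{\text{switch}}\big((\mathbf{y}, \boldsymbol\theta'_{\mathbf{y}}, -\nu), (\mathbf{x}, \boldsymbol\theta_{\mathbf{x}}, -\nu)\big),
\]
i.e.\ that a forward move with direction $\nu$ is balanced against the reverse move carried out with the flipped direction $-\nu$. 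The accepted-move density factorizes as $q_{\mathbf{x}, \nu}(\mathbf{y})$ times the RJ proposal density for $\boldsymbol\theta'_{\mathbf{y}}$ (including the auxiliary variables and the Jacobian of the dimension-matching transformation) times $\alpha_{\text{NRJ}}$. The key is that the RJ sub-structure for $\boldsymbol\theta'_{\mathbf{y}}$ is, by assumption, valid in the ordinary reversible-jump sense; this is precisely what lets the function $r$ in \eqref{eqn_acc_RJ} absorb the ratio of target densities, auxiliary-variable densities, and the Jacobian, satisfying $r((\mathbf{x}, \boldsymbol\theta_{\mathbf{x}}), (\mathbf{y}, \boldsymbol\theta'_{\mathbf{y}})) = 1 / r((\mathbf{y}, \boldsymbol\theta'_{\mathbf{y}}), (\mathbf{x}, \boldsymbol\theta_{\mathbf{x}}))$ up to the target-density ratio. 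Plugging the form of $\alpha_{\text{NRJ}}$ from \eqref{eqn_acc_prob_NRJ} into both sides and using the elementary identity $a \,(1 \wedge b/a) = b \,(1 \wedge a/b)$ collapses the two sides onto each other, with the direction-reversal matching the $-\nu$ subscript appearing in the reverse proposal $q_{\mathbf{y}, -\nu}(\mathbf{x})$.

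The last step is to combine the two cases and account for rejected moves. A rejected model switch sends $(\mathbf{x}, \boldsymbol\theta_{\mathbf{x}}, \nu)$ to $(\mathbf{x}, \boldsymbol\theta_{\mathbf{x}}, -\nu)$, which is the direction-flip needed to make the skewed balance close up; I would sum the balance relation over all proposal targets $\mathbf{y}$ and integrate over $\boldsymbol\theta'_{\mathbf{y}}$, check that the total outgoing mass (accepted moves plus rejected-and-flipped self-transitions) reconstitutes $\bar{\pi}$, and use $\mathcal{U}\{-1, +1\}$ symmetry so that the two directions contribute equally. The main obstacle is bookkeeping rather than conceptual: one must handle the measure-theoretic subtleties of the trans-dimensional proposal correctly, ensuring that the auxiliary variables $\mathbf{u}_{\mathbf{x} \mapsto \mathbf{y}}$ and $\mathbf{u}_{\mathbf{y} \mapsto \mathbf{x}}$ and the diffeomorphism linking $(\boldsymbol\theta_{\mathbf{x}}, \mathbf{u}_{\mathbf{x} \mapsto \mathbf{y}})$ to $(\boldsymbol\theta'_{\mathbf{y}}, \mathbf{u}_{\mathbf{y} \mapsto \mathbf{x}})$ produce the correct Jacobian factor, and that this factor is exactly the one already validated in the reversible-jump literature. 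Because we are explicitly permitted to inherit the RJ validity of the $\boldsymbol\theta$-mechanism as a hypothesis, I expect this obstacle to reduce to citing the standard reversible-jump detailed-balance computation and verifying that the only genuinely new ingredient—the $\nu$-dependence entering through $q_{\mathbf{x}, \nu}$ and the direction flip on rejection—behaves exactly as in the proof of \autoref{prop_invariance}.
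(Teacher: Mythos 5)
Your proposal is correct and takes essentially the same route as the paper: decompose the kernel into the parameter-update and model-switch branches, dispose of the former via invariance of the conditional kernel for $\pi(\,\cdot\mid\mathbf{x})$ together with $q_{\mathbf{x},-1}(\mathbf{x})=q_{\mathbf{x},+1}(\mathbf{x})$, establish a skewed balance $\bar{\pi}(\mathbf{x},\boldsymbol\theta_{\mathbf{x}},\nu)P(\cdot,\cdot)=\bar{\pi}(\mathbf{y},\boldsymbol\theta'_{\mathbf{y}},-\nu)P(\cdot,\cdot)$ for accepted switches inherited from RJ reversibility, and close the balance by summing over targets together with the rejected-and-flipped self-transition. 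The only real difference is mechanical: where you would expand the proposal densities, Jacobian and $\alpha_{\text{NRJ}}$ pointwise and apply $a\,(1\wedge b/a)=b\,(1\wedge a/b)$, the paper instead identifies the lifted switch kernel with $(1/2)(1-q_{\mathbf{x},\nu}(\mathbf{x}))$ times an auxiliary RJ kernel whose model proposal is the mixture $q_{\mathbf{x}}=\tfrac{1}{2}q_{\mathbf{x},-1}+\tfrac{1}{2}q_{\mathbf{x},+1}$ (for which $\alpha_{\text{RJ}}=\alpha_{\text{NRJ}}$ on $\mathcal{N}_{\nu}(\mathbf{x})$) and invokes that kernel's reversibility in integrated form, which sidesteps the measure-theoretic bookkeeping you flag as the main obstacle.
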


In \cite{gagnon2019RJ}, the proposed procedure to specify the functions $q_{\bx}$, $\mathcal{D}_{\bx\mapsto \by}$ and $q_{\bx \mapsto \by}$ is proved to produce a RJ which asymptotically approaches an ideal one which is able to sample $\boldsymbol\theta'_{\mathbf{y}}$ from $\pi(\, \cdot \mid \mathbf{y})$ (the correct conditional distribution) and which sets $q_{\mathbf{x}}$ to locally-balanced distributions (because it has access to the exact ratios of marginal probabilities $\pi(\mathbf{y}) / \pi(\mathbf{x})$), as the sample size goes to infinity in a Bayesian statistics context. The analogous conclusions hold for NRJ, and thus $q_{\mathbf{x}, \nu}$ can be set to be asymptotically locally-balanced following the analogous procedure to that in \cite{gagnon2019RJ}. In the limit, the marginal process $\{(\mathbf{X}, \nu)_k\}$ is the same (if we consider only iterations for which model switches are proposed) as that simulated by \autoref{algo1}. All conclusions previously drawn about the state-space exploration efficiency of \autoref{algo1} compared to its MH counterpart thus hold (at least approximatively) for \autoref{algo_NRJ}, but when compared with its RJ counterpart. In particular, if we were to analyse the same data as in \autoref{sec:variable_selection}, but using the super heavy-tailed regression of \cite{gagnon2018regression} for robust inference and outlier detection, it is likely that the algorithm performance results would be the similar. Indeed, \cite{raftery1997bayesian} verified that nothing points towards a gross violation of the assumptions underlying normal linear regression and the robust method is designed for leading to similar results in the absence of outliers. We thus omit further analysis of \autoref{algo_NRJ} and we do not illustrate how it performs for brevity. We nevertheless mention that, within the trans-dimensional framework, $r((\mathbf{x}, \boldsymbol\theta_{\mathbf{x}}), (\mathbf{y}, \boldsymbol\theta'_{\mathbf{y}}))$ can be seen as an estimator of $\pi(\mathbf{y}) / \pi(\mathbf{x})$ and it is important that this estimator has a low variance in the lifted framework as persistent movement may be interrupted otherwise because significant noise fluctuations may lead to high rejection rates, as shown in \cite{gagnon2019NRJ}. The methods of \cite{karagiannis2013annealed} and \cite{andrieu2018utility} can be used to produce an estimator involved in the acceptance probability like $r((\mathbf{x}, \boldsymbol\theta_{\mathbf{x}}), (\mathbf{y}, \boldsymbol\theta'_{\mathbf{y}}))$, but with a reduced variability.

\section{Proofs of theoretical results and useful lemmas}\label{sec:proofs}

We now present the proofs of all theoretical results in the same order as the results appeared in the paper. We beforehand present and prove three lemmas which are central to the proofs of Theorems \ref{thm:1} and \ref{thm:2}. In the proofs, we will sometimes use a subscript in $\esp$ to make clear with respect to which distribution the expectation is computed. We will do the same with $\proba$.

To prove Theorems \ref{thm:1} and \ref{thm:2}, we actually prove an order on the asymptotic variances of the lazy versions of $P_{1,n}$ and $P_{2,n}$ and then use a result about the relationship between the asymptotic variance associated to a kernel and that of the lazy version to conclude. The reason for this is that we want to use a result that we now present. Let $P$ be a $\pi$-reversible Markov kernel. It operates a contraction on $\Ltwoz(\pi)$ in the sense that for all $f \in \Ltwoz(\pi)$ and all $k\in \nset$,
\begin{equation}\label{eq:contraction}
  \|P^kf\|_\pi\leq (1-\lambda)^k,
\end{equation}
where $\lambda$ is the associated \emph{absolute spectral gap} \citep[Section 12.2]{levin2017markov}. For the lazy version of a Markov kernel, the \emph{absolute spectral gap} corresponds to the \emph{right spectral gap}. Also, it is possible to establish lower bounds on the right spectral gaps of the lazy versions of $P_{1,n}, \tP_{1,n}, P_{2,n}$ and $\tP_{2,n}$ using the order between a Markov kernel and its lazy version \citep[Theorem 2]{zanella2019informed}. We thus now proceed with results about connections between the asymptotic variances of the lazy versions of $P_{1,n}, \tP_{1,n}, P_{2,n}$ and $\tP_{2,n}$. Note that they are valid for the original kernels as well. Before proceeding, we explicitly define what is the lazy version of a Markov kernel. Let $P$ be a $\pi$-reversible Markov kernel. Its lazy version, denoted by $P^{(\text{L})}$, is defined as $P^{(\text{L})} := (P + I) / 2$.

The three lemmas that we now present and prove hold for any fixed $n$. To simplify the presentation, we thus make implicit the dependence on this parameter of the target distribution, state-space, and so on. In particular, we write $p$ for $p(n):=\pi_n(\tXset_n)$. We introduce some notation that are required for the presentation of the lemmas. We define four Markov chains $\{\bX_k\}, \{\tbX_k\}, \{\bY_k\}$ and $\{\tbY_k\}$ with Markov kernels $P_{1}^{(L)}, \tP_{1}^{(L)}, P_{2}^{(L)}$ and $\tP_{2}^{(L)}$, respectively, started in stationarity. Let $\varrho \in \nset$. We define
\[
 A_\varrho := \bigcap_{k < \varrho}\{\bX_k \in \itXset\},
\]
\[
 \tilde{A}_\varrho := \bigcap_{k < \varrho}\{\tbX_k \in \itXset\},
\]
\[
 B_\varrho := \bigcap_{k < \varrho}\{\bY_k \in \itXset\},
\]
and
\[
 \tilde{B}_\varrho := \bigcap_{k < \varrho}\{\tbY_k \in \itXset\}.
\]
Note that the asymptotic variance can be written for a test-function $f$ as
\[
 \vara(f, P_1^{(L)}) = \var[f(\bX_0)] + 2 \sum_{k = 1}^\infty \cov[f(\bX_0), f(\bX_k)].
\]

\begin{Lemma}
\label{lem:1}
For any $f\in\Ltwoz(\pi)$ and any $\varrho\in\nset$,
\begin{multline*}
\vara(f, P_1^{(L)})=p\, \vara(f, \tP_1^{(L)})+\pi (f^2\1_{\tXset^\mathsf{c}})+p(2\varrho-1)(\tpi f)^2\\
+2\sum_{k=1}^{\varrho-1}\left\{\esp[f(\bX_0)f(\bX_{k})\1_{A_\varrho^\mathsf{c}}]-p\esp[f(\tbX_0)f(\tbX_{k})\1_{\tilde{A}_\varrho^\mathsf{c}}]\right\}\\
+2\sum_{k\geq \varrho}\left\{\esp[f(\bX_0)f(\bX_k)]-p\cov[f(\tbX_0),f(\tbX_k)]\right\}\,.
\end{multline*}
\end{Lemma}

Note that the result holds if we replace $P_1^{(L)}$ and $\tP_1^{(L)}$ by $P_2^{(L)}$ and $\tP_2^{(L)}$, $\{\bX_k\}$ and $\{\tbX_k\}$ by $\{\bY_k\}$ and $\{\tbY_k\}$, and $A_\varrho$ and $\tilde{A}_\varrho$ by $B_\varrho$ and $\tilde{B}_\varrho$.

\begin{proof}
First, the relationship between the marginal variances is given by
$$
1=\var[f(\bX_0)]=p\var[f(\tbX_0)]+p(\tpi f)^2+\pi(f^2\1_{\tXset^\mathsf{c}})\,.
$$
Second, given that $f\in\Ltwoz(\pi)$, $\cov[f(\bX_0),f(\bX_k)]=\esp[f(\bX_0)f(\bX_k)]$. For $k<\rho$,
\begin{align*}
\esp[f(\bX_0)f(\bX_k)\1_{A_\varrho}] &=\int \pi(\d \bx_0) P_1^{(L)}(\bx_0,\d \bx_1)\cdots P_1^{(L)}(\bx_{k-1},\d \bx_k)f(\bx_0)f(\bx_k)\1_{A_\varrho} \cr
&=p\esp[f(\tbX_0)f(\tbX_k)\1_{\tilde{A}_\varrho}]
\end{align*}
because for all $\bx\in\itXset$ and all $B\subset \itXset$, $P_1^{(L)}(\bx,B) = \tP_1^{(L)}(\bx,B)$. Therefore,
\[
\esp[f(\bX_0)f(\bX_k)\1_{A_\varrho}] = p \esp[f(\tbX_0)f(\tbX_k)] -p\esp[f(\tbX_0)f(\tbX_k)\1_{\tilde{A}_\varrho^\mathsf{c}}],
\]
implying that
\begin{align*}
\cov[f(\bX_0),f(\bX_k)]&=p\cov[f(\tbX_0),f(\tbX_k))+p(\tpi f)^2-p\esp[f(\tbX_0)f(\tbX_k)\1_{\tilde{A}_\varrho^\mathsf{c}}] \cr
&\qquad +\esp[f(\bX_0)f(\bX_k)\1_{A_\varrho^\mathsf{c}}]\,.
\end{align*}
We are thus able to conclude the proof with
\begin{align*}
\vara(f, P_1^{(L)}) &= 1 +2 \sum_{k=1}^\infty\cov[f(\bX_0),f(\bX_k)]\\
&=p\var[f(\tbX_0)]+2p\sum_{k=1}^{\varrho-1}\cov[f(\tbX_0), f(\tbX_k)]
+(2\varrho-1)p(\tpi f)^2+\pi(f^2\1_{\tXset^\mathsf{c}})\\
&\quad +2\sum_{k=1}^{\varrho-1}\left[\esp[f(\bX_0)f(\bX_k)\1_{A_\varrho^\mathsf{c}}]
-p\esp[f(\tbX_0)f(\tbX_k)\1_{\tilde{A}_\varrho^\mathsf{c}}]\right]+2\sum_{k=\varrho}^\infty\cov[f(\bX_0),f(\bX_k)]\\
&=p\vara(f, \tP_1^{(L)})+(2\varrho-1)p(\tpi f)^2+\pi(f^2\1_{\tXset^\mathsf{c}})\\
&\quad+2\sum_{k=1}^{\varrho-1}\left[\esp[f(\bX_0)f(\bX_k)\1_{A_\varrho^\mathsf{c}}]
-p\esp[f(\tbX_0)f(\tbX_k)\1_{\tilde{A}_\varrho^\mathsf{c}}]\right] \cr
&\quad+2\sum_{k=\varrho}^{\infty}\left[\esp[f(\bX_0)f(\bX_k)]-p\cov[f(\tbX_0),f(\tbX_k)]\right]\,.
\end{align*}
\end{proof}

\begin{Lemma}
\label{lem:2}
Assume that there exists $0 <\omega \leq 1$ such that $P_{1}^{(L)}(\mathbf{x}, \mathbf{y}) \geq \omega P_{2}^{(L)}(\mathbf{x}, \mathbf{y})$, for all $(\bx,\by) \in \tXset_n^2$ with $\mathbf{x} \neq \mathbf{y}$. For any $f\in\Ltwoz(\pi)$ and $\varrho\in\nset$,
$$
\vara(f, P_1^{(L)}) \leq \frac{\vara(f, P_2^{(L)})}{\omega}+\frac{1-\omega}{\omega}+\Delta_\varrho(f)
$$
with
\begin{align*}
&\Delta_\varrho(f) = 2\sum_{k=1}^{\varrho-1}\left\{\esp[f(\bX_0)f(\bX_k)\1_{A_\varrho^\mathsf{c}}] - \frac{1}{\omega}\esp[f(\bY_0)f(\bY_k)\1_{B_\varrho^\mathsf{c}}]
+\frac{p}{\omega}\esp[f(\tbY_0)f(\tbY_k)\1_{\tilde{B}_\varrho^\mathsf{c}}] \right.\cr
&\qquad\left.-p\esp[f(\tbX_0)f(\tbX_k)\1_{\tilde{A}_\varrho^\mathsf{c}}]\right\}\\
&+2\sum_{k\geq \varrho}\left\{\esp[f(\bX_0)f(\bX_k)]-\frac{1}{\omega}\esp[f(\bY_0)f(\bY_k)] +\frac{p}{\omega}\cov[f(\tbY_0),f(\tbY_t)]-p\cov[f(\tbX_0),f(\tbX_k)]\right\}\,.
\end{align*}
\end{Lemma}

Note that $P_{1}^{(L)}(\mathbf{x}, \mathbf{y}) \geq \omega P_{2}^{(L)}(\mathbf{x}, \mathbf{y})$, for all $(\bx,\by) \in \tXset_n^2$ with $\mathbf{x} \neq \mathbf{y}$, is equivalent to $P_{1}(\mathbf{x}, \mathbf{y}) \geq \omega P_{2}(\mathbf{x}, \mathbf{y})$, for all $(\bx,\by) \in \tXset_n^2$ with $\mathbf{x} \neq \mathbf{y}$. Note also that $P_{1}^{(L)}(\mathbf{x}, \mathbf{y}) \geq \omega P_{2}^{(L)}(\mathbf{x}, \mathbf{y})$, for all $(\bx,\by) \in \tXset_n^2$ with $\mathbf{x} \neq \mathbf{y}$, is equivalent to $\tP_{1}^{(L)}(\mathbf{x}, \mathbf{y}) \geq \omega \tP_{2}^{(L)}(\mathbf{x}, \mathbf{y})$, for all $(\bx,\by) \in \tXset_n^2$ with $\mathbf{x} \neq \mathbf{y}$.

\begin{proof}
We first apply \autoref{lem:1} and obtain:
\begin{align*}
\vara(f,P_1^{(L)})&=p\, \vara(f,\tP_1^{(L)})+\pi (f^2\1_{\tXset^\mathsf{c}})+p(2\varrho-1)(\tpi f)^2\\
&+2\sum_{k=1}^{\varrho-1}\left\{\esp[f(\bX_0)f(\bX_{k})\1_{A_\varrho^\mathsf{c}}]-p\esp[f(\tbX_0)f(\tbX_{k})\1_{\tilde{A}_\varrho^\mathsf{c}}]\right\}\\
&+2\sum_{k\geq \varrho}\left\{\esp[f(\bX_0)f(\bX_k)]-p\cov[f(\tbX_0),f(\tbX_k)]\right\}.
\end{align*}
We now apply Lemma 33 of \cite{andrieu2018uniform} and obtain:
\begin{align*}
\vara(f,P_1^{(L)})&\leq \frac{p}{\omega}\, \vara(f,\tP_2^{(L)}) + \frac{p(1 - \omega)}{\omega} +\pi (f^2\1_{\tXset^\mathsf{c}})+p(2\varrho-1)(\tpi f)^2\\
&+2\sum_{k=1}^{\varrho-1}\left\{\esp[f(\bX_0)f(\bX_{k})\1_{A_\varrho^\mathsf{c}}]-p\esp[f(\tbX_0)f(\tbX_{k})\1_{\tilde{A}_\varrho^\mathsf{c}}]\right\}\\
&+2\sum_{k\geq \varrho}\left\{\esp[f(\bX_0)f(\bX_k)]-p\cov[f(\tbX_0),f(\tbX_k)]\right\}.
\end{align*}
Applying again \autoref{lem:1} yields the result after using that $0 < p \leq 1$ and
\begin{align*}
 \left(1-\frac{1}{\omega}\right)\left[\pi (f^2\1_{\tXset^\mathsf{c}})+p(2\varrho-1)(\tpi f)^2\right] \leq 0.
\end{align*}
\end{proof}

In the next lemma, we establish an upper bound for $\Delta_\varrho(f)$.

\begin{Lemma}\label{lem:3}
For any $\delta>0$, $\varrho\in\nset$ and $f \in \Ltwoz(\pi)$, we have
  \begin{equation*}
 \Delta_\rho(f) \leq \frac{8}{\omega p}\left(\varrho^2\|f\|_{\pi,2+\delta}^2\left[{1}-{\pi(\itXset)}\right]^{\delta / (2 + \delta)}
 + \frac{\exp(-\varrho \ulambda / 2)}{\ulambda / 2}\right)\,.
 \end{equation*}
  \end{Lemma}

  \begin{proof}
First, note that for any $\delta>0$, using H\"older's inequality,
\[
\left|\esp[f(\bX_0)f(\bX_k)\1_{A_\varrho^\mathsf{c}}]\right|\leq \esp[|f(\bX_0)f(\bX_k)\1_{A_\varrho^\mathsf{c}}|]
\leq \left[\esp\left[|f(\bX_0)f(\bX_k)|^{1+\delta/2}\right]\right]^{2/2+\delta}\proba(A_\varrho^\mathsf{c})^{\delta/2+\delta}\,.
\]
Moreover, using Cauchy--Schwarz inequality,
\[
\esp\left[|f(\bX_0)f(\bX_k)|^{1+\delta/2}\right] \leq \left(\esp\left[|f(\bX_0)|^{2+\delta}\right]\right)^{1/2}\left(\esp\left[|f(\bX_k)|^{2+\delta}\right]\right)^{1/2} =\esp\left[|f(\bX_0)|^{2+\delta}\right]\,.
\]
Also,
\[
  \proba(A_\varrho^\mathsf{c})=\proba\left(\bigcup_{k=1}^{\varrho-1} \bX_k\in\partial\tXset\cup\tXset^\mathsf{c}\right)\leq
  \sum_{k=1}^{\varrho-1}\proba(\bX_k\in\partial\tXset\cup\tXset^\mathsf{c})\leq \varrho\pi(\partial\tXset \cup \tXset^\mathsf{c}) = \varrho(1 - \pi(\itXset))\,.
\]
Combining these results yields
$$
\left|\esp[f(\bX_0)f(\bX_k)\1_{A_\varrho^\mathsf{c}}]\right|\leq
\|f\|_{\pi,2+\delta}^2 \left[\varrho(1 - \pi(\itXset))\right]^{\delta/(2+\delta)} \leq \frac{\|f\|_{\pi,2+\delta}^2}{\omega p} \varrho (1 - \pi(\itXset))^{\delta/(2+\delta)}\,,
$$
using that $0 < \delta/(2+\delta) \leq 1$ and $0 < \omega, p \leq 1$.
Similarly, for any $\delta>0$,
$$
\left|\esp[f(\tbX_0)f(\tbX_k)\1_{\tilde{A}_\varrho^\mathsf{c}}]\right|\leq
\|f\|_{\tpi,2+\delta}^2\varrho \left[\tpi(\partial\tXset)\right]^{\delta/(2+\delta)} \leq \frac{\|f\|_{\pi,2+\delta}^2}{\omega p^2}\varrho (1 - \pi(\itXset))^{\delta/(2+\delta)} \,,
$$
using that
$$
p\|f\|_{\tpi,2+\delta}^2=\bigg[\sum_{\bx\in\tXset}f(\bx)^{2+\delta}\pi(\bx)\bigg]^{2/2+\delta}\leq \|f\|_{\pi,2+\delta}^2
$$
and the definition of $\tpi$.

Similar bounds also hold for $\left|\esp[f(\bY_0)f(\bY_k)\1_{B_\varrho^\mathsf{c}}]\right|$ and $\left|\esp[f(\tbY_0)f(\tbY_k)\1_{\tilde{B}_\varrho^\mathsf{c}}]\right|$. Therefore,
\begin{align*}
&2\left|\sum_{k=1}^{\varrho-1}\left\{\esp[f(\bX_0)f(\bX_k)\1_{A_\varrho^\mathsf{c}}] -\frac{1}{\omega}\esp[f(\bY_0)f(\bY_k)\1_{B_\varrho^\mathsf{c}}] \right.\right. \cr
&\qquad\left.\left.+\frac{p}{\omega}\esp[f(\tbY_0)f(\tbY_k)\1_{\tilde{B}_\varrho^\mathsf{c}}] -p\esp[f(\tbX_0)f(\tbX_k)\1_{\tilde{A}_\varrho^\mathsf{c}}]\right\}\right|\\
&\leq
\frac{8}{\omega p} \varrho^2\|f\|_{\pi,2+\delta}^2 (1 - \pi(\itXset))^{\delta/(2+\delta)}\,.
\end{align*}

We now bound the second sum in $\Delta_\varrho(f)$. Using Cauchy--Schwarz inequality and \eqref{eq:contraction},
\[
 |\esp[f(\bX_0)f(\bX_k)]| \leq \|f\|_{\pi,2}\|(P_1^{(L)})^k f\|_{\pi,2}\leq (1-\lambda_1^{(L)})^k \leq (1-\ulambda/2)^k \leq  \frac{1}{\omega}(1-\ulambda/2)^k,
\]
where $\lambda_1^{(L)}$ is the absolute (and right) spectral gap of $P_1^{(L)}$. The result follows from Theorem 2 in \cite{zanella2019informed}, which indicates that, if $P_{1}^{(L)}(\mathbf{x}, \mathbf{y}) \geq (1/2) P_{1}(\mathbf{x}, \mathbf{y})$, for all $(\bx,\by) \in \Xset_n^2$ with $\bx \neq \by$, then the right spectral gap $\lambda_1^{(L)}$ is such that $\lambda_1^{(L)} \geq (1/2) \lambda_1$.

Similarly,
\begin{multline*}
 |\cov[f(\tbX_0),f(\tbX_k)]|\leq \|f-\tpi f\|_{\tpi,2}\|(\tP_1^{(L)})^k (f-\tpi f)\|_{\tpi,2} \cr
  = \|f-\tpi f\|_{\tpi,2}^2\left\|(\tP_1^{(L)})^k \frac{(f-\tpi f)}{\|f-\tpi f\|_{\tpi,2}}\right\|_{\tpi,2} \cr
 \leq \var[f(\tbX_0)](1-\tlambda_1^{(L)})^k \leq \frac{1}{\omega p}(1-\ulambda/2)^k\,,
\end{multline*}
using that $\var[f(\tbX_0)] \leq 1 / p$.

Similar bounds also hold for $\left|\esp[f(\bY_0)f(\bY_k)]\right|$ and $\left|\cov[f(\tbY_0),f(\tbY_k)]\right|$. Therefore,
\begin{multline*}
2\left|  \sum_{k\geq \varrho}\left\{\esp[f(\bX_0)f(\bX_k)]-\frac{1}{\omega}\esp[f(\bY_0)f(\bY_k)]+\frac{p}{\omega} \cov[f(\tbY_0),f(\tbY_k)]- p\cov[f(\tbX_0),f(\tbX_k)]\right\}\right|\\
\leq  \frac{8}{\omega}\sum_{k\geq \varrho}(1-\ulambda/2)^k = \frac{8}{\omega}(1-\ulambda/2)^\varrho\frac{1}{\ulambda/2} \leq \frac{8}{\omega p}\frac{\exp(-\varrho \ulambda/2)}{\ulambda/2}\,,
\end{multline*}
using that $1 - x \leq \exp(-x)$.
\end{proof}

We now turn to the proofs of Theorems \ref{thm:1} and \ref{thm:2}. These theorems are stated and proved under the asymptotic framework presented in \autoref{sec:weak_peskun}. In the proofs, it will thus be important to highlight a dependence on $n$ of the target distribution, state-space, and so on.

\begin{proof}[Proof of \autoref{thm:1}]
 We first apply Lemmas \ref{lem:2} and \ref{lem:3}:
 \begin{align*}
  \vara(f_n, P_{1,  n}^{(L)}) &\leq \frac{1}{\omega(n)} \vara(f_n, P_{2, n}^{(L)}) + \frac{1 - \omega(n)}{\omega(n)}
 \cr
 &\qquad +\frac{8}{\omega(n) p(n)}\left(\varrho(n)^2\|f_n\|_{\pi_n,2+\delta}^2\left[{1}-{\pi_n(\itXset_n)}\right]^{\delta / (2 + \delta)}
 + \frac{\exp(-\varrho(n) \ulambda(n)/2)}{\ulambda(n)/2}\right).
 \end{align*}
 Let $\overline{\omega} > \eps > 0$. Consider that $n > n^*$, a positive integer which will be defined in relation to other positive integers. Under \autoref{ass:1}, we know that there exists $n_1^*$ such that for any $n > n_1^*$,
 \begin{align*}
  \vara(f_n, P_{1,  n}^{(L)}) &\leq \frac{1}{\overline{\omega} - \eps} \vara(f_n, P_{2, n}^{(L)}) + \frac{1 - \overline{\omega}}{\overline{\omega}} + \frac{\eps}{3} \cr
  &\qquad +  \frac{8}{\overline{\omega} - \eps}\left(\varrho(n)^2\|f_n\|_{\pi_n,2+\delta}^2\left[{1}-{\pi_n(\itXset_n)}\right]^{\delta / (2 + \delta)}
 + \frac{\exp(-\varrho(n) \ulambda(n)/2)}{\ulambda(n)/2}\right).
 \end{align*}
 Take $n^* \geq n_1^*$.

 Now, we set $\varrho(n) = \lfloor 1/(1-\pi_n(\itXset_n))^{(\bdelta - \gamma)/2}\rfloor$, where $\lfloor \, \cdot \, \rfloor$ is the floor function and $\bdelta := \delta / (2 + \delta)$, and note that, by \autoref{ass:1} and given that $\bdelta > \gamma > 0$, $\varrho(n) \rightarrow \infty$. By assumption, we know that there exists $n_2^*$ such that for any $n > n_2^*$,
 \[
  \left\|f_n\right\|_{\pi_n, 2 + \delta} \varrho(n)^2 (1 - \pi_n(\itXset_n))^{\bdelta} \leq \left\|f_n\right\|_{\pi_n, 2 + \delta}(1-\pi_n(\itXset_n))^\gamma \leq \frac{\overline{\omega} - \eps}{24}\eps.
 \]
Take $n^* \geq n_2^*$.

 Given that $\ulambda(n)$ is bounded away from zero by assumption, we know that there exists $n_3^*$ such that for any $n > n_3^*$,
 \[
   \frac{\exp\{-\varrho(n) \ulambda(n)/2\}}{\ulambda(n)/2} \leq \frac{\overline{\omega} - \eps}{24}\eps.
 \]
 Take $n^* \geq n_3^*$. This yields
 \[
  \vara(f_n, P_{1,n}^{(L)}) \leq \frac{1}{\overline{\omega}-\eps}\vara(f_n, P_{2,n}^{(L)}) +\frac{1- \overline{\omega}}{\overline{\omega}} + \eps.
 \]
 To conclude the proof, we use Theorem 1 of \cite{deligiannidis2018ergodic}, which indicates that
 \[
  \vara(f_n, P_{i,n}^{(L)}) = 1 + 2 \vara(f_n, P_{i,n}), \quad i = 1, 2.
 \]
\end{proof}

\begin{proof}[Proof of \autoref{thm:2}]
 We follow a similar approach than for the proof of \autoref{thm:1}. Let $\overline{\omega} > \eps > 0$. Consider that $n > n^*$. Under \autoref{ass:1} and using Lemmas \ref{lem:2} and \ref{lem:3}, we know that there exists $n_1^*$ such that for any $n > n_1^*$,
 \begin{align*}
  \vara(f_n, P_{1,  n}^{(L)}) &\leq \frac{1}{\overline{\omega} - \eps} \vara(f_n, P_{2, n}^{(L)}) + \frac{1 - \overline{\omega}}{\overline{\omega}} + \frac{\eps}{3} \cr
  &\qquad +  \frac{8}{\overline{\omega} - \eps}\left(\varrho(n)^2\|f_n\|_{\pi_n,2+\delta}^2\left[{1}-{\pi_n(\itXset_n)}\right]^{\delta / (2 + \delta)}
 + \frac{\exp(-\varrho(n) \ulambda(n))}{\ulambda(n)}\right).
 \end{align*}
 Take $n^* \geq n_1^*$.

 As previously, we set $\varrho(n) = \lfloor 1/(1-\pi_n(\itXset_n))^{(\bdelta - \gamma)/2}\rfloor$, (again with $\bdelta>\gamma>0$) which implies that there exists $n_2^*$ such that for any $n > n_2^*$,
 \[
  \left\|f_n\right\|_{\pi_n, 2 + \delta} \varrho(n)^2 (1 - \pi_n(\itXset_n))^{\bdelta} \leq \left\|f_n\right\|_{\pi_n, 2 + \delta}(1-p(n))^\gamma \leq \frac{\overline{\omega} - \eps}{24}\eps.
 \]
 Take $n^* \geq n_2^*$.

 We consider that $\ulambda(n) \rightarrow 0$; otherwise, we are in the same situation as the previous proof and it has been shown that the result holds. We write
 \begin{align*}
  \frac{\exp\{-\varrho(n) \ulambda(n)/2\}}{\ulambda(n)/2} &= \exp\left\{-\varrho(n) \ulambda(n)/2\left(1 + \frac{\log \ulambda(n)/2}{\varrho(n) \ulambda(n)/2}\right)\right\} \cr
  &= \exp\left\{-\varrho(n) \ulambda(n)/2\left(1 + \frac{[\log \ulambda(n)/2] (\ulambda(n)/2)^{1/2}}{\varrho(n) (\ulambda(n)/2)^{3/2}}\right)\right\}.
 \end{align*}
 Clearly, $[\log \ulambda(n)/2] (\ulambda(n)/2)^{1/2}$ vanishes. Now we establish that $\varrho(n) (\ulambda(n)/2)^{3/2} \rightarrow \infty$ which implies that $\varrho(n) (\ulambda(n)/2) \rightarrow \infty$. By \eqref{eq:ass2_thm3},
 \begin{align*}
  \frac{1 - \pi_n(\itXset_n)}{\ulambda(n)^{3 / (\bdelta - \gamma)}} \rightarrow 0,
 \end{align*}
 which is equivalent to
 \begin{align*}
  \frac{(1 - \pi_n(\itXset_n))^{(\bdelta - \gamma)/2}}{\ulambda(n)^{3 / 2}} \rightarrow 0,
 \end{align*}
 which allows to conclude that $\varrho(n) (\ulambda(n)/2)^{3/2} \rightarrow \infty$. Therefore, there exists $n_3^*$ such that for any $n > n_3^*$,
 \[
   \frac{\exp\{-\varrho(n) \ulambda(n)/2\}}{\ulambda(n)/2} \leq \frac{\overline{\omega} - \eps}{24}\eps.
 \]
 Take $n^* \geq n_3^*$. This concludes the proof by using Theorem 1 of \cite{deligiannidis2018ergodic} as before.
\end{proof}

\begin{proof}[Proof of \autoref{prop_invariance}]
 It suffices to prove that the probability to reach the state $(\mathbf{y}, \nu')$ in one step is equal to the probability of this state under the target:
 \begin{align*}\label{eqn_invariance}
  \sum_{\mathbf{x}, \nu} \pi(\mathbf{x}) \, (1 / 2) \, P((\mathbf{x}, \nu), (\mathbf{y}, \nu')) = \pi(\mathbf{y}) \, (1 / 2).
 \end{align*}
 where $P$ is the transition kernel.

The probability to reach the state $(\mathbf{y}, \nu')$ from some $(\mathbf{x}, \nu)$ is given by:
 \begin{align*}
  P((\mathbf{x}, \nu), (\mathbf{y}, \nu')) &= T_\nu(\mathbf{x}, \Xset) \, Q_{\mathbf{x}, \nu}(\mathbf{y}) \, \ind(\nu = \nu') \cr
  &\qquad + \ind(\nu = -\nu', \mathbf{x} = \mathbf{y}) \left[(\rho_\nu(\mathbf{x}) + T_\nu(\mathbf{x}, \Xset)) - T_\nu(\mathbf{x}, \Xset)\right] \cr
  &\qquad + \ind(\nu = \nu', \mathbf{x} = \mathbf{y}) \left[1 - \rho_\nu(\mathbf{x}) - T_\nu(\mathbf{x}, \Xset)\right] \cr
  &= q_{\mathbf{x}, \nu}(\mathbf{y}) \, \alpha_\nu(\mathbf{x}, \mathbf{y}) \, \ind(\nu = \nu') \cr
  &\qquad + \ind(\nu = -\nu', \mathbf{x} = \mathbf{y}) \, \rho_\nu(\mathbf{x})  \cr
  &\qquad + \ind(\nu = \nu', \mathbf{x} = \mathbf{y}) \left[1 - \rho_\nu(\mathbf{x}) - T_\nu(\mathbf{x}, \Xset)\right].
 \end{align*}

 We have that
  \begin{align*}
   \pi(\mathbf{x}) \, (1 / 2) \, P((\mathbf{x}, \nu), (\mathbf{y}, \nu')) &= (1 / 2) \, \pi(\mathbf{y}) \, q_{\mathbf{y}, -\nu'}(\mathbf{x}) \, \alpha_{-\nu'}(\mathbf{y}, \mathbf{x}) \, \ind(-\nu' = -\nu) \cr
    &\qquad + (1 / 2) \, \pi(\mathbf{y}) \,  \ind(-\nu' = \nu, \mathbf{y} = \mathbf{x}) \, \rho_{-\nu'}(\mathbf{y}) \cr
    &\qquad + (1 / 2) \, \pi(\mathbf{y}) \ind(-\nu' = -\nu, \mathbf{y} = \mathbf{x}) \left[1 - \rho_{-\nu'}(\mathbf{y}) - T_{-\nu'}(\mathbf{y}, \Xset)\right] \cr
    &= (1 / 2) \, \pi(\mathbf{y}) \, T_{-\nu'}(\mathbf{y}, \Xset) \, Q_{\mathbf{y}, -\nu'}(\mathbf{x}) \ind(-\nu' = -\nu) \cr
    &\qquad + (1 / 2) \, \pi(\mathbf{y}) \,  \ind(-\nu' = \nu, \mathbf{y} = \mathbf{x}) \left[(\rho_{-\nu'}(\mathbf{y}) + T_{-\nu'}(\mathbf{y}, \Xset)) - T_{-\nu'}(\mathbf{y}, \Xset)\right] \cr
    &\qquad + (1 / 2) \, \pi(\mathbf{y}) \, \ind(-\nu' = -\nu, \mathbf{y} = \mathbf{x}) \left[1 - \rho_{-\nu'}(\mathbf{y}) - T_{-\nu'}(\mathbf{y}, \Xset)\right],
 \end{align*}
 where we used the definition of $\alpha$ for the first term and that  $\rho_{\nu}(\mathbf{x}) - \rho_{-\nu}(\mathbf{x}) = T_{-\nu}(\mathbf{x}, \Xset) - T_{\nu}(\mathbf{x},\Xset)$ for the third term. Notice the sum on the RHS is equal to the probability to reach some $(\mathbf{x}, -\nu)$, starting from $(\mathbf{y}, -\nu')$:  $(1 / 2) \, \pi(\mathbf{y}) \, P((\mathbf{y}, -\nu'), (\mathbf{x}, -\nu))$.

 Therefore,
 \begin{align*}
   \sum_{\mathbf{x}, \nu} \pi(\mathbf{x}) \, (1 / 2) \, P((\mathbf{x}, \nu), (\mathbf{y}, \nu')) &= \sum_{\mathbf{x}, \nu} (1 / 2) \, \pi(\mathbf{y}) \, P((\mathbf{y}, -\nu'), (\mathbf{x}, -\nu))  \cr
  & = (1 / 2) \, \pi(\mathbf{y}),
 \end{align*}
 which concludes the proof.
\end{proof}

We now present a lemma that will be useful in the next proofs. We define $\bpi :=\pi \otimes \mathcal{U}\{-1, +1\}$ and note that in the following we can assume without loss of generality that $\bpi f = 0$.

\begin{Lemma}\label{lemma:2}
Assume that $\Xset$ is finite. Then, for any function $f: \Xset \times \{-1, + 1\} \to \rset$,
\begin{equation}\label{eqlemma2}
\lim_{\lambda\to 1}\sum_{k>0}\lambda^k\pscal{f}{P_\rho^k f}_{\bpi} = \sum_{k>0}\pscal{f}{P_\rho^k f}_{\bpi}.
\end{equation}
\end{Lemma}

\begin{proof}
 Let us define the sequence of functions $S_N: \lambda \mapsto \sum_{0 < k \leq N}\lambda^k\pscal{f}{P_\rho^k f}$ defined for $\lambda\in[0,1)$ and its limit $S(\lambda) = \sum_{k>0}\lambda^k\pscal{f}{P_\rho^k f}_{\bpi}$ (the dependence of $S_N$ and $S$ on $f$ and $P_\rho$ is implicit).  We now show that the partial sum $S_N$ converges uniformly to $S$ on $[0,1)$, and given that for each $N\in\nset$, the function $\lambda \to \lambda^N\pscal{f}{P_\rho^N f}_{\bpi}$ admits a limit when $\lambda\to 1$,  we have that $S$ admits a limit when $\lambda\to 1$, given by
 $$
 \lim_{\lambda\to 1}S(\lambda)=S(1)=\sum_{k>0}\pscal{f}{P_\rho^k f},
 $$
 which is \eqref{eqlemma2}.

 First, note that
 \begin{align*} 
   \sup_{\lambda\in[0,1)}\left|S_N(\lambda)-S(\lambda)\right| =   \sup_{\lambda\in[0,1)}\left|\sum_{k > N} \lambda^k\pscal{f}{P_\rho^k f}_{\bpi}\right| &\leq
   \sup_{\lambda\in[0,1)} \sum_{k > N}\lambda^k\left|\pscal{f}{P_\rho^k f}_{\bpi}\right| \cr
   &= \sum_{k > N}\left|\pscal{f}{P_\rho^k f}_{\bpi}\right|.
 \end{align*}
Thus, to prove that $\sup_{\lambda\in[0,1)}\left|S_N(\lambda)-S(\lambda)\right|\to 0$, it is sufficient to prove that the series
\[\sum_{k>0}\left|\pscal{f}{P_\rho^k f}_{\bpi}\right|\]
 converges.

By bilinearity of the inner product and by linearity of the iterated operators $P_\rho,P_\rho^2,\ldots$, it can be checked that for any linear mapping $\phi$
\begin{equation}\label{eq:lemma2_equiv}
\sum_{k=1}^\infty \left|\pscalpi{f}{P_\rho^k f}_{\bpi}\right|<\infty\Leftrightarrow
\sum_{k=1}^\infty \left|\pscalpi{\phi(f)}{P_\rho^k \phi(f)}_{\bpi}\right|<\infty\,.
\end{equation}
Given that $\Xset$ is finite, any function $f: \Xset \times \{-1, + 1\} \to \rset$ is such that $\sup|f|<\infty$. As a consequence, we may use $\phi(f) := f / \sup |f|$ (recall that $\bpi f = 0$).  In the following we denote by $\Ltwozstar(\bpi)$ the subset of $\Ltwo(\bpi)$ such that
$$
 \Ltwozstar(\bpi) := \left\{f \in \Ltwo(\bpi): \bpi f =0, \sup |f| \leq 1\right\}.
$$

By \eqref{eq:lemma2_equiv}, we only need to check that the series $\sum_{k>0}\left|\pscal{f}{P_\rho^k f}\right|$ converges for any $f\in\Ltwozstar(\bpi)$. Given that $\Xset$ is finite, $P_\rho$ is uniformly ergodic and there exist constants $\gamma\in(0,1)$ and $C\in(0,\infty)$ such that for any $t\in\nset$,
\begin{equation}\label{eq:lemma2_tv}
\sup_{(\bx, \nu)\in\Xset\times\{-1,+1\}}\|\delta_{\bx,\nu}P_\rho^t-\bpi\|_\tv\leq C\gamma^t,
\end{equation}
where for any signed measure $\mu$, $\|\mu\|_\tv$ denotes its total variation.  Denoting a state of the extended state-space by $\bbx := (\bx, \nu) \in \Xset \times \{-1, +1\}$, we have that, for any $ f\in\Ltwozstar(\bpi)$,
\begin{align*}
 |\pscal{f}{P_\rho^k f}_{\bpi}| = \left|\sum_{\bbx} f(\bbx) \, \bpi(\bbx) \sum_{\bby} f(\bby)P_\rho^k(\bbx, \bby)\right| &\leq \sum_{\bbx} |f(\bbx)| \, \bpi(\bbx) \left|\sum_{\bby} f(\bby)P_\rho^k(\bbx, \bby)\right| \cr
 &= \sum_{\bbx} |f(\bbx)| \, \bpi(\bbx) \left|\sum_{\bby} f(\bby)P_\rho^k(\bbx, \bby) - \bpi f\right| \cr
 &\leq \sum_{\bbx} |f(\bbx)| \,\bpi(\bbx) \sup_{f \in \Ltwozstar(\bpi)} \left|\sum_{\bby} f(\bby)P_\rho^k(\bbx, \bby) - \bpi f\right| \cr
 &\leq \sum_{\bbx} |f(\bbx)| \, \bpi(\bbx) \, 2 \, \|\delta_{\bx,\nu}P_\rho^k-\bpi\|_\tv \cr
 &\leq C \gamma^k,
\end{align*}
using Jensen's inequality, that $\bpi f = 0$, that $\|\mu\|_\tv = (1/2)\sup_{|g| \leq 1}|\mu g|$ (see, e.g., Proposition 3 in \cite{roberts2004general}) and thus \eqref{eq:lemma2_tv}, and finally that $|f| \leq 1$.

Therefore,
\[
 \sum_{k>0}\left|\pscal{f}{P_\rho^k f}_{\bpi}\right| \leq C \sum_{k>0} \gamma^k < \infty.
\]
As a consequence, $S_n$ converges uniformly to $S$ on $[0,1)$ which concludes the proof.
\end{proof}

\begin{proof}[Proof of \autoref{cor_bestrho}]
 The results of Theorem~6 in \cite{andrieu2019peskun} holds in our framework, implying that
 \[
  \vara_\lambda(f, P_{\rho^*}) \leq \vara_\lambda(f, P_{\rho}) \leq \vara_\lambda(f, P_{\rho}^{\text{w}}),
 \]
 where $\vara_\lambda(f, P_\rho) := \var[ f(\mathbf{X}, \nu)] + 2\sum_{k>0} \lambda^k \pscalpi{f}{P_\rho^k f}_{\bpi}$ with $\lambda \in[0, 1)$. \autoref{lemma:2} allows to conclude.
\end{proof}

\begin{proof}[Proof of \autoref{cor_domination_unif}]
 The proof is an application of Theorem 7 in \cite{andrieu2019peskun} which will allow to establish that
 \[
  \vara_\lambda(f_n, P_{\rho}) \leq \vara_\lambda(f_n, P_{\text{MH}}).
 \]
 We will thus be able to conclude using \autoref{lemma:2}.

 In order to apply Theorem 7 in \cite{andrieu2019peskun}, we must verify that
\[
 q_\mathbf{x}(\mathbf{y}) \, \alpha(\mathbf{x}, \mathbf{y}) = (1/2) \, q_{\mathbf{x},+1}(\mathbf{y}) \, \alpha_{+1}(\mathbf{x}, \mathbf{y}) + (1/2) \, q_{\mathbf{x},-1}(\mathbf{y}) \,\alpha_{-1}(\mathbf{x}, \mathbf{y}),
\]
for all $\mathbf{x}$ and $\mathbf{y}$. This is straightforward to verify under the assumptions of \autoref{cor_domination_unif}:
\begin{align*}
  &(1/2) \, q_{\mathbf{x},+1}(\mathbf{y}) \, \alpha_{+1}(\mathbf{x}, \mathbf{y}) + (1/2) \, q_{\mathbf{x},-1}(\mathbf{y}) \,\alpha_{-1}(\mathbf{x}, \mathbf{y}) \cr
  &\qquad= \frac{1}{2}\frac{1}{(|\neigh(\mathbf{x})| / 2)} \left(1 \wedge \frac{\pi(\mathbf{y})}{\pi(\mathbf{x})}\right) \1_{\mathbf{y} \in \neigh_{+1}(\mathbf{x})} + \frac{1}{2}\frac{1}{(|\neigh(\mathbf{x})| / 2)} \left(1 \wedge \frac{\pi(\mathbf{y})}{\pi(\mathbf{x})}\right) \1_{\mathbf{y} \in \neigh_{-1}(\mathbf{x})} \cr
  &\qquad= \frac{1}{|\neigh(\mathbf{x})|} \left(1 \wedge \frac{\pi(\mathbf{y})}{\pi(\mathbf{x})}\right) \left(\1_{\mathbf{y} \in \neigh_{+1}(\mathbf{x})} + \1_{\mathbf{y} \in \neigh_{-1}(\mathbf{x})}\right) = q_\mathbf{x}(\mathbf{y}) \, \alpha(\mathbf{x}, \mathbf{y}). \qedhere
\end{align*}
\end{proof}

\begin{proof}[Proof of \autoref{lemma:omega_unif}]
Let $(\bx,\by)\in\tXset_n^2$, $\bx\neq \by$, $\by\in\neigh_\nu(\bx)$. Since $\bx\in\tXset_n$, we have $2n_\nu(\bx)\in[n-2\beta(n), n+2\beta(n)]$ and thus
$$
P_{\rev,n}(\bx,\by)=\frac{1}{2n_{\nu}(\bx)}\left(1\wedge \frac{\pi_n(\by)}{\pi_n(\bx)}\frac{n_{\nu}(\bx)}{n_{-\nu}(\by)}\right)\geq \left(1+\frac{\beta(n)}{n/2}\right)^{-1}\frac{1}{n}\left(1\wedge \frac{\pi_n(\by)}{\pi_n(\bx)}\frac{n_{\nu}(\bx)}{n_{-\nu}(\by)}\right).
$$
Noting that
$$
\frac{n_{\nu}(\bx)}{n_{-\nu}(\by)}\geq \max\left\{0, \left(1-\frac{\beta(n)}{n/2}\right)\left(1+\frac{\beta(n)}{n/2}\right)^{-1}\right\},
$$
and that for any $a>0$ and $b\in(0,1)$, we have $1\wedge ab\geq b(1\wedge a)$ and thus
$$
P_{\rev,n}(\bx,\by)\geq \frac{1}{n}\left(1\wedge \frac{\pi_n(\by)}{\pi_n(\bx)}\right) \left(1+\frac{\beta(n)}{n/2}\right)^{-1}\max\left\{0, \left(1-\frac{\beta(n)}{n/2}\right)\left(1+\frac{\beta(n)}{n/2}\right)^{-1}\right\}.
$$
This completes the proof since $\beta(n)=o(n)$ implies that for a large enough $n$, $1-{\beta(n)}/(n/2)>0$ and that $P_{\MH,n}(\bx,\by)= (1/n)\left(1\wedge {\pi_n(\by)}/{\pi_n(\bx)}\right)$.
\end{proof}

\begin{proof}[Proof of \autoref{cor_domination_lifted}]
 Analogous to that of \autoref{cor_domination_unif}.
\end{proof}

\begin{proof}[Proof of \autoref{lemma:omega_loc}]
Let $\bx, \by \in \tXset_n$ with $\by\in\neigh_\nu(\bx)$, then
$$
P_{\rev,n}(\bx,\by)=q_{\bx}(\by)\frac{c_n(\bx)}{2c_{n,\nu}(\bx)}\left(1\wedge \frac{c_n(\bx)}{c_n(\by)}\varphi_n(\bx,\by)\right)\,,\qquad
\varphi_n(\bx,\by):=\frac{c_{n,\nu}(\bx)/c_n(\bx)}{c_{n,\nu}(\by)/c_n(\by)}\,.
$$
For any $\bx\in\tXset_n$, $c_{n,\nu}(\bx)/c_n(\bx)\in[1/2-\beta(n)/c_n(\bx),1/2+\beta(n)/c_n(\bx)]$ so that
$$
\varphi_n(\bx,\by)\geq \max\left\{\frac{c_n(\bx)-2\beta(n)}{c_n(\bx)+2\beta(n)c_n(\bx)/c_n(\by)}, 0\right\}\geq \max\left\{\frac{1-2\beta(n)/c_n(\bx)}{1+2\tau_n\beta(n)/c_n(\bx)}, 0\right\}\,.
$$
As in the proof of \autoref{lemma:omega_unif},
$$
P_{\rev,n}(\bx,\by)\geq q_{\bx}(\by)\left(1\wedge\frac{c_n(\bx)}{c_n(\by)}\right)\left(1+\frac{\beta(n)}{c_n(\bx)/2}\right)^{-1}\max\left\{\frac{1-2\beta(n)/c_n(\bx)}{1+2\tau_n\beta(n)/c_n(\bx)}, 0\right\}\,.
$$
%
By assumption $c_n(\bx)\geq\inf\{c_n(\bx)\,:\,\bx\in\tXset_n\}\geq n m$  and we thus have that $\beta(n)/c_n(\bx)\to 0$ since $\beta(n)=o(n)$. Thus for $n$ sufficiently large,
$$
\frac{1-2\beta(n)/nm}{1+2\tau_n\beta(n)/n m}\in(0,1)
$$
so that
$$
P_{\rev,n}(\bx,\by)\geq P_{\MH,n}(\bx,\by)\left(1+\frac{\beta(n)}{nm/2}\right)^{-1}\left(\frac{1-2\beta(n)/nm}{1+2\tau_n\beta(n)/n m}\right)\,.
$$
\end{proof}

\begin{proof}[Proof of \autoref{prop:simple_ising_1}]
We first obtain the bound $P_{\rev, n}(\bx, \by)\geq (1 / 2) P_{\MH, n}(\bx,\by)$, for all $(\bx, \by) \in \Xset_n^2$ with $\bx \neq \by$. Next we prove that it is essentially not possible to obtain a better bound by establishing an essentially matching upper bound for specific transitions.

For any $\bx, \by$ such that $\by \in \neigh(\bx)$ and $\by \neq \bx$,
\begin{align*}
 P_{\rev, n}(\bx, \by) &= \frac{1}{2} \frac{g(\pi_n(\by) / \pi_n(\bx))}{c_{n, \nu}(\bx)} \left(1 \wedge \frac{c_{n, \nu}(\bx)}{c_{n, -\nu}(\by)}\right) \cr
 &= \frac{c_n(\bx)/2}{c_{n, \nu}(\bx)}  \frac{g(\pi_n(\by) / \pi_n(\bx))}{c_n(\bx)} \left(1 \wedge \frac{c_n(\bx)}{c_n(\by)} \frac{c_{n,\nu}(\bx)}{c_n(\bx) / 2} \frac{c_n(\by) / 2}{c_{n, -\nu}(\by)}\right) \cr
 & \geq \frac{g(\pi_n(\by) / \pi_n(\bx))}{c_n(\bx)} \left(1 \wedge \frac{c_n(\bx)}{c_n(\by)} \right) \frac{c_n(\bx)/2}{c_{n, \nu}(\bx)} \left(1 \wedge \frac{c_{n, \nu}(\bx)}{c_n(\bx) / 2} \frac{c_n(\by) / 2}{c_{n, -\nu}(\by)}\right) \cr
 &= P_{\MH, n}(\bx, \by) \frac{c_n(\bx)/2}{c_{n, \nu}(\bx)} \left(1 \wedge \frac{c_{n, \nu}(\bx)}{c_n(\bx) / 2} \frac{c_n(\by) / 2}{c_{n, -\nu}(\by)}\right),
\end{align*}
using that
\[
 \frac{c_n(\bx)}{c_n(\by)} \frac{c_{n, \nu}(\bx)}{c_n(\bx) / 2} \frac{c_n(\by) / 2}{c_{n, -\nu}(\by)} \geq \frac{c_n(\bx)}{c_n(\by)} \left(1 \wedge \frac{c_{n, \nu}(\bx)}{c_n(\bx) / 2} \frac{c_n(\by) / 2}{c_{n, -\nu}(\by)}\right),
\]
and that, for any $a > 0$ and $b \in (0, 1]$, we have $1 \wedge a b \geq b(1 \wedge a)$. Note that if $\by \neq \bx$ and $\by \notin \neigh(\bx)$, $P_{\rev, n}(\bx, \by) = P_{\MH, n}(\bx, \by) = 0$.

Let us analyse the terms that multiply $P_{\MH, n}(\bx, \by)$ above. We have that
\[
 \frac{c_n(\bx)/2}{c_{n, \nu}(\bx)} \left(1 \wedge \frac{c_{n, \nu}(\bx)}{c_n(\bx) / 2} \frac{c_n(\by) / 2}{c_{n, -\nu}(\by)}\right) = \frac{c_n(\bx)/2}{c_{n, \nu}(\bx)} \wedge \frac{c_n(\by) / 2}{c_{n, -\nu}(\by)}.
\]
We consider that we obtain $\by$ from $\bx$ by changing a coordinate, say $x_j$, from $+1$ to $-1$, implying that the transition is associated with $\nu = -1$. We thus necessarily have $|\neigh_{-1}(\bx)| = n_{+1}(\bx) > 0$. We can obtain the same lower bound in the opposite situation (when the transition is associated with $\nu = +1$). We have that
\begin{align*}
 \frac{\pi_n(\by)}{\pi_n(\bx)} = \exp\left(\sum_{i = 1}^n \alpha_i y_i - \sum_{i = 1}^n \alpha_i x_i\right) &=      \exp(-2 \alpha_j).
 \end{align*}
From this, we can deduce that
\begin{align*}
 \frac{c_n(\bx) / 2}{c_{n, -1}(\bx)} &= \frac{\frac{1}{2} \sum_{\bx' \in \neigh_{-1}(\bx)} g(\exp(-2\alpha_j)) + \frac{1}{2} \sum_{\bx' \in \neigh_{+1}(\bx)} g(\exp(2\alpha_j))}{\sum_{\bx' \in \neigh_{-1}(\bx)} g(\exp(-2\alpha_j))} \cr
  &= \frac{1}{2} + \frac{1}{2} \frac{\sum_{\bx' \in \neigh_{+1}(\bx)} g(\exp(2\alpha_j))}{\sum_{\bx' \in \neigh_{-1}(\bx)} g(\exp(-2\alpha_j))}.
\end{align*}
Also,
\begin{align*}
 \frac{c_n(\by) / 2}{c_{n, +1}(\by)} &= \frac{\frac{1}{2} \sum_{\by' \in \neigh_{-1}(\by)} g(\exp(-2\alpha_j)) + \frac{1}{2} \sum_{\by' \in \neigh_{+1}(\bx)} g(\exp(2\alpha_j))}{\sum_{\by' \in \neigh_{+1}(\by)} g(\exp(2\alpha_j))} \cr
  &= \frac{1}{2} + \frac{1}{2} \frac{\sum_{\by' \in \neigh_{-1}(\by)} g(\exp(-2\alpha_j))}{\sum_{\by' \in \neigh_{+1}(\by)} g(\exp(2\alpha_j))}.
\end{align*}

All that implies that
\begin{align*}
 &\frac{c_n(\bx)/2}{c_{n, \nu}(\bx)} \wedge \frac{c_n(\by) / 2}{c_{n, -\nu}(\by)} \cr
  &\qquad = \frac{1}{2} + \frac{1}{2} \frac{\sum_{\bx' \in \neigh_{+1}(\bx)} g(\exp(2\alpha_j))}{\sum_{\bx' \in \neigh_{-1}(\bx)} g(\exp(-2\alpha_j))} \wedge \frac{1}{2} + \frac{1}{2} \frac{\sum_{\by' \in \neigh_{-1}(\by)} g(\exp(-2\alpha_j))}{\sum_{\by' \in \neigh_{+1}(\by)} g(\exp(2\alpha_j))} \cr
  &\qquad = \frac{1}{2} + \left(\frac{1}{2} \frac{\sum_{\bx' \in \neigh_{+1}(\bx)} g(\exp(2\alpha_j))}{\sum_{\bx' \in \neigh_{-1}(\bx)} g(\exp(-2\alpha_j))} \wedge \frac{1}{2} \frac{\sum_{\by' \in \neigh_{-1}(\by)} g(\exp(-2\alpha_j))}{\sum_{\by' \in \neigh_{+1}(\by)} g(\exp(2\alpha_j))} \right) \cr
  &\qquad \geq \frac{1}{2}.
\end{align*}

We can exploit the structure of the model studied in \autoref{sec:simple_Ising} to obtain a more explicit expression of the first term in the parentheses (we can proceed similarly with the second term). For that, it will be useful to introduce notation. Let $\uparrow(\bx)$ be the states $\bx' \in \neigh(\bx)$ that increase the value of $\pi_n$, compared with $\bx$. Analogously, let $\downarrow(\bx)$ be the states $\bx' \in \neigh(\bx)$ that decrease the value of $\pi_n$.  Note that $|\uparrow(\bx)| = d(\bx)$. We have that
\begin{align*}
    \frac{\sum_{\bx' \in \neigh_{+1}(\bx)} g(\exp(2\alpha_j))}{\sum_{\bx' \in \neigh_{-1}(\bx)} g(\exp(-2\alpha_j))} &= \frac{|\neigh_{+1}(\bx) \cap \downarrow(\bx)| g(\exp(-2c)) + |\neigh_{+1}(\bx) \cap \uparrow(\bx)| g(\exp(2c))}{|\neigh_{-1}(\bx) \cap \downarrow(\bx)| g(\exp(-2c)) + |\neigh_{-1}(\bx) \cap \uparrow(\bx)| g(\exp(2c))} \cr
    &= \frac{|\neigh_{+1}(\bx) \cap \downarrow(\bx)| \exp(-2c) + |\neigh_{+1}(\bx) \cap \uparrow(\bx)| }{|\neigh_{-1}(\bx) \cap \downarrow(\bx)| \exp(-2c) + |\neigh_{-1}(\bx) \cap \uparrow(\bx)|} \cr
    &= \frac{(n / 2 - |\neigh_{-1}(\bx) \cap \uparrow(\bx)|) \exp(-2c) + (n / 2 -  |\neigh_{-1}(\bx) \cap \downarrow(\bx)|) }{|\neigh_{-1}(\bx) \cap \downarrow(\bx)| \exp(-2c) + |\neigh_{-1}(\bx) \cap \uparrow(\bx)|} \cr
    &= \frac{n / 2(1 + \exp(-2c)) - (|\neigh_{-1}(\bx) \cap \downarrow(\bx)| + |\neigh_{-1}(\bx) \cap \uparrow(\bx)| \exp(-2c)) }{|\neigh_{-1}(\bx) \cap \downarrow(\bx)| \exp(-2c) + |\neigh_{-1}(\bx) \cap \uparrow(\bx)|}.
\end{align*}
In the second equality, we used that $g(x) / g(1 / x) = x$ for $x > 0$. In the third equality, we used that $n = |\neigh_{+1}(\bx) \cap \downarrow(\bx)| + |\neigh_{+1}(\bx) \cap \uparrow(\bx)| + |\neigh_{-1}(\bx) \cap \downarrow(\bx)| + |\neigh_{-1}(\bx) \cap \uparrow(\bx)|$ with $|\neigh_{+1}(\bx) \cap \uparrow(\bx)| + |\neigh_{-1}(\bx) \cap \downarrow(\bx)| = n/2$ (a result that is proved below), implying that $|\neigh_{+1}(\bx) \cap \downarrow(\bx)| + |\neigh_{-1}(\bx) \cap \uparrow(\bx)| = n/2$. Note that we cannot have $|\neigh_{-1}(\bx)| = 0$ as mentioned above, implying that the denominator is greater than 0.

We used that $|\neigh_{+1}(\bx) \cap \uparrow(\bx)| + |\neigh_{-1}(\bx) \cap \downarrow(\bx)| = n/2$. This follows from the fact that there are $n/2$ indices in the external field with $\alpha_i = c$. Indeed, for each of these $n/2$ indices, either the spin is aligned with the external field (there are $|\neigh_{-1}(\bx) \cap \downarrow(\bx)|$ such indices), or the spin is not aligned with the external field (there are $|\neigh_{+1}(\bx) \cap \uparrow(\bx)|$ such indices).

Also, we have that $0 \leq |\neigh_{-1}(\bx) \cap \downarrow(\bx)| \leq n/2$ and $0 \leq |\neigh_{-1}(\bx) \cap \uparrow(\bx)| \leq n/2$ (but we cannot have $0 = |\neigh_{-1}(\bx) \cap \downarrow(\bx)| =  |\neigh_{-1}(\bx) \cap \uparrow(\bx)|$), and these variables are independent, in the sense that it is possible for example to have $|\neigh_{-1}(\bx) \cap \downarrow(\bx)| = n/2$ and $|\neigh_{-1}(\bx) \cap \uparrow(\bx)| = n/2$ simultaneously. When $|\neigh_{-1}(\bx) \cap \downarrow(\bx)| = n/2$ and $|\neigh_{-1}(\bx) \cap \uparrow(\bx)| = n/2$,
\[
 \frac{n / 2(1 + \exp(-2c)) - (|\neigh_{-1}(\bx) \cap \downarrow(\bx)| + |\neigh_{-1}(\bx) \cap \uparrow(\bx)| \exp(-2c)) }{|\neigh_{-1}(\bx) \cap \downarrow(\bx)| \exp(-2c) + |\neigh_{-1}(\bx) \cap \uparrow(\bx)|} = 0.
\]

Putting all those results together yields
\begin{align*}
 P_{\rev, n}(\bx, \by) \geq P_{\MH, n}(\bx, \by) \left(\frac{1}{2} + \left(\frac{1}{2} \frac{\sum_{\bx' \in \neigh_{+1}(\bx)} g(\exp(2\alpha_j))}{\sum_{\bx' \in \neigh_{-1}(\bx)} g(\exp(-2\alpha_j))} \wedge \frac{1}{2} \frac{\sum_{\by' \in \neigh_{-1}(\by)} g(\exp(-2\alpha_j))}{\sum_{\by' \in \neigh_{+1}(\by)} g(\exp(2\alpha_j))} \right)\right),
\end{align*}
and the minimum value of
\[
 \left(\frac{1}{2} \frac{\sum_{\bx' \in \neigh_{+1}(\bx)} g(\exp(2\alpha_j))}{\sum_{\bx' \in \neigh_{-1}(\bx)} g(\exp(-2\alpha_j))} \wedge \frac{1}{2} \frac{\sum_{\by' \in \neigh_{-1}(\by)} g(\exp(-2\alpha_j))}{\sum_{\by' \in \neigh_{+1}(\by)} g(\exp(2\alpha_j))} \right)
\]
is 0.

To prove that this bound cannot essentially be improved we analyse the probability of transitions when the current state is $\bx = (+1, \ldots, +1)$. In this case, the only possible transitions are to $\by$ with one of the components equal to $-1$. We thus have $\neigh_{-1}(\bx) = \neigh(\bx)$. Therefore,
\begin{align}\label{eq:relation_rev_MH}
 P_{\rev, n}(\bx, \by) &= \frac{1}{2} \frac{g(\pi_n(\by) / \pi_n(\bx))}{c_{n}(\bx)} \left(1 \wedge \frac{c_{n}(\bx)}{c_{n}(\by)}\frac{c_{n}(\by)}{c_{n, -\nu}(\by)}\right) \cr
 &= \frac{1}{2} \frac{g(\pi_n(\by) / \pi_n(\bx))}{c_{n}(\bx)} \left(1 \wedge \frac{c_{n}(\bx)}{c_{n}(\by)}\right) \frac{\left(1 \wedge \frac{c_{n}(\bx)}{c_{n}(\by)}\frac{c_{n}(\by)}{c_{n, -\nu}(\by)}\right)}{ \left(1 \wedge \frac{c_{n}(\bx)}{c_{n}(\by)}\right) } \cr
 &\leq \frac{1}{2} P_{\MH, n}(\bx, \by) \left(1 \wedge \frac{c_{n}(\bx)}{c_{n}(\by)}\right)^{-1}.
\end{align}
The lower bound of $\frac{c_n(\bx)}{c_n(\by)}$ is attained when the coordinate that is modified from $\bx$ to $\by$ yields a decrease in $\pi_n$. This implies that $|\uparrow(\by)| = |\uparrow(\bx)| + 1$ and $|\downarrow(\by)| = |\downarrow(\bx)| - 1$. Therefore,
 \begin{align*}
  \frac{c_n(\bx)}{c_n(\by)} &= \frac{\sum_{\bx' \in \uparrow(\bx)} g(\exp(2c)) + \sum_{\bx' \in \downarrow(\bx)} g(\exp(-2c))}{\sum_{\bx' \in \uparrow(\by)} g(\exp(2c)) + \sum_{\bx' \in \downarrow(\by)} g(\exp(-2c))} \cr
  &= \frac{g(\exp(2c)) |\uparrow(\bx)| + g(\exp(-2c)) |\downarrow(\bx)|}{g(\exp(2c)) |\uparrow(\by)| + g(\exp(-2c)) |\downarrow(\by)|} \cr
  &= 1 - \frac{g(\exp(2c)) - g(\exp(-2c))}{g(\exp(2c)) |\uparrow(\by)| + g(\exp(-2c)) |\downarrow(\by)|} \cr
  &\geq 1 - \frac{g(\exp(2c)) - g(\exp(-2c))}{n g(\exp(-2c))},
 \end{align*}
using that $g(\exp(2c)) \geq g(\exp(-2c))$ and $n = |\uparrow(\by)| + |\downarrow(\by)|$. When $n > \exp(2c) -1$, we have that
\[
 0 < \frac{g(\exp(2c)) - g(\exp(-2c))}{n g(\exp(-2c))} < 1,
\]
using that $g(x) / g(1 / x) = x$ for $x > 0$. Therefore,
\[
 P_{\rev, n}(\bx, \by) \leq \frac{1}{2} P_{\MH, n}(\bx, \by) \left(1 - \frac{g(\exp(2c)) - g(\exp(-2c))}{n g(\exp(-2c))}\right)^{-1}.
\]
\end{proof}

\begin{proof}[Proof of \autoref{prop:simple_ising_2}]
 To prove the result, we study the normalizing constants $c_n(\bx)$ and $c_{n,\nu}(\bx)$, and more precisely, their relation. For that, it will be useful to exploit \eqref{eq:simple_Ising}. When proposing $\bx'$ from $\bx$, a coordinate, say $x_j$, changes from $-1$ to $+1$, or the opposite, implying that there exists $j$ such that
\begin{align}
 \frac{\pi_n(\bx')}{\pi_n(\bx)} = \exp\left(\sum_{i = 1}^n \alpha_i x_i' - \sum_{i = 1}^n \alpha_i x_i\right) &= \exp\left(\alpha_j (x_j' - x_j)\right) \nonumber \\
 &= \begin{cases}
  \exp(2 \alpha_j) \quad \text{if} \quad x_j = -1, \cr
      \exp(-2 \alpha_j) \quad \text{if} \quad x_j = +1.
 \end{cases} \label{eqn:proof_prop_simple_ising_1}
\end{align}
Therefore,
\[
 c_n(\bx) = \sum_{\bx' \in \neigh(\bx)} g\left(\frac{\pi(\bx')}{\pi(\bx)}\right) = \sum_{\bx' \in \neigh_{-1}(\bx)} g(\exp(-2\alpha_j)) +  \sum_{\bx' \in \neigh_{+1}(\bx)} g(\exp(2\alpha_j)).
\]
Also,
\[
 c_{n, -1}(\bx) = \sum_{\bx' \in \neigh_{-1}(\bx)} g\left(\frac{\pi(\bx')}{\pi(\bx)}\right) = \sum_{\bx' \in \neigh_{-1}(\bx)} g(\exp(-2\alpha_j)).
\]

We are now ready to analyse ratio of normalizing constants, such as $c_{n, -1}(\bx) /  (c_n(\bx) / 2)$, and provide bounds. Using similar arguments as below, we obtain the same bounds for the ratio $c_{n, +1}(\bx) /  (c_n(\bx) / 2)$. We have that
\begin{align*}
 \frac{c_{n,-1}(\bx)}{c_n(\bx) / 2} &= \frac{\sum_{\bx' \in \neigh_{-1}(\bx)} g(\exp(-2\alpha_j))}{\frac{1}{2} \sum_{\bx' \in \neigh_{-1}(\bx)} g(\exp(-2\alpha_j)) + \frac{1}{2} \sum_{\bx' \in \neigh_{+1}(\bx)} g(\exp(2\alpha_j))} \cr
 &= 1 + \frac{\frac{1}{2}\sum_{\bx' \in \neigh_{-1}(\bx)} g(\exp(-2\alpha_j)) - \frac{1}{2} \sum_{\bx' \in \neigh_{+1}(\bx)} g(\exp(2\alpha_j))}{\frac{1}{2} \sum_{\bx' \in \neigh_{-1}(\bx)} g(\exp(-2\alpha_j)) + \frac{1}{2} \sum_{\bx' \in \neigh_{+1}(\bx)} g(\exp(2\alpha_j))} \cr
 &= 1 + \frac{\frac{1}{2}\left(\sum_{\bx' \in \neigh_{-1}(\bx)} g(\exp(-2\alpha_j)) - \sum_{\bx' \in \neigh_{+1}(\bx)} g(\exp(2\alpha_j))\right)}{c_n(\bx) / 2}.
\end{align*}

Let us analyse the numerator of the ratio in more detail. We have
\begin{align*}
 &\sum_{\bx' \in \neigh_{-1}(\bx)} g(\exp(-2\alpha_j)) - \sum_{\bx' \in \neigh_{+1}(\bx)} g(\exp(2\alpha_j)) \cr
 &\quad= \sum_{\bx' \in \neigh_{-1}(\bx) \cap \uparrow(\bx)} g(\exp(2c)) + \sum_{\bx' \in \neigh_{-1}(\bx) \cap \downarrow(\bx)} g(\exp(-2c)) \cr
 &\qquad- \sum_{\bx' \in \neigh_{+1}(\bx) \cap \uparrow(\bx)} g(\exp(2c)) - \sum_{\bx' \in \neigh_{+1}(\bx) \cap \downarrow(\bx)} g(\exp(-2c)) \cr
 &\quad = g(\exp(2c)) \left(|\neigh_{-1}(\bx) \cap \uparrow(\bx)| - |\neigh_{+1}(\bx) \cap \uparrow(\bx)|\right) \cr
 &\qquad + g(\exp(-2c)) \left(|\neigh_{-1}(\bx) \cap \downarrow(\bx)| - |\neigh_{+1}(\bx) \cap \downarrow(\bx)|\right) \cr
 &\quad = \left(g(\exp(2c)) - g(\exp(-2c))\right) \left(|\neigh_{-1}(\bx) \cap \uparrow(\bx)| - |\neigh_{+1}(\bx) \cap \uparrow(\bx)|\right) \cr
 &\qquad + g(\exp(-2c)) \left(|\neigh_{-1}(\bx)| - |\neigh_{+1}(\bx)|\right),
 \end{align*}
 using that $|\neigh_{\nu}(\bx) \cap \downarrow(\bx)| = |\neigh_{\nu}(\bx)| - |\neigh_{\nu}(\bx) \cap \uparrow(\bx)|$ (see the proof of \autoref{prop:simple_ising_1} for the definitions of $\uparrow(\bx)$ and $\downarrow(\bx)$).

 Therefore,
 \begin{align*}
  &\left|\sum_{\bx' \in \neigh_{-1}(\bx)} g(\exp(-2\alpha_j)) - \sum_{\bx' \in \neigh_{+1}(\bx)} g(\exp(2\alpha_j))\right| \cr
   &\quad\leq \left(g(\exp(2c)) - g(\exp(-2c))\right) \left||\neigh_{-1}(\bx) \cap \uparrow(\bx)| - |\neigh_{+1}(\bx) \cap \uparrow(\bx)|\right| \cr
   &\qquad +  g(\exp(-2c)) \left||\neigh_{-1}(\bx)| - |\neigh_{+1}(\bx)|\right|.
 \end{align*}

 We now explain how we get a bound in terms of $d(\bx)$. We first explain that
 \begin{align}\label{eqn:proof_prop_simple_ising_2}
  \left||\neigh_{-1}(\bx) \cap \uparrow(\bx)| - |\neigh_{+1}(\bx) \cap \uparrow(\bx)|\right| \leq d(\bx).
 \end{align}
 Let us consider the case where
 \[
  |\neigh_{-1}(\bx) \cap \uparrow(\bx)| \geq |\neigh_{+1}(\bx) \cap \uparrow(\bx)|.
 \]
 The explanation for the other case is analogous. Using that
 \[
  |\neigh_{-1}(\bx) \cap \uparrow(\bx)| = |\uparrow(\bx)| - |\neigh_{+1}(\bx) \cap \uparrow(\bx)| = d(\bx) - |\neigh_{+1}(\bx) \cap \uparrow(\bx)|,
  \]
   we have that
 \begin{align*}
  |\neigh_{-1}(\bx) \cap \uparrow(\bx)| - |\neigh_{+1}(\bx) \cap \uparrow(\bx)| = d(\bx) - 2 |\neigh_{+1}(\bx) \cap \uparrow(\bx)| \leq d(\bx).
 \end{align*}

 We now explain that
 \[
  \left||\neigh_{-1}(\bx)| - |\neigh_{+1}(\bx)|\right| \leq 2 d(\bx).
 \]
 Using that $n = |\neigh_{-1}(\bx)| + |\neigh_{+1}(\bx)|$ and that $|\neigh_{-1}(\bx)| = |\neigh_{-1}(\bx) \cap \uparrow(\bx)| + |\neigh_{-1}(\bx) \cap \downarrow(\bx)|$, we have that
 \begin{align*}
  &\left||\neigh_{-1}(\bx)| - |\neigh_{+1}(\bx)|\right| = \left|2 |\neigh_{-1}(\bx)| - n\right| \cr
  &= \left|2\left(|\neigh_{-1}(\bx) \cap \uparrow(\bx)| - |\neigh_{+1}(\bx) \cap \uparrow(\bx)| + |\neigh_{+1}(\bx) \cap \uparrow(\bx)| + |\neigh_{-1}(\bx) \cap \downarrow(\bx)|\right) - n\right|.
 \end{align*}
  Recall that $|\neigh_{+1}(\bx) \cap \uparrow(\bx)| + |\neigh_{-1}(\bx) \cap \downarrow(\bx)| = n/2$ (see the proof of \autoref{prop:simple_ising_1}). Consequently, following \eqref{eqn:proof_prop_simple_ising_2},
  \begin{align*}
  &\left||\neigh_{-1}(\bx)| - |\neigh_{+1}(\bx)|\right| = \left|2|\neigh_{-1}(\bx) \cap \uparrow(\bx)| - |\neigh_{+1}(\bx) \cap \uparrow(\bx)|\right|\leq 2d(\bx).
 \end{align*}

Therefore,
 \begin{align*}
  &\left|\sum_{\bx' \in \neigh_{-1}(\bx)} g(\exp(-2\alpha_j)) - \sum_{\bx' \in \neigh_{+1}(\bx)} g(\exp(2\alpha_j))\right| \cr
   &\quad\leq \left(g(\exp(2c)) - g(\exp(-2c))\right) d(\bx) + 2 g(\exp(-2c)) d(\bx) = \left(g(\exp(2c)) + g(\exp(-2c))\right) d(\bx),
 \end{align*}
 which concludes the proof.
\end{proof}

\begin{proof}[Proof of \autoref{prop:simple_ising_3}]
    We have that
    \[
     \pi_n(d(\bx) = k) = \frac{{n \choose k} \exp(-2ck)}{(1 + \exp(-2c))^n} = {n \choose k}\left(\frac{\exp(-2c)}{1+\exp(-2c)}\right)^k \left(\frac{1}{1+\exp(-2c)}\right)^{n - k}.
    \]
    Therefore, $d(\bx)$ has a binomial distribution with parameters $n$ and $p = \frac{\exp(-2c)}{1+\exp(-2c)}$. Let us define a sequence of independent random variables $\{Y_i\}_{i=1}^n$ with each of them following a Bernoulli distribution of parameter $p$.

    Consider that $\lfloor\phi(n)\rfloor \leq np$. We have that
    \begin{align*}
     1 - \pi_n(\itXset_n) = \Prob(d(\bx) \geq \lfloor\phi(n)\rfloor) = \Prob\left(\sum_{i=1}^n Y_i \geq \lfloor\phi(n)\rfloor\right) &\geq \Prob\left(\sum_{i=1}^n Y_i \geq np\right) \cr
      &= \Prob\left(\sqrt{n}\left(\frac{1}{n}\sum_{i=1}^n Y_i - p\right) \geq 0\right) \rightarrow \frac{1}{2},
    \end{align*}
    by the central limit theorem. This allows to establish the first part of \autoref{prop:simple_ising_3}.

    Now, consider that $n > \lfloor\phi(n)\rfloor > np$ with $\lfloor\phi(n)\rfloor / n - p$ converging towards a positive constant. Hoeffding's inequality indicates that
    \begin{align*}
        \Prob(d(\bx) \geq \lfloor\phi(n)\rfloor) = \Prob\left(\sum_{i=1}^n Y_i \geq \lfloor\phi(n)\rfloor\right) &= \Prob\left(\sum_{i=1}^n Y_i - np \geq \lfloor\phi(n)\rfloor - np\right) \cr
        &\leq \exp(-2(\lfloor\phi(n)\rfloor - np)^2 /n) \cr
        &= \exp(-2n(\lfloor\phi(n)\rfloor/n - p)^2).
    \end{align*}
 Therefore, $1 - \pi_n(\itXset_n)$ converges to 0 at an exponential rate.
\end{proof}

\begin{proof}[Proof of \autoref{prop:simple_ising_4}]
 We saw in the proof of Proposition 2, that, for any $\bx, \by$ such that $\by \in \neigh(\bx)$ and $\by \neq \bx$,
 \begin{align*}
 P_{\rev, n}(\bx, \by) &\geq P_{\MH, n}(\bx, \by) \left(\frac{1}{2} + \left(\frac{1}{2} \frac{\sum_{\bx' \in \neigh_{+1}(\bx)} g(\exp(2\alpha_j))}{\sum_{\bx' \in \neigh_{-1}(\bx)} g(\exp(-2\alpha_j))} \wedge \frac{1}{2} \frac{\sum_{\by' \in \neigh_{-1}(\by)} g(\exp(-2\alpha_j))}{\sum_{\by' \in \neigh_{+1}(\by)} g(\exp(2\alpha_j))} \right)\right),
\end{align*}
with
 \begin{align*}
    \frac{\sum_{\bx' \in \neigh_{+1}(\bx)} g(\exp(2\alpha_j))}{\sum_{\bx' \in \neigh_{-1}(\bx)} g(\exp(-2\alpha_j))} &=  \frac{n / 2(1 + \exp(-2c)) - (|\neigh_{-1}(\bx) \cap \downarrow(\bx)| + |\neigh_{-1}(\bx) \cap \uparrow(\bx)| \exp(-2c)) }{|\neigh_{-1}(\bx) \cap \downarrow(\bx)| \exp(-2c) + |\neigh_{-1}(\bx) \cap \uparrow(\bx)|},
\end{align*}
when we obtain $\by$ from $\bx$ by changing a coordinate, say $x_j$, from $+1$ to $-1$, implying that the transition is associated with $\nu = -1$ (see the proof of \autoref{prop:simple_ising_1} for the definitions of $\uparrow(\bx)$ and $\downarrow(\bx)$). We can obtain the same lower bound in the opposite situation (when the transition is associated with $\nu = +1$).

We have that $0 \leq |\neigh_{-1}(\bx) \cap \downarrow(\bx)| \leq n/2$ and
\[
 0 \leq |\neigh_{-1}(\bx) \cap \uparrow(\bx)| \leq n/2 \wedge \lfloor \phi(n) \rfloor,
\]
but we cannot have $0 = |\neigh_{-1}(\bx) \cap \downarrow(\bx)| = |\neigh_{-1}(\bx) \cap \uparrow(\bx)|$ as explained in the proof of \autoref{prop:simple_ising_1}. We now show that the lower bound of
\[
 \frac{n / 2(1 + \exp(-2c)) - (|\neigh_{-1}(\bx) \cap \downarrow(\bx)| + |\neigh_{-1}(\bx) \cap \uparrow(\bx)| \exp(-2c)) }{|\neigh_{-1}(\bx) \cap \downarrow(\bx)| \exp(-2c) + |\neigh_{-1}(\bx) \cap \uparrow(\bx)|}
\]
is attained when $|\neigh_{-1}(\bx) \cap \downarrow(\bx)|$ and $|\neigh_{-1}(\bx) \cap \uparrow(\bx)|$ are at their upper bounds. To achieve this, we consider $|\neigh_{-1}(\bx) \cap \downarrow(\bx)|$ and $|\neigh_{-1}(\bx) \cap \uparrow(\bx)|$ as continuous variables, given by $x$ and $y$, respectively, and calculate the derivatives of the log of the function.

We have that
\begin{align*}
 &\frac{\partial}{\partial x} \log\left(\frac{n / 2(1 + \exp(-2c)) - (x + y \exp(-2c)) }{x \exp(-2c) + y}\right) \cr
 &\qquad = -\frac{1}{n / 2(1 + \exp(-2c)) - (x + y \exp(-2c))} - \frac{\exp(-2c)}{x \exp(-2c) + y},
\end{align*}
and
\begin{align*}
 &\frac{\partial}{\partial y} \log\left(\frac{n / 2(1 + \exp(-2c)) - (x + y \exp(-2c)) }{x \exp(-2c) + y}\right) \cr
 &\qquad = -\frac{\exp(-2c)}{n / 2(1 + \exp(-2c)) - (x + y \exp(-2c))} - \frac{1}{x \exp(-2c) + y}.
\end{align*}
We also have that
\[
0 \leq n / 2(1 + \exp(-2c)) - (x + y \exp(-2c))  < n / 2(1 + \exp(-2c)),
\]
given that $0 < x, y \leq n/2$,
with the lower bound of
\[
 n / 2(1 + \exp(-2c)) - (x + y \exp(-2c))
\]
 that is attained when $x = y = n/2$. Therefore, the partial derivatives are strictly negative on $0 < x, y < n/2$ (and when either $x$ or $y$ is equal to $n/2$ and the other variable is smaller than $n/2$), and go to $-\infty$ when $ x, y \rightarrow  n/2$. Recall that
\[
 \frac{n / 2(1 + \exp(-2c)) - (|\neigh_{-1}(\bx) \cap \downarrow(\bx)| + |\neigh_{-1}(\bx) \cap \uparrow(\bx)| \exp(-2c)) }{|\neigh_{-1}(\bx) \cap \downarrow(\bx)| \exp(-2c) + |\neigh_{-1}(\bx) \cap \uparrow(\bx)|} = 0
\]
is attained when $|\neigh_{-1}(\bx) \cap \downarrow(\bx)| = |\neigh_{-1}(\bx) \cap \uparrow(\bx)| = n / 2$.

Consequently, the lower bound on $\tXset_n$ is given by
\[
 \frac{n / 2(1 + \exp(-2c)) - \left(n/2 + \lfloor n\frac{\exp(-2c)}{1 + \exp(-2c)}(1 + \varepsilon) + 1\rfloor \exp(-2c)\right) }{(n / 2) \exp(-2c) + \lfloor n\frac{\exp(-2c)}{1 + \exp(-2c)}(1 + \varepsilon) + 1\rfloor}= \frac{1 - 2  \frac{\lfloor n\frac{\exp(-2c)}{1 + \exp(-2c)}(1 + \varepsilon) + 1\rfloor}{n}}{1 + 2 \frac{\lfloor n\frac{\exp(-2c)}{1 + \exp(-2c)}(1 + \varepsilon) + 1\rfloor}{n \exp(-2c)}}.
\]
Note that we can obtain the same bound for
\[
  \frac{\sum_{\by' \in \neigh_{-1}(\by)} g(\exp(-2\alpha_j))}{\sum_{\by' \in \neigh_{+1}(\by)} g(\exp(2\alpha_j))}.
\]
Therefore,
\begin{align*}
 P_{\rev, n}(\bx, \by) &\geq \omega(n) P_{\MH, n}(\bx, \by),
\end{align*}
with
\[
\omega(n) = \frac{1}{2} + \frac{1}{2} \frac{1 - 2  \frac{\lfloor n\frac{\exp(-2c)}{1 + \exp(-2c)}(1 + \varepsilon) + 1\rfloor}{n}}{1 + 2 \frac{\lfloor n\frac{\exp(-2c)}{1 + \exp(-2c)}(1 + \varepsilon) + 1\rfloor}{n \exp(-2c)}} \rightarrow \frac{1}{2} + \frac{1}{2} \frac{1 - 2  \frac{\exp(-2c)(1 + \varepsilon)}{1 + \exp(-2c)}}{1 + 2 \frac{1 + \varepsilon}{1 + \exp(-2c)}} = \overline{\omega}.
\]
\end{proof}

Before presenting the proof of \autoref{prop:simple_ising_5}, we briefly explain how we proceed, which will motivate the introduction of a lemma. To prove \autoref{prop:simple_ising_5}, we first derive a lower bound on the right (and absolute) spectral gap of a lazy version of Glauber dynamics when used to sample from $\pi_n$ (recall \eqref{eq:simple_Ising}). Denote by $P_{\text{G}, n}^{(\text{L})}$ the Markov kernel associated to the produced Markov chain. Next, we provide an order between $P_{\MH, n}$ and $P_{\text{G}, n}^{(\text{L})}$. This allows to have an order on the right spectral gaps using Theorem 2 in \cite{zanella2019informed}: if $P_{\MH, n}(\bx, \by) \geq \omega(n) P_{\text{G}, n}^{(\text{L})}(\bx, \by)$ for all $(\bx, \by) \in \Xset_n^2$ with $\bx \neq \by$, then the spectral gap of $P_{\MH, n}$ is lower bounded by $\omega(n)$ times that of $P_{\text{G}, n}^{(\text{L})}$. The lower bound of the spectral gap of $P_{\text{G}, n}^{(\text{L})}$ thus allows to characterize the speed decay of the right spectral gap of $P_{\MH, n}$, as well as that of $P_{\rev, n}$ by \autoref{prop:simple_ising_1}.

We introduce the lower bound of the spectral gap of $P_{\text{G}, n}^{(\text{L})}$ in a lemma, but beforehand, we describe the algorithm which is in fact valid whenever $\pi_n$ factorizes, i.e.
\[
 \pi_n(\bx) = \prod_{i = 1}^n f_i(x_i), \quad \bx \in \re^n,
\]
and it is possible to sample from the marginal distributions. Note that it is the case for $\pi_n$ defined in \eqref{eq:simple_Ising}.

\begin{algorithm}[ht]
\caption{Glauber dynamics for factorized target distributions} \label{algo:Glauber}

 \begin{itemize}

  \item[1.] Sample $i \sim \mathcal{U}\{1, \ldots, n\}$.

  \item[2.] Sample $w \sim f_i$.

  \item[3.] Set the next state of the chain, denoted by $\bx'$, as follows: $x_j' = x_j$ for all $j \neq i$, and $x'_i = w$, where $\bx$ denotes the current state.

  \item[4.] Go to Step 1.

 \end{itemize}

\end{algorithm}

We set $P_{\text{G}, n}^{(\text{L})}$ to be the lazy version of the transition kernel of the Markov chain simulated by \autoref{algo:Glauber}.

\begin{Lemma}\label{lemma:Glauber}
$P_{\text{G}, n}^{(\text{L})}$ has a spectral gap with a lower bound that decreases as $n$ increases at a speed of $n \log n$.
\end{Lemma}

\begin{proof}
 To prove the result, we provide a mixing-time bound and use it to derive a bound on the spectral gap. The mixing time is defined as follows:
 \[
  t_{\text{mix}}(\varepsilon) := \inf \left\{t \in \na: \sup_{\bx} \|P_{\text{G}, n}^{(\text{L})}( \bx, \cdot \,)^t - \pi_n\|_{\tv} \leq \varepsilon\right\},
 \]
 with $0 < \varepsilon < 1$. To identify a bound, we use that
 \[
  \|P_{\text{G}, n}^{(\text{L})}( \bx, \cdot \,)^t - \pi_n\|_{\tv} \leq \sup_{\bx_0, \by_0} \Prob_{\bx_0, \by_0}(\bX_t \neq \bY_t),
 \]
 with $\Prob_{\bx_0, \by_0}$ being a joint distribution of $\{\bX_t\}$ and $\{\bY_t\}$ with $\bX_0 = \bx_0$ and $\bY_0 = \by_0$, and $\Prob_{\bx_0, \by_0}(\bX_t \in A) = P_{\text{G}, n}^{(\text{L})}( \bx_0, A)^t$ and $\Prob_{\bx_0, \by_0}(\bY_t \in A) = P_{\text{G}, n}^{(\text{L})}( \by_0, A)^t$, for all $t$ \citep[Corollary 5.5]{levin2017markov}. Define $X_{t,i}$ and $Y_{t,i}$ to be the $i$-th components of $\bX_t$ and $\bY_t$, respectively.

 We now describe a transition of $\{\bX_t\}$ and $\{\bY_t\}$. Let us consider that the current states are $\bX_t = \bx$ and $\bY_t = \by$. With probability $0.5$, set $\bX_{t+1} = \bx$ and $\bY_{t+1} = \by$. With probability $0.5$, sample $i \sim \mathcal{U}\{1, \ldots, n\}$ and $w \sim f_i$, and set $X_{t+1,j} = x_j$ and $Y_{t+1,j} = y_j$, for all $j \neq i$, and $X_{t+1,i} = Y_{t+1,i} = w$.

 We thus have that, marginally, the transitions of $\{\bX_t\}$ and $\{\bY_t\}$ are lazy versions of that in \autoref{algo:Glauber}. Also, once a component index is selected in a transition of $\{\bX_t\}$ and $\{\bY_t\}$, then $X_{t+1,i}$ and $Y_{t+1,i}$ become equal and remain equal in the following transitions. Therefore,
 \begin{align*}
  \Prob_{\bx_0, \by_0}(\bX_t \neq \bY_t) = \Prob_{\bx_0, \by_0}\left(\cup_{j = 1}^n \{X_{t, j} \neq Y_{t, j}\}\right) &\leq \sum_{j = 1}^n \Prob_{\bx_0, \by_0}\left(X_{t, j} \neq Y_{t, j}\right) \cr
  &= \sum_{j = 1}^n \left(1 - \frac{1}{2 n}\right)^t \cr
  &= n \left(1 - \frac{1}{2 n}\right)^t.
 \end{align*}
  To have
  \[
  n \left(1 - \frac{1}{2 n}\right)^t \leq \varepsilon
  \]
  we need to have
  \[
   t \geq \frac{\log(\varepsilon / n)}{\log\left(1 - \frac{1}{2n}\right)} \geq 2n \log(n / \varepsilon),
  \]
  using that $\log\left(1 - \frac{1}{2n}\right) \leq - \frac{1}{2n}$. This implies that $t_{\text{mix}}(\varepsilon) \leq 2n \log(n / \varepsilon)$.

  We conclude the proof using that the spectral gap is lower bounded by
  \[
   \left(2n \log\left(\frac{n}{\varepsilon}\right)\Big/ \log\left(\frac{1}{2 \varepsilon}\right) + 1\right)^{-1},
  \]
  using Proposition 1.1 in \cite{jerison2013general} which provides a lower bound on the mixing time in terms of the absolute spectral gap.
\end{proof}

\begin{proof}[Proof of \autoref{prop:simple_ising_5}]
 To prove this result, we establish that $P_{\MH, n}(\bx, \by) \geq \omega(n) P_{\text{G}, n}^{(\text{L})}(\bx, \by)$ for all $(\bx, \by) \in \Xset_n^2$ with $\bx \neq \by$. This allows to establish a lower bound on the right spectral gap of $P_{\MH, n}$ by Theorem 2 in \cite{zanella2019informed} and using \autoref{lemma:Glauber}. Indeed, combining these results yields a lower bound on the spectral gap of $P_{\MH, n}$ given by
  \[
    \omega(n) \left(2n \log\left(\frac{n}{\varepsilon}\right)\Big/ \log\left(\frac{1}{2 \varepsilon}\right) + 1\right)^{-1},
  \]
  for any $\varepsilon \in (0, 1)$. We obtain a lower bound on the right spectral gap of $P_{\rev, n}$ similarly using Proposition 2.

 Let us consider $\by \neq \bx$, a state that is reachable from $\bx$. Note that the same states $\by$ are reachable from $\bx$ in $P_{\MH, n}$ and $P_{\text{G}, n}^{(\text{L})}$. We have that
 \[
  P_{\text{G}, n}^{(\text{L})}(\bx, \by) \leq \frac{1}{2n} \frac{\ee^c}{\ee^c + \ee^{-c}},
 \]
 using that $f_i(x_i) \leq \frac{\ee^c}{\ee^c + \ee^{-c}}$.

 Now, we find a lower bound on $P_{\MH, n}(\bx, \by)$. We have that
 \[
  P_{\MH, n}(\bx, \by) = \frac{g(\pi_n(\by) / \pi_n(\bx))}{c_n(\bx)} \left(1 \wedge \frac{c_n(\bx)}{c_n(\by)} \right) \geq \frac{g(\exp(-2c))}{c_n(\bx)} \left(1 \wedge \frac{c_n(\bx)}{c_n(\by)} \right),
 \]
 using that $g(\pi_n(\by) / \pi_n(\bx)) \geq g(\exp(-2c))$ as seen in the proof of \autoref{prop:simple_ising_1}. Also,
 \begin{align*}
  c_n(\bx) = \sum_{\bx' \in \uparrow(\bx)} g(\exp(2c)) + \sum_{\bx' \in \downarrow(\bx)} g(\exp(-2c)) \leq n g(\exp(2c)),
 \end{align*}
  with $\uparrow(\bx)$ and $\downarrow(\bx)$ defined as in the proof of \autoref{prop:simple_ising_1}, using that $g(\exp(2c)) \geq g(\exp(-2c))$.

  As shown in the proof of \autoref{prop:simple_ising_1}, the lower bound on $\left(1 \wedge \frac{c_n(\bx)}{c_n(\by)} \right)$ is attained when the coordinate that is modified from $\bx$ to $\by$ yields a decrease in $\pi_n$ and is given by:
  \begin{align*}
  \frac{c_n(\bx)}{c_n(\by)} &= 1 - \frac{g(\exp(2c)) - g(\exp(-2c))}{g(\exp(2c)) |\uparrow(\by)| + g(\exp(-2c)) |\downarrow(\by)|} \cr
  &\geq 1 - \frac{g(\exp(2c)) - g(\exp(-2c))}{n g(\exp(-2c))},
 \end{align*}
using that $g(\exp(2c)) \geq g(\exp(-2c))$ and $n = |\uparrow(\by)| + |\downarrow(\by)|$.

Combining the results above yields
\begin{align*}
 P_{\MH, n}(\bx, \by) &\geq \frac{g(\exp(-2c))}{n g(\exp(2c))} \left(1 - \frac{g(\exp(2c)) - g(\exp(-2c))}{n g(\exp(-2c))} \right) \cr
 & \geq 2\frac{g(\exp(-2c))}{g(\exp(2c))}\left(\frac{\ee^c}{\ee^c + \ee^{-c}}\right)^{-1} \left(1 - \frac{g(\exp(2c)) - g(\exp(-2c))}{n g(\exp(-2c))} \right) P_{\text{G}, n}^{(\text{L})}(\bx, \by),
\end{align*}
which concludes the proof.
\end{proof}

\begin{proof}[Proof of \autoref{prop:simple_ising_6}]
 Let $\bX = (X_1, \ldots, X_n) \sim \pi_n$. We have that $X_1, \ldots, X_n$ are independent random variables. We want to analyse
 \[
  \E\left[\left(\frac{\sum_{i=1}^n X_i - \E\left[\sum_{i=1}^n X_i\right]}{\sqrt{\var\left[\sum_{i=1}^n X_i\right]}}\right)^4\right].
 \]
 We have that
 \[
   \Prob(X_i = +1) = \frac{\ee^{\alpha_i}}{\ee^{\alpha_i} + \ee^{-\alpha_i}} = \frac{\ee^{\alpha_i}}{\ee^{c} + \ee^{-c}} = 1 - \Prob(X_i = -1),
 \]
 using that $\alpha_i$ is either $c$ or $-c$. Therefore, $X_i$ is equal in distribution to $2Y_i - 1$, where $Y_i$ has a Bernoulli distribution with parameter $p_i := \frac{\ee^{\alpha_i}}{\ee^{c} + \ee^{-c}}$. We have that $Y_1, \ldots, Y_n$ are independent random variables.

 Consequently,
 \begin{align*}
  \E\left[\left(\frac{\sum_{i=1}^n X_i - \E\left[\sum_{i=1}^n X_i\right]}{\sqrt{\var\left[\sum_{i=1}^n X_i\right]}}\right)^4\right] &= \E\left[\left(\frac{\sum_{i=1}^n (2Y_i - 1) - \E\left[\sum_{i=1}^n 2Y_i - 1\right]}{\sqrt{\var\left[\sum_{i=1}^n 2Y_i - 1\right]}}\right)^4\right] \cr
  &= \E\left[\left(\frac{\sum_{i=1}^n Y_i - \sum_{i=1}^n\E\left[Y_i\right]}{\sqrt{\sum_{i=1}^n\var\left[Y_i\right]}}\right)^4\right] \cr
  &= \frac{\E[\left(\sum_{i=1}^n Z_i\right)^4]}{\left(\sum_{i=1}^n\var\left[Y_i\right]\right)^2},
 \end{align*}
 where $Z_i := Y_i - \E[Y_i]$.

 We now calculate the numerator and denominator. We have that
 \[
 \sum_{i=1}^n\var\left[Y_i\right] = \sum_{i=1}^n \frac{\ee^{\alpha_i}}{\ee^{c} + \ee^{-c}}\frac{\ee^{-\alpha_i}}{\ee^{c} + \ee^{-c}} = \frac{n}{(\ee^{c} + \ee^{-c})^2}.
 \]
 Also,
 \begin{align*}
  \E\left[\left(\sum_{i=1}^n Z_i\right)^4\right] &= \E\left[\sum_{i,j,k,l} Z_iZ_jZ_kZ_l\right] \cr
  &= \sum_{i=1}^n \E[Z_i^4] + 3 \sum_{i \neq j} \E[Z_i^2] \E[Z_j^2] \cr
  &= \sum_{i=1}^n \E[Z_i^4] + 3n(n-1) \frac{1}{(\ee^{c} + \ee^{-c})^4},
 \end{align*}
 using that $\E[Z_i] = 0$.

 Putting together the results for the numerator and denominator, we have that
 \begin{align*}
  \frac{\E[\left(\sum_{i=1}^n Z_i\right)^4]}{\left(\sum_{i=1}^n\var\left[Y_i\right]\right)^2} &= \frac{\sum_{i=1}^n \E[Z_i^4] + 3n(n-1) \frac{1}{(\ee^{c} + \ee^{-c})^4}}{\frac{n^2}{(\ee^{c} + \ee^{-c})^4}} \cr
  &\rightarrow 3,
 \end{align*}
 using that $0 \leq Z_i^4 \leq 1$. This concludes the proof.
\end{proof}

\begin{proof}[Proof of \autoref{prop_invariance_NRJ}]
  It suffices to prove that the probability to reach the state $\mathbf{y}, \boldsymbol\theta_\mathbf{y}'\in A, \nu'$ in one step is equal to the probability of this state under the target:
 \begin{align}\label{eqn_inv}
  \sum_{\mathbf{x}, \nu}\int \pi(\mathbf{x}, \boldsymbol\theta_\mathbf{x}) \times (1 / 2) \left(\int_{A} P((\mathbf{x}, \boldsymbol\theta_\mathbf{x}, \nu), (\mathbf{y}, \mathrm{d}\boldsymbol\theta_\mathbf{y}', \nu'))\, \right) \, \mathrm{d}\boldsymbol\theta_\mathbf{x}
  &= \int_A \pi(\mathbf{y}, \boldsymbol\theta_\mathbf{y}')\times (1 / 2) \, \mathrm{d}\boldsymbol\theta_\mathbf{y}',
 \end{align}
 where $P$ is the transition kernel. Note that we abuse notation here by denoting the integration variable $ \boldsymbol\theta_\mathbf{y}'$ on the LHS given that we in fact use a vector of auxiliary variables $\mathbf{u}_{\bx \mapsto \by}$ to generate the proposal when switching models, which do not necessarily have the same dimension as $\boldsymbol\theta_\mathbf{y}'$.

  We consider two distinct events: a model switch is proposed, that we denote $S$, and a parameter update is proposed (therefore denoted $S^\mathsf{c}$). We know that the probabilities of these events are $1 - \tau$ and $\tau$, respectively. We rewrite the LHS of \eqref{eqn_inv} as
 \begin{align}\label{eqn_inv_1}
  &\sum_{\mathbf{x}, \nu}\int \pi(\mathbf{x}, \boldsymbol\theta_\mathbf{x}) \times (1 / 2) \left(\int_{A} P((\mathbf{x}, \boldsymbol\theta_\mathbf{x}, \nu), (\mathbf{y}, \mathrm{d}\boldsymbol\theta_\mathbf{y}', \nu'))\, \right) \, \mathrm{d}\boldsymbol\theta_\mathbf{x} \cr
  &\quad = \sum_{\mathbf{x}, \nu} \, (1 - \tau) \int \pi(\mathbf{x}, \boldsymbol\theta_\mathbf{x}) \times (1 / 2) \left(\int_{A} P((\mathbf{x}, \boldsymbol\theta_\mathbf{x}, \nu), (\mathbf{y}, \mathrm{d}\boldsymbol\theta_\mathbf{y}', \nu') \mid S)\right) \, \mathrm{d}\boldsymbol\theta_\mathbf{x} \cr
  &\qquad + \sum_{\mathbf{x}, \nu} \,  \tau \int \pi(\mathbf{x}, \boldsymbol\theta_\mathbf{x}) \times (1 / 2) \left(\int_{A} P((\mathbf{x}, \boldsymbol\theta_\mathbf{x}, \nu), (\mathbf{y}, \mathrm{d}\boldsymbol\theta_\mathbf{y}', \nu') \mid S^\mathsf{c}) \right) \, \mathrm{d}\boldsymbol\theta_\mathbf{x}.
 \end{align}
We analyse the two terms separately. We know that
  \[
   P((\mathbf{x}, \boldsymbol\theta_\mathbf{x}, \nu), (\mathbf{y}, \mathrm{d}\boldsymbol\theta_\mathbf{y}', \nu')\mid S^\mathsf{c}) = \delta_{(\mathbf{x}, \nu)}(\mathbf{y}, \nu')\, P_{S^\mathsf{c}}(\boldsymbol\theta_\mathbf{x}, \mathrm{d}\boldsymbol\theta_\mathbf{y}'),
  \]
  where $P_{S^\mathsf{c}}$ is the transition kernel associated with the method used to update the parameters. Therefore, the second term on the RHS of \eqref{eqn_inv_1} is equal to
  \begin{align*}
   &\tau \sum_{\bx, \nu} \int \pi(\mathbf{x}, \boldsymbol\theta_\mathbf{x}) \times (1 / 2) \left(\int_{A} P((\mathbf{x}, \boldsymbol\theta_\mathbf{x}, \nu), (\mathbf{y}, \mathrm{d}\boldsymbol\theta_\mathbf{y}', \nu') \mid S^\mathsf{c}) \right) \, \mathrm{d}\boldsymbol\theta_\mathbf{x} \cr
   &\quad = \tau \times \pi(\mathbf{y}) \times (1 / 2)\int \pi(\boldsymbol\theta_\mathbf{y} \mid \mathbf{y}) \left(\int_{A} P_{S^\mathsf{c}}(\boldsymbol\theta_\mathbf{y}, \mathrm{d}\boldsymbol\theta_\mathbf{y}') \right) \, \mathrm{d}\boldsymbol\theta_\mathbf{y}.
  \end{align*}
  We also know that $P_{S^\mathsf{c}}$ leaves the conditional distribution $\pi(\, \cdot \mid \mathbf{y})$ invariant, implying that
  \begin{align}\label{eqn_conclusion_up}
   &\tau \times \pi(\mathbf{y}) \times (1 / 2)\int \pi(\boldsymbol\theta_\mathbf{y} \mid \mathbf{y}) \left(\int_{A} P_{S^\mathsf{c}}(\boldsymbol\theta_\mathbf{y}, \mathrm{d}\boldsymbol\theta_\mathbf{y}') \right) \, \mathrm{d}\boldsymbol\theta_\mathbf{y} \cr
   &\quad = \tau \times \pi(\mathbf{y}) \times (1 / 2) \int_A  \pi(\boldsymbol\theta_\mathbf{y}' \mid \mathbf{y}) \, \mathrm{d}\boldsymbol\theta_\mathbf{y}' = \tau \int_A \pi(\mathbf{y}, \boldsymbol\theta_\mathbf{y}')\times (1 / 2) \, \mathrm{d}\boldsymbol\theta_\mathbf{y}'.
  \end{align}

  For the model switching case (the first term on the RHS of \eqref{eqn_inv_1}), we use the fact that there is a connection between $P((\mathbf{x}, \boldsymbol\theta_\mathbf{x}, \nu), (\mathbf{y}, \boldsymbol\theta_\mathbf{y}', \nu')\mid S)$ and the kernel associated to a specific RJ. Consider that in this RJ, $q_{\mathbf{x}}(\mathbf{y}) = (1/2) \, q_{\mathbf{x}, -1}(\mathbf{y}) + (1/2) \, q_{\mathbf{x}, +1}(\mathbf{y})$ for all $\mathbf{x}$ and $\mathbf{y} \in \neigh(\bx)$ and that all other proposal distributions in RJ are the same as in \autoref{algo_NRJ} during model switches. In this case, $\alpha_{\text{RJ}}=\alpha_{\text{NRJ}}$ and it is considered that to go from $\bx$ to $\by$, $q_{\mathbf{x}, \nu}$ is chosen (this happens with probability $1/2$) and, in the reverse move, $q_{\mathbf{y}, -\nu}$ is chosen (which also happens with probability $1/2$).

  We now analyse the first sum on the RHS in \eqref{eqn_inv_1},
  \[
   \sum_{\mathbf{x}, \nu} \, (1 - \tau) \int \pi(\mathbf{x}, \boldsymbol\theta_\mathbf{x}) \times (1 / 2) \left(\int_{A} P((\mathbf{x}, \boldsymbol\theta_\mathbf{x}, \nu), (\mathbf{y}, \mathrm{d}\boldsymbol\theta_\mathbf{y}', \nu') \mid S) \right) \, \mathrm{d}\boldsymbol\theta_\mathbf{x}.
  \]
  First, consider that $\by \in \neigh_\nu(\bx)$, i.e.\ the case of an accepted model switch, thus model $\mathbf{y}$ is reached from model $\mathbf{x} \neq \mathbf{y}$, coming from direction $\nu$ (with $\nu = \nu'$ because the move is accepted). Given the reversibility of RJ, the probability to go from model $\mathbf{x}$ with parameters in $B$ to model $\mathbf{y} \neq \mathbf{x}$ with parameters in $A$ is
  \begin{align}\label{eqn_reversibility_proof}
    &\int_B \pi(\mathbf{x}, \boldsymbol\theta_\mathbf{x}) \left(\int_A P_{\text{RJ}}((\mathbf{x}, \boldsymbol\theta_\mathbf{x}), (\mathbf{y}, \mathrm{d}\boldsymbol\theta_\mathbf{y}')) \right) \, \mathrm{d}\boldsymbol\theta_\mathbf{x} = \int_A \pi(\mathbf{y}, \boldsymbol\theta_\mathbf{y}') \left(\int_B P_{\text{RJ}}((\mathbf{y}, \boldsymbol\theta_\mathbf{y}'), (\mathbf{x}, \mathrm{d}\boldsymbol\theta_\mathbf{x}))\right) \, \mathrm{d}\boldsymbol\theta_\mathbf{y}',
  \end{align}   	
  where $P_{\text{RJ}}$ is the transition kernel of the RJ. Note that
\[
P_{\text{RJ}}((\mathbf{x}, \boldsymbol\theta_\mathbf{x}), (\mathbf{y}, \mathrm{d}\boldsymbol\theta_\mathbf{y}')) = (1/2) \, (1 - \tau) \, P((\mathbf{x}, \boldsymbol\theta_\mathbf{x}, \nu'), (\mathbf{y}, \mathrm{d}\boldsymbol\theta_\mathbf{y}', \nu') \mid S),
\]
  given that the difference between both kernels is that in RJ, it is randomly decided to use $q_{\mathbf{x}, \nu}$; there is thus an additional probability factor of $1/2$. Analogously, we have that $P_{\text{RJ}}((\mathbf{y}, \boldsymbol\theta_\mathbf{y}'), (\mathbf{x}, \mathrm{d}\boldsymbol\theta_\mathbf{x}))= (1/2) \, (1 - \tau) \, P((\mathbf{y}, \boldsymbol\theta_\mathbf{y}', -\nu'), (\mathbf{x}, \mathrm{d}\boldsymbol\theta_\mathbf{x}, -\nu')\mid S)$. Using this and taking $B$ equals the whole parameter (and auxiliary) space in \eqref{eqn_reversibility_proof}, we have
\begin{align}\label{eqn2_reversibility_proof}
 &(1 - \tau) \int \pi(\mathbf{x}, \boldsymbol\theta_\mathbf{x}) \times (1/2) \left(\int_A P((\mathbf{x}, \boldsymbol\theta_\mathbf{x}, \nu'), (\mathbf{y}, \mathrm{d}\boldsymbol\theta_\mathbf{y}', \nu') \mid S) \right) \, \mathrm{d} \boldsymbol\theta_\mathbf{x} \cr
 &\qquad = (1 - \tau) \int_A \pi(\mathbf{y}, \boldsymbol\theta_\mathbf{y}') \times (1/2) \left(\int  P((\mathbf{y}, \boldsymbol\theta_\mathbf{y}', -\nu'), (\mathbf{x}, \mathrm{d}\boldsymbol\theta_\mathbf{x}, -\nu')\mid S) \right) \, \mathrm{d}\boldsymbol\theta_\mathbf{y}'.
\end{align}

Now, consider that $\by = \bx$, i.e.\ a rejected model switch so model $\mathbf{y}$ is reached from model $\mathbf{y}$ and the direction is such that $-\nu = \nu'$. The probability of the transition is
\[
 (1 - \tau) \int_A \pi(\mathbf{y}, \boldsymbol\theta_\mathbf{y}') \times (1/2) \left(1 - \sum_{\mathbf{x} \in \neigh_{-\nu'}(\by)}\int  P((\mathbf{y}, \boldsymbol\theta_\mathbf{y}', -\nu'), (\mathbf{x}, \mathrm{d}\boldsymbol\theta_\mathbf{x}, -\nu')\mid S) \right) \, \mathrm{d}\boldsymbol\theta_\mathbf{y}'.
\]
So, the total probability of reaching $\mathbf{y}, \boldsymbol\theta_\mathbf{y}'\in A, \nu'$ through a model switch is (recalling \eqref{eqn_inv_1}):
\begin{align*}
 &\sum_{\mathbf{x}, \nu} \, (1 - \tau) \int \pi(\mathbf{x}, \boldsymbol\theta_\mathbf{x}) \times (1 / 2) \left(\int_{A} P((\mathbf{x}, \boldsymbol\theta_\mathbf{x}, \nu), (\mathbf{y}, \mathrm{d}\boldsymbol\theta_\mathbf{y}', \nu') \mid S)\right) \, \mathrm{d}\boldsymbol\theta_\mathbf{x} \cr
 &\quad=\sum_{\mathbf{x}: \by \in \neigh_{\nu'}(\bx)} (1 - \tau) \int \pi(\mathbf{x}, \boldsymbol\theta_\mathbf{x}) \times (1/2) \left(\int_A P((\mathbf{x}, \boldsymbol\theta_\mathbf{x}, \nu'), (\mathbf{y}, \mathrm{d}\boldsymbol\theta_\mathbf{y}', \nu') \mid S) \right) \, \mathrm{d}\boldsymbol\theta_\mathbf{x} \cr
 &\qquad + (1 - \tau) \int_A \pi(\mathbf{y}, \boldsymbol\theta_\mathbf{y}') \times (1/2) \left(1 - \sum_{\mathbf{x} \in \neigh_{-\nu'}(\by)}\int  P((\mathbf{y}, \boldsymbol\theta_\mathbf{y}', -\nu'), (\mathbf{x}, \mathrm{d}\boldsymbol\theta_\mathbf{x}, -\nu')\mid S) \right) \, \mathrm{d}\boldsymbol\theta_\mathbf{y}' \cr
 &\quad = \sum_{\mathbf{x} \in \neigh_{-\nu'}(\by)}  (1 - \tau) \int_A \pi(\mathbf{y}, \boldsymbol\theta_\mathbf{y}') \times (1/2) \left(\int  P((\mathbf{y}, \boldsymbol\theta_\mathbf{y}', -\nu'), (\mathbf{x}, \mathrm{d}\boldsymbol\theta_\mathbf{x}, -\nu')\mid S) \right) \, \mathrm{d}\boldsymbol\theta_\mathbf{y}' \cr
 &\qquad + (1 - \tau) \int_A \pi(\mathbf{y}, \boldsymbol\theta_\mathbf{y}') \times (1/2) \left(1 - \sum_{\mathbf{x} \in \neigh_{-\nu'}(\by)}\int  P((\mathbf{y}, \boldsymbol\theta_\mathbf{y}', -\nu'), (\mathbf{x}, \mathrm{d}\boldsymbol\theta_\mathbf{x}, -\nu')\mid S) \right) \, \mathrm{d}\boldsymbol\theta_\mathbf{y}' \cr
 &\quad =  (1 - \tau) \int_A \pi(\mathbf{y}, \boldsymbol\theta_\mathbf{y}') \times (1/2) \, \mathrm{d}\boldsymbol\theta_\mathbf{y}',
\end{align*}
using \eqref{eqn2_reversibility_proof} and that if $\bx$ allows to reach $\by$ using the direction $\nu'$, then $\bx \in \neigh_{-\nu'}(\by)$. Combining this result with \eqref{eqn_conclusion_up} allows to conclude the proof.
\end{proof}

\section{Supplementary material}\label{sec:supp_mat}

We present in \autoref{example:1} a model such that \eqref{eq:ideal_unif} is satisfied. We next provide an example (\autoref{ex:path}) to illustrate how a careful application of \autoref{thm:2} can allow to conclude that the lifted Markov chain is more efficient than the MH one in certain situations, provided that $n$ is sufficiently large.

\begin{Example}
\label{example:1}
 Let $\pi_n$ be such that
\begin{equation}\label{eq:example1_1}
\pi_n\left\{1<i<j<n\,:\,\inf_{i\leq k\leq j}x_k=1\;,\; \sup_{k\not\in\{i,\ldots,j\}}x_k=-1\;\right\}=1\,.
\end{equation}
By construction, a random variable $\bX\sim \pi_n$ consists of a series of (at least one) $-1$ component(s) followed by a series of (at least two) $+1$ component(s) and then a series of (at least one) $-1$ component(s), $\pi_n$-almost surely. For $i\in\{1,\ldots,n\}$, let  $R_i:\Xset_n\to \Xset_n$  be the operator that flips the $i$-th coordinate, formally defined as $R_i(\bx)=\bx-2x_i\boldsymbol{\delta}_i$, where $\boldsymbol{\delta}_i$ is the Kronecker symbol, i.e. the vector of $\{0,1\}^n$ that has $1$ at entry $i$ and $0$ elsewhere. For $\bx\in\Xset_n$ such that $\{i,j\}$ are as in \eqref{eq:example1_1}, define $\neigh(\bx)$ as $
\neigh(\bx)=\{R_{i-1}(\bx),R_i(\bx),R_{j}(\bx),R_{j+1}(\bx)\}$. By definition, the neighbourhood of $\bx\in\Xset_n$ is made of states obtained by extending or shortening the series of $+1$ components of $\bx_n$.  To split $\neigh(\bx)$ into two directional neighbourhoods, the partial ordering on $\Xset$ is defined through the set
$$
\R=\left\{(\bx,\by)\in\Xset^2\,:\,\inf_{1\leq i\leq n}(y_i-x_i)\geq 0\right\}\,.
$$
Given this partial ordering, $\neigh(\bx)$ is split into $\neigh_{+1}(\bx)=\{R_{i-1}(\bx),R_{j+1}(\bx)\}$ and $\neigh_{-1}(\bx)=\{R_{i}(\bx),$ $R_{j}(\bx)\}$, where $\{i,j\}$ are as in \eqref{eq:example1_1}. Clearly for any $\pi_n$ which satisfies \eqref{eq:example1_1}, we have for $\pi_n$-almost all $\bx\in\Xset_n$, $|\neigh_{+1}(\bx)|=|\neigh_{-1}(\bx)|=2=|\neigh(\bx)|/2$ and we are in the context of \autoref{cor_domination_unif}. A specific distribution $\pi_n$ which verifies  \eqref{eq:example1_1} is defined as follows: let $I:=\inf\{i\,:\,\bX_i=1\}$ and $L:=\sum_{k=1}^n\1_{\bX_k=1}$ follow a truncated geometric distribution with parameters $\lambda_1\in(0,1)$ and $\lambda_2\in(0,1)$ respectively such that $\pi_n\{1<I<n-1\;\cap\;1<L<n-2\;\cap\;L+I<n\}=1$. Since everything is tractable in this example, asymptotic variances $\vara(f,P)$ can be calculated exactly for a given Markov kernel $P$ and a test function $f$. The right panel of \autoref{fig:ex1} shows the ratio of asymptotic variances $\vara(f,P_{\MH,n})/\vara(f,P_{\rho,n})$ for three different functions $f$. Here, the simplest switching rate function was used $\rho\equiv \rho_{\nu}^w$, i.e. $\rho_\nu(\bx)=1-T_\nu(\bx,\Xset_n)$. As anticipated by \autoref{cor_domination_unif}, these ratios are always larger than one. However, this experiment shows that they can indeed be \textit{much} larger than one and increase with $n$, hence justifying the lifted approach. Intuitively, the mild variations of $\pi_n$ over neighbouring states (see left panel of \autoref{fig:ex1}) explain why the lifted Markov chain outperforms significantly the MH one in this example: the persistent nature of the lifted chain increases (or decreases) consistently the length of the $+1$ series until an unlikely rejection occurs (since $a_\nu(\bx,\by)\approx 1$, $\by\in\neigh_\nu(\bx)$) or that the boundary of the support is reached.
\begin{figure}
\includegraphics[width=0.49\textwidth]{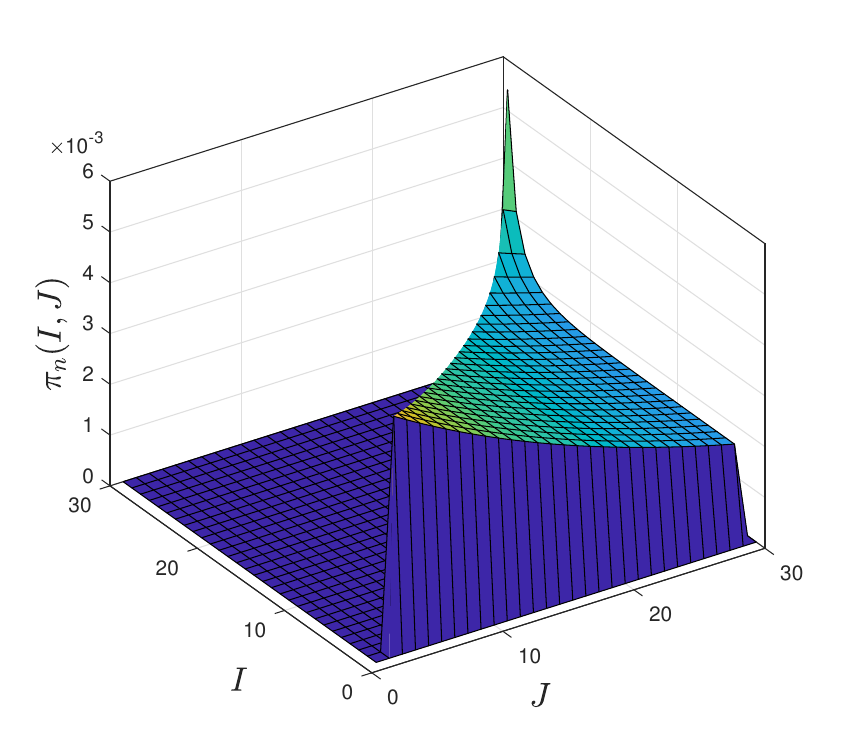}\includegraphics[width=0.49\textwidth]{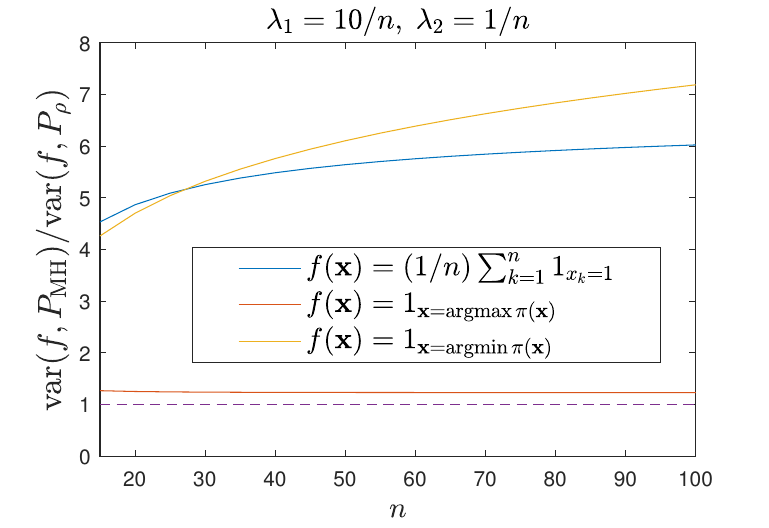}
\caption{Left: illustration of the distribution when $(I,J=I+L)$ with $n=30$, $\lambda_1=1/3$ and $\lambda_2=1/30$. Right: ratio of the asymptotic variances $\vara(f,P_{\MH,n})/\vara(f,P_{\rho,n})$ for three different functions $f$. As established in \autoref{cor_domination_unif}, we always have $\vara(f,P_{\MH,n})/\vara(f,P_{\rho,n})\geq1$ and for some functions, in addition of being several times larger than 1, that ratio increases significantly with $n$.\label{fig:ex1}}
\end{figure}
\end{Example}

\begin{Example}
\label{ex:path}
Let $\Xset_n=\{-1,+1\}^{n}$. In this example, in addition to the state space dimension, $n\in\nset$ also characterizes the geometrical features of $\pi_n$:  as $n\to\infty$ more and more probability mass is put on a structure that can be seen as a path $\mathfrak{P}_n$ within $\Xset_n$ defined as
$$
\mathfrak{P}_n:=\left\{\bx\in\Xset_n\;:\;\sup_{i\leq j}x_i=-1\,,\;\inf_{i> j}x_i=+1\,,\quad \text{for some}\,j\in\{1,\ldots,n-1\}\right\}\cup\{-1\}^{n}\cup\{+1\}^{n}\,.
$$
The states belonging to the path are denoted $\{\bx_i\}_{i=1}^{n+1}$ such that $\{\bx_{1}\prec \bx_{2}\prec \cdots\prec \bx_{n+1}\}$. Therefore, $\bx_1=\{-1\}^n$, $\bx_2=\{-1,\ldots,-1,+1\}$, $\bx_3=\{-1,\ldots,-1,+1,+1\}$ and so on. Moreover, we define the subset $\mathfrak{P}_n^{\circ}=\mathfrak{P}_n\backslash\{\{-1\}^n\cup\{+1\}^n\}$. In order to set up a context that resemble applications, the natural neighborhood structure defined as $\neigh(\bx)=\{\by\in\Xset_n\,:\,\sum_{i=1}^n|x_i-y_i|=1\}$ is slightly modified as follows:
\begin{itemize}
  \item if $\bx\not\in\mathfrak{P}_n$, $\tneigh(\bx)=\neigh(\bx)\backslash\mathfrak{P}_n^{\circ}$ and  if $\bx\in\mathfrak{P}_n^{\circ}$, $\tneigh(\bx)=\neigh(\bx)\cap \mathfrak{P}_n^{\circ}$,
  \item if $\bx\in\{-1\}^n\cup\{+1\}^n$, $\tneigh(\bx)=\neigh(\bx)\cup\{-\bx\}$.
\end{itemize}
The first modification is designed so as to account that in many applications,  the part of the state space on which $\pi_n$ concentrates has a certain depth (it is not possible to exit this subset in a one-step transition for the vast majority of states in that subset),  such as in the Ising model of \autoref{sec:Ising}. The second modification provides the state space with a torus-like feature, since the extreme states $\bx_{1}$ and $\bx_{n}$ are neighbors. The neighborhood structure $\tneigh$ induces the following mapping on $\Xset_n\times\Xset_n$:
\begin{multline*}
d_n(\bx,\by)=
\inf\{i\in\nset\,:\,\exists\{\bz_1,\ldots,\bz_{i-1}\}\in\Xset_n^{i-1} \quad \text{such that}\\
  \bigcup_{\nu\in\{-1,+1\}}\{\bz_1 \in\tneigh_\nu(\bx)\}\cap_{k=1}^{i-2} \bz_{k}\in\tneigh_\nu(\bz_{k+1}) \cap \{\bz_{i-1} \in\tneigh_\nu(\by)\}\}\,,
\end{multline*}
which defines a distance on $\Xset_n$.  This distance is the smallest number of transitions in the same direction required to go from $\bx$ to $\by$, meaning that  $\{\bx \succ \bz_1\succ\cdots\succ \bz_{d_n(\bx,\by)-1}\succ \by\}$ or $\{\bx \prec \bz_1\prec\cdots\prec \bz_{d_n(\bx,\by)-1}\prec \by\}$. In the following, this distance is used to measure the distance from an arbitrary state:
\begin{itemize}
\item to the center of the path,
$$
\bx\in\Xset_n\,,\qquad \mathfrak{d}_n(\bx)=d_n(\bx,\bx_{n/2+1})\1_{\{n\,\text{is even}\}}+(d_n(\bx,\bx_{(n+1)/2})\wedge d_n(\bx,\bx_{(n+3)/2})\1_{\{n\,\text{is odd}\}}\,,
$$
\item to the  path,
$$
\bx\in\Xset_n\,,\qquad \mathsf{d}_n(\bx)=\left[d_n(\bx,\{-1\}^n)\wedge d_n(\bx,\{+1\}^n)\right]\1_{\{\bx\not\in\mathfrak{P}_n\}}\,.
$$
\end{itemize}
This allows to define the distribution $\pi_n$, parameterized by $a\in(2/3,1)$, as
\begin{equation}\label{eq:model_suppmat}
\bx\in\Xset_n\,,\qquad \pi_{n}(\bx)\propto \left(\frac{1}{3}\right)^{n \mathsf{d}_n(\bx)} a^{\mathfrak{d}_n(\bx)}\,.
\end{equation}
Intuitively, the mode of $\pi_n$ is at the centre of the path, the mass decays geometrically along the path and beyond the path, the mass is further shrunk in the tails.
\end{Example}

Consider the problem of sampling from $\pi_n$ of \autoref{ex:path} using the locally-balanced version of either the lifted or the MH algorithms. A careful application of \autoref{thm:2} shows that, provided that $n$ is sufficiently large, the lifted Markov chain is more efficient than the MH one for, at least, a certain function of interest, namely $f_n(\bx)=\sum_{i=1}^n x_i/n$.

We first show that $\mathfrak{P}_n$ can be used as a control subset, thus setting $\tXset_n =\mathfrak{P}_n$. By construction, the interior and boundary of $\tXset_n$ are defined as
$$
\itXset_n=\mathfrak{P}_n\backslash\{\bx_1\cup \bx_{n+1}\}\,,\qquad
\partial\tXset_n=\bx_1\cup \bx_{n+1}\,.
$$
When $n\to\infty$, $\pi_n$ concentrates on $\mathfrak{P}_n$ exponentially fast. In particular, it can be checked that for all $r<1/2$,
\begin{equation}\label{eq:rate_path_concentrate}
\lim_{n\to\infty}\left(\frac{1}{a^{r}}\right)^n\left(1-\pi_n(\itXset_n)\right)= 0\,.
\end{equation}
Indeed, considering that $n$ is  odd (a similar derivation holds if $n$ is even), denoting by $\bar{\pi}_n$ the unnormalized probability, we note that since $0<\bar\pi_n(\Xset_n\backslash \mathfrak{P}_n)\leq 2^n a^{(n+1)/2}/3^n$ and $\bar{\pi}_n(\mathfrak{P}_n)=2(1+a+\cdots+a^{(n-1)/2})$
\begin{multline*}
1-\pi_n(\itXset_n)=\frac{\bar{\pi}_n(\Xset_n\backslash \mathfrak{P}_n)+\bar{\pi}_n(\partial\tXset_n)}{\bar{\pi}_n(\Xset_n\backslash \mathfrak{P}_n)+\bar{\pi}_n(\mathfrak{P}_n)}
\leq \frac{2^n a^{(n+1)/2}/3^n+2a^{(n-1)/2}}{2(1-a^{(n+1)/2})/(1-a)}%
\leq \left(a^{1/2}\right)^n\frac{(1-a)}{\sqrt{a}(1-a^{n/2})}\,.
\end{multline*}
Recall that the locally-balanced function $g$ used in the locally-balanced proposal presented at \autoref{sec:zanella} is such that $g(x) = x/(1+x)$ and note that, by direct calculation, we obtain that for each $(\bx,\by)\in\mathfrak{P}_n$, with $\bx\neq \by$ and any $n\in\nset$,
$$
\frac{P_{\rev,n}(\bx,\by)}{P_{\MH,n}(\bx,\by)}\in\left\{\frac{1+a}{2},\frac{a/2}{1+a\kappa_n}\left(\frac{1}{2}+\frac{3}{2a}+\kappa_n\right),\frac{1}{2}+\frac{1}{1+a}\left(1+\kappa_n/a)\right)\right\}\,,\quad \kappa_n=(n-1)a/(a+3^n)\,.
$$
It can be checked that for $n$ sufficiently large (in fact for those $n$ satisfying $n  3^{-n}\leq (1-a)/(2a^3)$), we have
\begin{equation}
\label{eq:suppmat0}
P_{\rev,n}(\bx,\by)\geq \frac{1+a}{2} P_{\MH,n}(\bx,\by)\,,\qquad(\bx,\by)\in\tXset_n^2,\,\bx\neq \by\,.
\end{equation}
Equation \eqref{eq:suppmat0} shows that \autoref{ass:1} of \autoref{thm:2} holds with $\omega(n)=(1+a)/2$, for all $n\in\nset$.

Based  on numerical results carried out on a computer (see Figure \ref{fig:gaps}), the spectral gap of $P_{\rev,n}$, $\tP_{\rev,n}$, $P_{\MH,n}$ and $\tP_{\MH,n}$ are surmised to vanish at a quadratic rate, that is $\underline{\lambda}(n)\sim \lambda/n^2$. Notice that the result for $\tP_{\MH,n}$ is in line with some well known results in \cite{diaconis2000analysis} on polynomially mixing Markov chains, see also \cite{diaconis2013some}. Assuming that this result holds, we have that for any $\delta>0$ and $\gamma<\bar{\delta}=\delta/(2+\delta)$, $1-\pi_n(\itXset_n)=o(\underline{\lambda}(n)^{3/(\bar{\delta}-\gamma)})$, that is the assumption \eqref{eq:ass2_thm3} of \autoref{thm:2} holds for any choice of $\delta>0$ and $\gamma<\bar{\delta}$.

\begin{figure}
\centering
\includegraphics[scale=1]{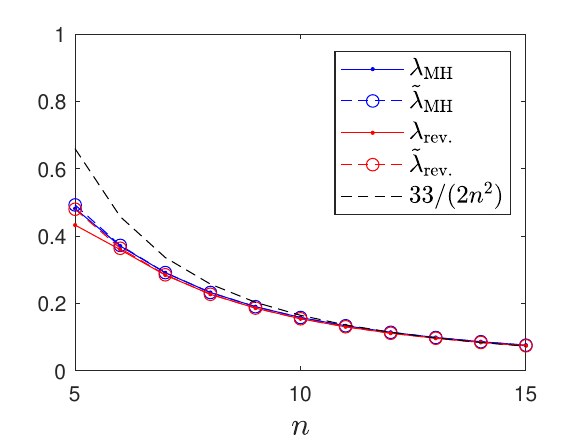}
\caption{Illustration of the spectral gap of $P_{\MH,n}, P_{\rev,n}$ and their restricted version  obtained from a calculation carried out on a computer for $n\in\{5,15\}$. Beyond the dimension $n=15$, calculation of the transition matrices failed. \label{fig:gaps} }
\end{figure}

Consider the function ${f}_n(\bx)=(1/n)\sum_{i=1}^n x_i$. One can check that, because $f_n$ is odd and the mass function $\pi_n$ is even, $\pi_n{f}_n=0$. 
Moreover, noting that $\|f_n\|_{\pi_n,2}>0$ and $\|f_n\|_{\pi_n,2}\neq 1$,  ${f}_n$ needs to be  normalized to remain in the framework of \autoref{thm:2}. We thus seek to compare $\vara(P_{\MH,n},\bar{f}_n)$ and its lifted counterpart $\vara(P_{\rho,n},\bar{f}_n)$. The last check to apply \autoref{thm:2} is that $\|\bar{f}_n\|_{\pi_n,2+\delta}$ does not grow to fast relatively to $1/(1-\pi_n(\itXset))$. We note that because $|f_n|\in[0,1]$,
\begin{equation}
\label{eq:ex22}
\|\bar{f}_n\|_{\pi_n,2+\delta}\leq \frac{1}{\|{f}_n\|_{\pi_n,2}}\,.
\end{equation}
We conclude using the following rough, but not too rough, lower bound on $\|f_n\|_{\pi_n,2}$. Consider $n$ even, then
\begin{equation*}
  \|f_n\|_{\pi_n,2+\delta}^2\geq \sum_{\bx\in\tXset} f_n(\bx)^2\pi_n(\bx)=\frac{8}{n^2 Z_n}W_n\,,\qquad W_n:=\sum_{k=1}^{n/2}k^2a^k\,,
\end{equation*}
where $Z_n$ is the normalizing constant in the definition of $\pi_n$ given at Eq. \eqref{eq:model_suppmat}. We have that for any $\upsilon>0$,
$$
 \frac{1}{n^{1+\upsilon}}\|\bar{f}_n\|_{\pi_n,2+\delta}\leq \frac{1}{n^{1+\upsilon}\|{f}_n\|_{\pi_n,2}}\leq \frac{1}{n^\upsilon}\frac{1}{2\sqrt{2}}\left[\frac{Z_n}{W_n}\right]^{1/2}\,.
$$
A similar derivation holds whenever $n$ is odd. Now, noting that
$$
Z_n\geq 2\left[1+a+a^2+\cdots+a^{n/2}\right]=2\frac{1-a^{n/2+1}}{1-a}\sim 2\frac{1}{1-a}\,,
$$
and that the series $W_n$ converges to $a(a+1)/(1-a)^3$, we have that $\|\bar{f}_n\|_{\pi_n,2+\delta}=o(n^{1+\upsilon})$.  Together with \eqref{eq:rate_path_concentrate}  show that $\{\bar{f}_n\}$ verifies  \eqref{eq:ass1_thm3} of \autoref{thm:2}. Therefore, for any $\epsilon>0$, we have for a sufficiently large $n$ that, using $\bar{\omega}=(1+a)/2$ (see Eq. \eqref{eq:suppmat0})
$$
\vara(\bar{f}_n, P_{\rev,n}) \leq \frac{1}{(a+1)/2-\eps}\vara(\bar{f}_n, P_{\MH,n}) +\frac{1}{2}\left(\frac{1}{(a+1)/2-\eps} + \frac{1}{(a+1)/2}\right) - 1 + \eps/2\,.
$$
For instance, taking $\eps=(1-a) / 2$, for $n$ sufficiently large,
\begin{equation}\label{eq:suppmat1}
\vara(\bar{f}_n,P_{\rho,n})\leq \frac{1}{a}\vara(\bar{f}_n, P_{\MH,n}) +\frac{1}{2}\left(\frac{1}{a} + \frac{1}{(a+1)/2}\right) - 1 + \frac{1-a}{4}\,.
\end{equation}
It is remarkable that such a result can be obtained since, for \textit{all} $n\in\nset$, it does not hold that $P_{\rev,n}(\bx,\by)-P_{\MH,n}(\bx,\by)\geq 0$ for all $(\bx,\by)\in\Xset^2$, $\bx\neq \by$. Hence, $P_{\rev,n}$ does not dominate $P_{\MH,n}$ in the usual Peskun sense and the result of \cite{andrieu2019peskun} does not allow to compare $\vara(\bar{f}_n,P_{\rho,n})$ and $\vara(\bar{f}_n, P_{\MH,n})$. Indeed, one can check that taking $\bz\in\partial\tXset$ and $\bx\not\in\tXset$ such that $\bz\in\neigh_\nu(\bx)$, we have that
$$
P_{\MH,n}(\bz,\bx)=\frac{g(3^n/a)}{1/2+g(1/a)+(n-a)g(a/3^n)}\,,\qquad P_{\rev,n}(\bz,\bx)=1/2\,,
$$
and, for all $n\in\nset$,
$$
\frac{P_{\MH,n}(\bz,\bx)}{P_{\rev,n}(\bz,\bx)}\geq \frac{(1+a)3^n}{1+n+3^n}>1\,.
$$
This means that, if, hypothetically, one were able to establish a quantitative Peskun ordering between $P_{\rev,n}$ and $P_{\MH,n}$, they would obtain something like $P_{\rev,n}(\bx,\by)\geq \omega(n)P_{\MH,n}(\bx,\by)$ for each $(\bx,\by)\in\Xset^2$ with $\omega(n)\leq   ({3^{-n}+n3^{-n}+1})/(1+a)$ which decreases with $n$ to $1/(1+a)$. Assuming the best case scenario with $\omega(n)=1/(1+a)$, the comparison between the lifted and MH asymptotic variances would then be
\begin{equation}
\label{eq:suppmat2}
\vara(\bar{f}_n,P_\rho)\leq (1+a)\vara(\bar{f}_n, P_{\MH,n}) +a\,,\qquad \text{for all }n\in\nset\,.
\end{equation}
Taking the constant $a$ arbitrarily close to one, one can compare (for a large enough $n$) the difference in tightness offered by the two bounds of Eqs. \eqref{eq:suppmat1} and \eqref{eq:suppmat2}. Thus, not only the weaker Peskun ordering introduced in this paper allows one to establish an ordering with greater ease, but the asymptotic variances inequality can also be much tighter.

\end{document}